\providecommand{\U}[1]{\protect\rule{.1in}{.1in}}
\newcommand{\Lie}{\mathop{\rm Lie}}
\newcommand{\Hom}{\mathop{\rm Hom}}
\newcommand{\Ad}{\mathop{\rm Ad}}
\newtheorem{theorem}{Theorem}
\newtheorem{lemma}[theorem]{Lemma}
\newtheorem{proposition}[theorem]{Proposition}
\newtheorem{remark}[theorem]{Remark}
\newenvironment{proof}[1][Proof]{\noindent\textbf{#1.} }{\ \rule{0.5em}{0.5em}}
\begin{document}
\date{}

\title{Action of the conformal group on steady state solutions to Maxwell's
equations and background radiation}
\author{Bertram Kostant and Nolan Wallach}
\maketitle

\begin{abstract}
The representation of the conformal group ($PSU(2,2)$) on the space of solutions to Maxwell's equations on the conformal compactification of 
Minkowski space is shown to break up into four irreducible unitarizable smooth Fr\'echet representations of moderate growth. An explicit
inner product is defined on each representation. The frequency spectrum of each of these representations is analyzed. These representations have notable properties; in particular they have positive or negative energy, they are of type  $A_{\frak q}(\lambda)$ and are quaternionic. Physical implications of the results are explained.
\end{abstract}

\section{Introduction}

\let\thefootnote\relax\footnotetext{Key words and phrases: Maxwell's equations, conformal
compactification, conformal group, unitary representation}
\let\thefootnote\relax\footnotetext{MSC 2010 : 78A25, 58Z05, 22E70, 22E45.}
\let\thefootnote\relax\footnotetext{Research partially supported by NSF grant DMS 0963035.}
The purpose of this
paper is to analyze the steady state solutions of Maxwell's equations in a
vacuum using the tools of representation theory. By steady state we mean those
solutions that extend to the conformal compactification of Minkowski space.
That is, we look upon the solutions of Maxwell's equations as tensor valued on

$\mathbb{R}^{4}$ with the flat Lorentzian metric given by
\[
-dx_{1}^{2}-dx_{2}^{2}-dx_{3}^{2}+dt^{2},
\]
where $x_{1},x_{2},x_{3}$ and $x_{4}=t$ yield the standard coordinates of $%
\mathbb{R}^{4}$. $
\mathbb{R}
^{4}$ with this metric will be denoted by 
$\mathbb{R}^{1,3}$. The conformal compactification is the space $S^{3}\times_{\{\pm1
\}}S^{1}$ (modulo the product action of $\pm1$) with (up to a positive scalar
multiple on both factors) the product metric with the negative of the constant
curvature 1 metric on $S^{3}$ and the usual metric on $S^{1}$. The injection,
$f$, of $
\mathbb{R}^{4}$ into $S^{3}\times_{\{\pm1\}}S^{1}$ is the inverse of a variant of
stereographic projection (see Section 2). This embedding is not an isometry,
but it is conformal. Our approach to Maxwell's equations uses the equivalent
formulation in terms of differential $2$-forms on Minkowski space. More
generally, if $(M,g)$ is an oriented Lorentzian four manifold (signature
$(-,-,-,+)$) and if \textquotedblleft\ $\ast$\ \textquotedblright\ denotes the
Hodge star operator on $2$-forms relative to the volume form, $\gamma$, with
$g(\gamma,\gamma)=-1$, then there is a version of Maxwell's equations on
$\Omega^{2}(M)$ (differential two forms) given by%
\[
d\omega=d\ast\omega=0.
\]
Let $M$ and $N$ be four-dimensional Lorentzian manifolds, let
$F:M\rightarrow N$ be a conformal transformation, and let $\omega$ be a
solution to Maxwell's equations on $N$. Then $F^{\ast}\omega$ is a solution to Maxwell's equations on $M$. Since $f$ is conformal we see that the pullback of
solutions to Maxwell's equations on $S^{3}\times_{\{\pm1\}}S^{1}$ yields
solutions to the usual Maxwell equations.

The group of conformal transformations of $S^{3}\times_{\{\pm1\}}S^{1}$ is
locally the group $SO(4,2)$ and thus the solutions to Maxwell's equations on
$S^{3}\times_{\{\pm1\}}S^{1}$ form a representation of this group. We can
interpret this as follows: We first note that we can replace $S^{3}%
\times_{\{\pm1\}}S^{1}$ with the group $U(2)$. If on $\Lie(U(2))$ we put the
Lorentzian form that corresponds to the quadratic form $-\det X$, then the
corresponding bi-invariant metric on a $U(2)$ is isometric, up to positive
scalar multiple, with $S^{3}\times_{\{\pm1\}}S^{1}$. We interpret this space
as the Shilov boundary of the Hermitian symmetric space that corresponds to
$G=SU(2,2)$ (which is locally isomorphic with $SO(4,2)$).

We denote by Maxw the space of solutions in $\Omega^{2}(U(2))$ to Maxwell's
equations. We show that there is a canonical nondegenerate $G$-invariant
Hermitian form on Maxw. Further, we show that as a smooth Fr\'{e}chet
representation of $G$, Maxw splits into the direct sum of four irreducible
(Fr\'{e}chet) representations (of moderate growth) that are mutually
orthogonal relative to the form. This form is positive definite on two of the
irreducible pieces and negative definite on the other two. Since the
$K=S(U(2)\times U(2))$ isotypic components of Maxw are all finite dimensional
we see that this yields four unitary irreducible representations of $G$. Two
of the representations are holomorphic (negative energy in the physics
literature) and two are anti-holomorphic (positive energy). We also describe
them in terms of the $A_{\mathfrak{q}}(\lambda)$ that yield second continuous
cohomology (the four theta stable parabolics $\mathfrak{q}$ involved relate
these representations to twister theory) and in terms of quaternionic
representations ($SU(2,2)$ is the quaternionic real form of $SL(4,\mathbb{C})$). These representations are actually representations of $PSU(2,2)$. In this group there is a dual pair
$PSU(1,1),SO(3)$ establishing an analogue of Howe duality in each of the four representations. The
realization of these representations is intimately related to the work in [K].We use the decomposition
of the restriction of these representations to $PSU(1,1)$ to analyze the frequency distribution of the solutions
in each of the $PSU(2,2)$ representations.

We interpret the plane wave solutions as
generalized Whittaker vectors on Maxw and the solutions as wave packets of the
Whittaker vectors. These wave packets have constrained frequency spectrum and
using Planck's black body radiation law the frequency limitation and
luminosity of the corresponding radiation determines the temperature to narrow
constraints. This means that we can fit our solutions to the measured
background radiation on a steady state universe. We make no assertions as to
how such a steady state universe might physically exist. There are many
suggestions in the literature (e.g., the work of Hoyle et al [HBN]). All seem
complicated. However, we will content ourselves to the assertion that the big
bang is not necessarily the only possible interpretation of background radiation.

There is also an interpretation of the red shift that can be gleaned from this
work involving the relationship between the measurement of time from the
proposed \textquotedblleft big bang\textquotedblright\ and the steady state
\textquotedblleft time\textquotedblright\ which is periodic but with a large
period appearing to move faster as we look backwards or forwards in terms of
\textquotedblleft standard\textquotedblright\ time. (See [S].)

We are aware that many of the aspects of representation theory in this paper
could have been done in more generality. We have constrained our attention to
the four representations at hand since the main thrust of this paper is to show
how representation theory can be used to study well-known equations in physics.

Parts of this work should be considered  expository. Related work has
been done by [HSS] on the action of the conformal group on solutions of the
wave equation and [EW] relating positive energy representations to generalized Dirac equations.

\section{Conformal compactification of Minkowski space}

Let $
\mathbb{R}^{1,3}$ denote $\mathbb{R}^{4}$ with the pseudo-Riemannian (Lorentzian) structure given by
$(x,y)=-x_{1}y_{1}-x_{2}y_{2}-x_{3}y_{3}+x_{4}y_{4}$. Here $x_{i},i=1,2,3,4,$
are the standard coordinates on $\mathbb{R}^{4}$ and we identify the tangent space at every point with
$\mathbb{R}^{4}$. We can realize this space as the space of $2\times2$ Hermitian matrices
(a $4$-dimensional vector space over
$\mathbb{R}$), $V$, with the Lorentzian structure corresponding to the quadratic form
given by the determinant. Note that%
\[
\det\left[
\begin{array}
[c]{cc}%
x_{4}+x_{3} & x_{1}+ix_{2}\\
x_{1}-ix_{2} & x_{4}-x_{3}%
\end{array}
\right]  =(x,x).
\]
We can also realize the space in terms of skew Hermitian matrices
$\mathfrak{u}(2)=iV$ and noting that the form becomes $-\det$. With this
interpretation, and realizing that $\mathfrak{u}(2)=\Lie(U(2))$ we have an
induced Lorentzian structure, $\left\langle \ldots,\ldots\right\rangle $ on $U(2)$.
We also have a transitive action of $K=U(2)\times U(2)$ on $U(2)$ by right and
left translation%
\[
(g_{1},g_{2})u=g_{1}ug_{2}^{-1}.
\]
Since the isotropy group at $I$ is $M=\rm diag(U(2))=\{(g,g)|g\in U(2)\}.$We see
that $K$ acts by isometries on $U(2)$ with this structure.

We will now consider a much bigger group that acts. We first consider the
indefinite unitary group $G\cong U(2,2)$ given by the elements, $g\in M_{4}(
\mathbb{C})$ such that
\[
g\left[
\begin{array}
[c]{cc}%
0 & iI_{2}\\
-iI_{2} & 0
\end{array}
\right]  g^{\ast}=\left[
\begin{array}
[c]{cc}%
0 & iI_{2}\\
-iI_{2} & 0
\end{array}
\right]  .
\]
Here as usual $g^{\ast}$ means conjugate transpose. $G\cap U(4)$ is the group
of all matrices of the $2\times2$ block form%
\[
\left[
\begin{array}
[c]{cc}%
A & B\\
-B & A
\end{array}
\right]
\]
satisfying $AB^{\ast}=BA^{\ast}$ and $AA^{\ast}+BB^{\ast}=I$. These equations
are equivalent to the condition
\[
A+iB\in U(2).
\]
It is easy to see that the map
\[
\Psi:G\cap U(4)\rightarrow U(2)\times U(2)
\]
given by%
\[
\left[
\begin{array}
[c]{cc}%
A & B\\
-B & A
\end{array}
\right]  \mapsto\left(  A-iB,A+iB\right)
\]
defines a Lie group isomorphism. This leads to the action of $G\cap U(4)$ on
$U(2)$ given by%
\[
\left[
\begin{array}
[c]{cc}%
A & B\\
-B & A
\end{array}
\right]  \cdot x=(A-iB)x(A+iB)^{-1}.
\]
Note that the stabilizer of $I_{2}$ is the subgroup isomorphic with $M$ given
by the elements
\[
\left[
\begin{array}
[c]{cc}%
A & 0\\
0 & A
\end{array}
\right]  ,A\in U(2).
\]
We extend this to a map of $G$ to $U(2)$ given by%
\[
\Phi:\left[
\begin{array}
[c]{cc}%
A & B\\
C & D
\end{array}
\right]  \mapsto(A+iC)(A-iC)^{-1}.
\]
This makes sense since $A-iC$ is invertible if $\left[
\begin{array}
[c]{cc}%
A & B\\
C & D
\end{array}
\right]  \in G$. We consider the subgroup $P$ of $G$ that consists of the
matrices%
\[
\left[
\begin{array}
[c]{cc}%
g & gX\\
0 & (g^{\ast})^{-1}%
\end{array}
\right]
\]
with $g\in GL(2,
\mathbb{C})$ and $X\in H$ (in other words, $X^{\ast}=X).$ Then every element of $G$ can
be written in the form $kp$ with $k\in G\cap U(4)$ and $p\in P$. We note that
$\Phi(kp)=\Phi(k)=k\cdot I$. Now, $U(4)\cap K$ acts transitively on $G/P$ and
the stabilizer of the identity coset is the group $\left[
\begin{array}
[c]{cc}%
A & 0\\
0 & A
\end{array}
\right]  ,A\in U(2)$. We will identify $K$ with $U(4)\cap G$ (under $\Psi$)
and $M$ with the stabilizer of the identity. Thus $G/P=K/M$.

We consider the subgroup $\overline{N}$:%
\[
\left[
\begin{array}
[c]{cc}%
I & 0\\
Y & I
\end{array}
\right]  ,Y^{\ast}=Y.
\]
If we write%
\[
\left[
\begin{array}
[c]{cc}%
I & 0\\
Y & I
\end{array}
\right]  =kp
\]
with
\[
k=\left[
\begin{array}
[c]{cc}%
A & B\\
-B & A
\end{array}
\right]
\]
as above, then $A\in GL(2,\mathbb{C})$ and
\[
-BA^{-1}=Y.
\]
One can see that if we set%
\[
k(Y)=\left[
\begin{array}
[c]{cc}%
\frac{I}{\sqrt{I+Y^{2}}} & \frac{-Y}{\sqrt{I+Y^{2}}}\\
\frac{Y}{\sqrt{I+Y^{2}}} & \frac{I}{\sqrt{I+Y^{2}}}%
\end{array}
\right],
\]
then
\[
k(Y)P=\left[
\begin{array}
[c]{cc}%
I & 0\\
Y & I
\end{array}
\right]  P.
\]
This gives an embedding of $H$ into $U(2)$%
\[
Y\longmapsto(I+iY)(I-iY)^{-1},
\]
the Cayley transform. We next explain how this is related to the Cayley
transform in the sense of bounded symmetric domains.

We note that it is more usual to look upon $G$ (in its more usual incarnation)
as the group of all elements $g\in GL(4,\mathbb{C})$ such that%
\[
g\left[
\begin{array}
[c]{cc}%
I & 0\\
0 & -I
\end{array}
\right]  g^{\ast}=\left[
\begin{array}
[c]{cc}%
I & 0\\
0 & -I
\end{array}
\right]  .
\]
Let us set $G_{1}$ equal to this group. The relationship between the two
groups is given as follows. Set%
\[
L=\frac{1}{\sqrt{2}}\left[
\begin{array}
[c]{cc}%
I & iI\\
I & -iI
\end{array}
\right]
\]
(a unitary matrix) if
\[
\sigma(g)=LgL^{\ast}%
\]
then $\sigma$ defines an isomorphism of $G$ onto $G_{1}$. $G_{1}$ has an
action by linear fractional transformations on the bounded domain,
$\mathbf{D}$, given as the set of all $Z\in M_{2}(\mathbb{C})$
 such that $ZZ^{\ast}<I$ (here $<$ is the order defined by the cone of
positive definite Hermitian matrices. If $g=\left[
\begin{array}
[c]{cc}%
A & B\\
C & D
\end{array}
\right]  \in G_{1}$, then%
\[
g\cdot Z=(AZ+B)(CZ+D)^{-1}.
\]
We note that if $Y^{\ast}=Y$ then
\[
\sigma\left(  \left[
\begin{array}
[c]{cc}%
I & 0\\
Y & I
\end{array}
\right]  \right)  \cdot I=(I+iY)(I-iY)^{-1}%
\]
and%
\[
\sigma\left(  \left[
\begin{array}
[c]{cc}%
I & Y\\
0 & I
\end{array}
\right]  \right)  \cdot I=I.
\]
The embedding $F$ of Minkowski space 
$\mathbb{R}^{1,3}$ into $U(2)$ given by%
\[
(x_{1},x_{2},x_{3},x_{4})\mapsto\left[
\begin{array}
[c]{cc}%
x_{4}+x_{3} & x_{1}+ix_{2}\\
x_{1}-ix_{2} & x_{4}-x_{3}%
\end{array}
\right]  =X\longmapsto(I+iX)(I-iX)^{-1}%
\]
embeds it as a dense open subset. However, it is only a conformal embedding. Indeed

\begin{lemma}
The embedding $F$ is conformal with
\[
(F^{\ast}\left\langle \ldots,\ldots\right\rangle )_{x}=4\left(  1+2\sum x_{i}%
^{2}+(x,x)^{2}\right)  ^{-1}(\ldots,\ldots)_{x}.
\]

\end{lemma}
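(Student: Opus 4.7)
The plan is a direct computation via left translation. Since the $U(2)$ metric is bi-invariant and equals $-\det$ at the identity (by construction of the Lorentzian form on $\mathfrak{u}(2)$), I will compute $dF_X(Y)$ for a Hermitian tangent vector $Y$ at $X\in V$, left translate by $F(X)^{-1}$ to land back in $\mathfrak{u}(2)$, and evaluate $-\det$ on the result. The pulled-back norm of $y$ at $x$ is then $-\det(F(X)^{-1}dF_X(Y))$, which should come out to $4(y,y)/(1+2\sum x_i^2+(x,x)^2)$.

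First I would rewrite $F(X)=2(I-iX)^{-1}-I$, which is legitimate because $I+iX$ and $I-iX$ commute. Differentiating gives
\[
dF_X(Y)=2(I-iX)^{-1}(iY)(I-iX)^{-1}.
\]
Using $F(X)^{-1}=(I+iX)^{-1}(I-iX)$, the translated tangent vector
\[
Z \;:=\; F(X)^{-1}\,dF_X(Y) \;=\; 2(I+iX)^{-1}(iY)(I-iX)^{-1}
\]
is easily checked to satisfy $Z^{*}=-Z$, hence lies in $\mathfrak{u}(2)$. Then multiplicativity of determinant yields
\[
-\det Z \;=\; \frac{-4\,\det(iY)}{\det(I+iX)\det(I-iX)} \;=\; \frac{4\,\det Y}{\det(I+X^{2})},
\]
since $\det(iY)=-\det Y$ in dimension two and $(I+iX)(I-iX)=I+X^{2}$. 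The numerator is $4(y,y)_x$ because $(y,y)=\det Y$ under the identification of $V$ with Hermitian matrices.

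For the denominator I would invoke the two-dimensional characteristic-polynomial identity
\[
\det(I+X^{2}) \;=\; 1+\mathrm{tr}(X^{2})+\det(X^{2}) \;=\; 1+2\sum x_i^{2}+(x,x)^{2},
\]
using the direct calculations $\mathrm{tr}(X^{2})=2\sum x_i^{2}$ and $\det X=(x,x)$. This produces exactly the claimed conformal factor. There is no serious conceptual obstacle; the main risk is simply bookkeeping, namely tracking signs and factors of $i$ across the Cayley transform and keeping the convention ($-\det$ on $\mathfrak{u}(2)$ versus $+\det$ on $V$) consistent throughout.
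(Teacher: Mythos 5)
Your proposal is correct and follows essentially the same route as the paper: differentiate the Cayley transform, left-translate by $F(X)^{-1}$ into $\mathfrak{u}(2)$, and evaluate $-\det$, obtaining $4\det Y/\det(I+X^{2})$. You even carry out the final step, $\det(I+X^{2})=1+\mathrm{tr}(X^{2})+(\det X)^{2}=1+2\sum x_i^{2}+(x,x)^{2}$, which the paper leaves to the reader.
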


\begin{proof}
We note that $T_{u}(U(2))=\{uX|X\in\mathfrak{u}(2)\}$. Furthermore,
$\left\langle uX,uX\right\rangle _{u}=-\det(X)$. Now let $Y\in M_{2}(\mathbb{C})$ 
be such that $Y^{\ast}=Y$, that is $Y\in H$. Let $Q(Y)=$ $(I+iY)(I-iY)^{-1}%
$. We calculate $\left\langle dQ_{Y}(v),dQ_{Y}(v)\right\rangle _{Q(Y)}$ for
$v\in H$ thought of as being an element of $T_{Y}(H)$. We get%
\[
dQ_{Y}(v)=iv(I-iY)^{-1}+(I+iY)(I-iY)^{-1}iv(I-iY)^{-1}
\]%
\[
=i(I+iY)(I-iY)^{-1}((I-iY)(I+iY)^{-1}+I)v(1-iY)^{-1}
\]%
\[
=2iQ(Y)(I+iY)^{-1}v(1-iY)^{-1}.
\]
Thus%
\[
\left\langle dQ_{Y}(v),dQ_{Y}(v)\right\rangle _{Q(Y)}=4\frac{\det(v)}%
{\det(I+Y^{2})}=4\frac{(v,v)_{Y}}{\det(I+Y^{2})}.
\]
Now calculate $\det(I+Y^{2})$ in terms of the $x_{i}$.
\end{proof}

More generally we have

\begin{lemma}
The action of $G$ (or $G_{1}$) on $U(2)$ given by the linear fractional
transformations is conformal relative to the pseudo-Riemannian metric
$\left\langle \ldots,\ldots\right\rangle $ on $U(2)$.
\end{lemma}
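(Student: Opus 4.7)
\emph{Plan.} The approach is to reduce the conformality check to the isotropy representation at the base point $I \in U(2)$ and handle that via the Levi--unipotent decomposition of $P$.

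Since $K = G \cap U(4)$ acts on $U(2)$ transitively and isometrically (both facts established earlier in the text), and since $G = K \cdot P$ (because $K$ acts transitively on $G/P \cong U(2)$), any $g \in G$ and $Z \in U(2)$ may be written as $Z = k \cdot I$ with $k \in K$ and $gk = k'p$ with $k' \in K$, $p \in P$. Differentiating the identity $\phi_g \circ \phi_k = \phi_{k'} \circ \phi_p$ at $I$ yields
\[
d(\phi_g)_Z = d(\phi_{k'})_I \circ d(\phi_p)_I \circ (d(\phi_k)_I)^{-1},
\]
whose outer factors are linear isometries between the relevant tangent spaces. Hence $\phi_g$ is conformal at $Z$ if and only if $\phi_p$ is conformal at $I$, and it suffices to show that the isotropy representation $P \to GL(T_I U(2))$ has image in the conformal group of $(\mathfrak{u}(2),-\det)$.

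To verify this, I would decompose $P = L \cdot N$, where $L = \{l_g := \mathrm{diag}(g,(g^*)^{-1}) : g \in GL(2,\mathbb{C})\}$ is the Levi and $N = \{n_X := \left[\begin{smallmatrix} I & X \\ 0 & I \end{smallmatrix}\right] : X \in H\}$ is the abelian unipotent radical, both read off the block form of $P$. Together with $\bar N$ from the text, these give the three-step grading $\mathfrak g = \bar{\mathfrak n} \oplus \mathfrak l \oplus \mathfrak n$ associated with the maximal parabolic $P$, and the orbit map $X \mapsto \frac{d}{dt}\big|_{t=0}\exp(tX)\cdot I$ identifies $\bar{\mathfrak n}$ with $T_I U(2) = \mathfrak u(2)$ (an infinitesimal version of the Cayley computation in the preceding lemma gives the explicit correspondence $\left[\begin{smallmatrix} 0 & 0 \\ Y & 0 \end{smallmatrix}\right] \mapsto 2iY$). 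Under this identification, a short $\mathrm{Ad}$ calculation should yield
\[
d(\phi_{l_g})_I(W) = (g^*)^{-1} W g^{-1}, \qquad d(\phi_{n_X})_I = \mathrm{id}.
\]
The first map preserves $\mathfrak u(2)$ and rescales $-\det$ by the positive factor $|\det g|^{-2}$, so $L$ acts conformally; the second follows from the grading identity $[\mathfrak n, \bar{\mathfrak n}] \subset \mathfrak l \subset \mathfrak p$, which forces $\mathrm{Ad}(N) \equiv \mathrm{id}$ on $\mathfrak g/\mathfrak p$, so $N$ acts trivially on $T_I U(2)$ and hence conformally with factor $1$. Composing, each $p = l_g n_X \in P$ acts on $T_I U(2)$ by a conformal linear map with positive scalar $|\det g|^{-2}$; the first step then upgrades this to conformality of the $G$-action at every $Z \in U(2)$. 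The statement for $G_1$ transfers via the isomorphism $\sigma$.

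The main technical hurdle is the bookkeeping required to convert the abstract action on $G/P$ into the two concrete differentials at $I$, in particular the identification of $\bar{\mathfrak n}$ with $T_I U(2)$ via the infinitesimal Cayley transform. Once this identification is set up, both the conformal scaling by $L$ and the triviality of $N$ reduce to one-line calculations. As a back-up verification, one can work entirely in the $G_1$-picture via $\sigma$ and Taylor-expand the linear fractional formulas $\sigma(l_g)\cdot Z$ and $\sigma(n_X)\cdot Z$ around $Z = I$ to recover the same differentials.
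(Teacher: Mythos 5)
Your argument is correct, but it takes a genuinely different route from the paper. The paper proves the lemma by brute force: it differentiates the fractional linear map $\phi(Z)=(AZ+B)(CZ+D)^{-1}$ at an arbitrary $Z\in U(2)$, writes $d\phi_{Z}(ZX)=\phi(Z)\left(\phi(Z)^{-1}AZ-CZ\right)X(CZ+D)^{-1}$, and uses multiplicativity of $\det$ together with $\left\langle ZX,ZX\right\rangle_{Z}=-\det X$ to read off conformality, obtaining the explicit conformal factor $\det\left((AZ+B)^{-1}B-(CZ+D)^{-1}D\right)$ at every point. You instead exploit the homogeneous structure: since $K$ acts transitively by isometries and $G=KP$ with $P$ the stabilizer of $I$, conformality of the whole $G$-action reduces to conformality of the isotropy representation of $P$ on $T_{I}U(2)\cong\mathfrak{g}/\mathfrak{p}\cong\overline{\mathfrak{n}}$, where the Levi acts by $Y\mapsto (g^{\ast})^{-1}Yg^{-1}$ (scaling $-\det$ by $|\det g|^{-2}>0$) and the abelian unipotent radical acts trivially because $[\mathfrak{n},\overline{\mathfrak{n}}]\subset\mathfrak{l}$; all of these computations check out, including the identification of $\overline{\mathfrak{n}}$ with $T_{I}U(2)$ via $Y\mapsto 2iY$, which matches the paper's Cayley-transform formula. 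What each approach buys: the paper's computation is short, self-contained, and produces the pointwise conformal factor in closed form; your argument is more structural, makes the positivity of the conformal factor transparent, and explains conceptually why conformality holds (the isotropy group lands in the conformal group of the flat Lorentzian form), at the cost of the bookkeeping needed to set up the $\mathfrak{g}/\mathfrak{p}$ identification and without yielding the explicit factor at a general point.
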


\begin{proof}
Let
\[
\phi(Z)=g\cdot Z=(AZ+B)(CZ+D)^{-1}%
\]
with $g\in G_{1}$. Then if $X\in\mathfrak{u}(2)$ we have%
\[
d\phi_{Z}(ZX)=(AZX-\phi(Z)CZX)(CZ+D)^{-1}
\]%
\[
=\phi(Z)(\phi(Z)^{-1}AZX-CZX)(CZ+D)^{-1}.
\]
Thus since $-\det X=\left\langle ZX,ZX\right\rangle _{Z}$ we have
$\left\langle d\phi_{Z}(ZX),d\phi_{Z}(ZX)\right\rangle _{\phi(Z)}=$
\[
 \det\left(  (\phi(Z)^{-1}AZ-CZ)(CZ+D)^{-1}\right)  \left\langle
ZX,ZX\right\rangle _{Z}.
\]
This proves the conformality.
\end{proof}

We note that%
\[
(\phi(Z)^{-1}AZX-CZX)(CZ+D)^{-1}
\]
\[
= (CZ+D)((AZ+B)^{-1}AZ-(CZ+D)^{-1}CZ)(CZ+D)^{-1}.
\]
Thus the conformal factor is%
\[
\det((AZ+B)^{-1}AZ-(CZ+D)^{-1}CZ)
\]%
\[
= \det((AZ+B)^{-1}B-(CZ+D)^{-1}D).
\]

\section{Maxwell's equations on compactified \\ Minkowski space}

We will first recall Maxwell's equations in Lorentzian form. For this we need some
notation. If $M$ is a smooth manifold, then $\Omega^{k}(M)$ will
denote the space of smooth $k$-forms on $M$. We note that if $(M,g)$ is
an $n$-dimensional pseudo-Riemannian manifold then $g$ induces nondegenerate forms
on each fiber $\wedge^{k}T(M)_{x}^{\ast}$ which we will also
denote as $g_{x}$. If $M$ is oriented then there is a unique element
$\gamma\in\Omega^{n}(M)$ such that if $x\in M$ and $v_{1},\ldots,v_{n}$ is an
oriented pseudo-orthonormal basis of $T(M)_{x}$ (i.e., $|g_{x}(v_{i}%
,v_{j})|=\delta_{ij}$) then $\gamma_{x}(v_{1},\ldots,v_{n})=1$. Using $\gamma$
we can define the Hodge $\ast$ operator on $M$ as follows: If $\omega\in \wedge^{k}T(M)_{x}^{\ast}$, then $\ast\omega$ is defined to be the unique
element of $\wedge^{n-k}T(M)_{x}^{\ast}$ such that $\eta\wedge\ast\omega=g_{x}(\eta,\omega)\gamma_{x}$ for all
 $\eta\in\wedge
^{k}T(M)_{x}^{\ast}$.

The next result is standard.

\begin{lemma}
Let $F:M\rightarrow N$ be a conformal, orientation, preserving mapping of
oriented pseudo-Riemannian manifolds. If $\dim M=\dim N=2k$ then $F^{\ast
}\ast\omega=\ast F^{\ast}\omega$ for $\omega\in\Omega^{k}$.
\end{lemma}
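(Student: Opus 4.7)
The plan is to reduce to a pointwise identity at each $x\in M$, using the defining property of $\ast$: the $2k$-form $\eta\wedge\ast\omega-\eta\wedge\ast\omega'$ vanishes for all $k$-forms $\eta$ iff $\ast\omega=\ast\omega'$, since the wedge pairing on middle-degree forms is nondegenerate. Because $F$ is conformal between manifolds of the same dimension with conformal factor $\lambda^{2}>0$ (i.e. $F^{\ast}g_{N}=\lambda^{2}g_{M}$), $F$ is automatically a local diffeomorphism, so $F^{\ast}:\wedge^{k}T(N)^{\ast}_{F(x)}\to\wedge^{k}T(M)^{\ast}_{x}$ is a linear isomorphism and it suffices to verify
\[
F^{\ast}\eta\wedge F^{\ast}(\ast\omega)=F^{\ast}\eta\wedge\ast(F^{\ast}\omega)
\]
for all $\eta\in\wedge^{k}T(N)^{\ast}_{F(x)}$.

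The key book-keeping is the scaling of the two ingredients under a conformal change. Pick a pseudo-orthonormal oriented basis $\tilde e_{1},\ldots,\tilde e_{2k}$ of $T(N)_{F(x)}$; then $e_{i}:=\lambda\,F_{\ast}^{-1}\tilde e_{i}$ is a pseudo-orthonormal oriented basis of $T(M)_{x}$ (orientation preservation is used here), and dualizing gives $F^{\ast}\tilde e^{i}=\lambda\,e^{i}$. Consequently
\[
F^{\ast}\gamma_{N}=\lambda^{2k}\gamma_{M},\qquad g_{M}(F^{\ast}\alpha,F^{\ast}\beta)=\lambda^{-2k}\,g_{N}(\alpha,\beta)\quad\text{for }\alpha,\beta\in\wedge^{k}T(N)^{\ast}_{F(x)}.
\]
(The minus sign in the exponent comes from dualizing: on $k$-covectors built from $F^{\ast}\tilde e^{i}=\lambda e^{i}$, the inner product eats $\lambda^{2k}$ the wrong way when reexpressed in terms of the orthonormal $e^{i}$'s.) Concretely I would just compute on monomials $\tilde e^{I}$.

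Plugging this into the defining identity $\eta\wedge\ast\omega=g_{N}(\eta,\omega)\gamma_{N}$ and pulling back,
\[
F^{\ast}\eta\wedge F^{\ast}(\ast\omega)=g_{N}(\eta,\omega)\,\lambda^{2k}\gamma_{M}=\lambda^{2k}g_{N}(\eta,\omega)\,\gamma_{M},
\]
while the defining identity on $M$ applied to $F^{\ast}\eta$ and $F^{\ast}\omega$ gives
\[
F^{\ast}\eta\wedge\ast(F^{\ast}\omega)=g_{M}(F^{\ast}\eta,F^{\ast}\omega)\,\gamma_{M}=\lambda^{2k}g_{N}(\eta,\omega)\,\gamma_{M},
\]
where I used the scaling rule for the induced metric on $k$-forms. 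The two expressions agree, which by the nondegeneracy argument above yields $F^{\ast}\ast\omega=\ast F^{\ast}\omega$.

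The only subtle point is verifying that the $\lambda^{2k}$ from the pulled-back volume form and the $\lambda^{-2k}$ from the inner product on $k$-covectors multiply to $1$ precisely when $\dim M=2k$; this is the whole reason the lemma requires the middle degree. Everything else is routine linear algebra, and the pseudo-Riemannian signature causes no trouble because $\lambda^{2}$ is a \emph{positive} scalar, so signs $\epsilon_{i}=g(e_{i},e_{i})=\pm1$ are preserved and the orientation assumption guarantees $F^{\ast}\gamma_{N}=+\lambda^{2k}\gamma_{M}$ rather than $-\lambda^{2k}\gamma_{M}$.
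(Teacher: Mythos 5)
The paper offers no proof of this lemma (it is simply labeled ``standard''), so there is no argument of the paper to compare routes with; your pointwise reduction via the defining identity $\eta\wedge\ast\omega=g(\eta,\omega)\gamma$ and the nondegeneracy of the middle-degree wedge pairing is exactly the standard proof, and your overall chain of equalities does establish $F^{\ast}\ast\omega=\ast F^{\ast}\omega$.

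However, one intermediate statement is wrong as written and contradicts what you actually use. From $F^{\ast}\tilde e^{\,i}=\lambda e^{i}$ one gets $g_{M}(F^{\ast}\tilde e^{\,i},F^{\ast}\tilde e^{\,j})=\lambda^{2}\epsilon_{i}\delta_{ij}=\lambda^{2}g_{N}(\tilde e^{\,i},\tilde e^{\,j})$, hence on $k$-covectors $g_{M}(F^{\ast}\alpha,F^{\ast}\beta)=\lambda^{2k}g_{N}(\alpha,\beta)$, with a \emph{plus} $2k$ in the exponent; your displayed rule $g_{M}(F^{\ast}\alpha,F^{\ast}\beta)=\lambda^{-2k}g_{N}(\alpha,\beta)$ and the parenthetical ``minus sign from dualizing'' are incorrect, and if you used that rule literally the two sides of your comparison would differ by $\lambda^{4k}$ and the proof would collapse. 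In your final display you in fact substitute the correct factor $\lambda^{2k}$, which is why the computation closes. The cancellation should therefore be phrased differently: it is not that a $\lambda^{2k}$ and a $\lambda^{-2k}$ multiply to $1$, but that the volume form contributes $\lambda^{n}$ (with $n=\dim M$) on one side while the metric on $k$-covectors contributes $\lambda^{2k}$ on the other, and these coincide precisely when $n=2k$; equivalently, $\ast$ on $k$-forms rescales by $\lambda^{\,n-2k}$ under a conformal change, which is $1$ exactly in middle degree. Fix that one line and its explanation and your proof is complete; the remaining points (local diffeomorphism, preservation of the signs $\epsilon_{i}$ since $\lambda^{2}>0$, and orientation giving $F^{\ast}\gamma_{N}=+\lambda^{2k}\gamma_{M}$) are handled correctly.
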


With this notation in place we can set up Maxwell's equations. Take $t=x_{4}$
in $\mathbb{R}^{1,3}$ and let $\omega\longmapsto\ast\omega$ denote the Hodge star operator
on differential forms with respect to the Lorentzian structure $(\ldots,\ldots)$ and
the orientation corresponding to $\gamma=dx_{1}\wedge dx_{2}\wedge
dx_{3}\wedge dt$. Then Maxwell's equations in an area free of current (simple
media e.g., in a vacuum, with the dielectric constant, the permeability and
thus the speed of light normalized to $1$) can be expressed in terms of
$2$-forms as%
\begin{equation*}
d\omega=d\ast\omega=0\tag{1}
\end{equation*}
with $d$ the exterior derivative. We note that in this formulation if
$\mathbf{E}=(e_{1},e_{2},e_{3})$ and $\mathbf{H=}(h_{1},h_{2}.h_{3})$ are
respectively the electric field intensity and the magnetic field intensity
vectors, then
\[
\omega=h_{1}dx_{2}\wedge dx_{3}-h_{2}dx_{1}\wedge dx_{3}+h_{3}dx_{1}\wedge
dx_{2}
\]%
\[
-e_{1}dx_{1}\wedge dt-e_{2}dx_{2}\wedge dt-e_{3}dx_{3}\wedge dt\text{.}%
\]
The equations (1) are then the same as%
\[
\nabla\cdot\mathbf{E}=\nabla\cdot\mathbf{H}=0
\]
and%
\[
\frac{\partial}{\partial t}\mathbf{E}=-\nabla\times\mathbf{H},\frac{\partial
}{\partial t}\mathbf{H}=\nabla\times\mathbf{E.}%
\]
Here the $\ast$ operation is just $\mathbf{E}\rightarrow\mathbf{H}$ and
$\mathbf{H}\rightarrow-\mathbf{E}$, which is the duality between electricity
and magnetism\ in the physics literature.

We note that if $M=
\mathbb{R}
^{1,3}$ and $N=U(2)$ with the Lorentzian structures described in the previous
section, if $F$ is the map described above and if $\omega\in
\Omega^{2}(U(2))$ satisfies the equations (1), then $F^{\ast}\omega$ satisfies
Maxwell's equations on $\mathbb{R}^{1,3}$. We will thus call the equations (1) Maxwell's equations on
compactified Minkowski space.

The group $G$ (or $G_{1}$) acts on $U(2)$ by conformal diffeomorphisms. Thus
we see that the space of solutions to Maxwell's equations defines a
representation of $G.$(which acts by pullback). Most of the rest of this
article will be devoted to that analysis of this representation. 

Denote by
$\Omega^{k}(U(2))_{
\mathbb{C}}$ the complex valued $k$-forms. Endow it with the $C^{\infty}$-topology
which is a Fr\'{e}chet space structure and the corresponding action of $G$ on
$\Omega^{2}(U(2))$ defines it as a smooth Fr\'{e}chet representation of $G$
moderate growth. To see this, we note that as a $G$-homogeneous space
$U(2)\cong G/P$. Let $\mu$ denote the isotropy action of $P$ on $V=T_{IP}%
(G/P)_{
\mathbb{C}}$ (i.e., the action of $P$ on $\Lie(G)/\Lie(P)\otimes
\mathbb{C}$). Then the space $\Omega^{k}(U(2))_{
\mathbb{C}}$ with the $C^{\infty}$ topology and $G$ action by pullback is just the
$C^{\infty}$ induced representation%
\[
\rm Ind_{P}^{G}(\wedge^{k}V^{\ast})^{\infty}.
\]
Furthermore, since it is as a $K$-representation%
\[
\rm Ind_{M}^{K}(\wedge^{k}V^{\ast})^{\infty},
\]
Frobenius reciprocity implies that the representation is admissible (that is,
the multiplicities of the $K$-types is finite). The maps $d$ and $d\ast$ are
continuous maps in this topology to $\Omega^{3}(U(2))_{
\mathbb{C}}$; thus the solutions of Maxwell's equations on $U(2)$ define an admissible,
smooth Fr\'{e}chet representation of moderate growth.

\section{The $K$-isotypic components of the space\\ of solutions to Maxwell's
equations on \\compactified Minkowski space: step 1}

In this section we will begin determination of the $K$-isotypic components of
the space of solutions to Maxwell's equations. We will proceed by first
determining the isotypic components of $\ker d$ on $\Omega^{2}(U(2))_{
\mathbb{C}}$. We will then use explicit calculations for the case at hand of $d$ and the
Hodge star operator to complete the picture. We will now begin the first step.

We note that $U(2)$ is diffeomorphic with $SU(3)\times S^{1}$ under the map
\[
u,z\mapsto u\left[
\begin{array}
[c]{cc}%
z & 0\\
0 & 1
\end{array}
\right]
\]
with $u\in SU(2)$ and $z\in S^{1}=\{z\in
\mathbb{C}||z|=1\}$. We note that $SU(2)$ is diffeomorphic with $S^{3}$ which implies
that we have%
\[
H^{1}(U(2),\mathbb{C})=\mathbb{C},\]%
\[H^{2}(U(2),\mathbb{C})=0,
\]
and%
\[
H^{3}(U(2),
\mathbb{C}
)=\mathbb{C}.
\]
So de Rham's theorem implies that we have the following short exact sequences%
\begin{equation*}
0\rightarrow%
\mathbb{C}
1\rightarrow C^{\infty}(U(2),%
\mathbb{C}
)\rightarrow\ker d_{|\Omega^{1}(U(2))_{%
\mathbb{C}
}}\rightarrow%
\mathbb{C}
\mu\rightarrow0;\tag{1}
\end{equation*}
\begin{equation*}
0\rightarrow\ker d_{|\Omega^{1}(U(2))}\rightarrow\Omega^{1}(U(2))_{%
\mathbb{C}
}\rightarrow\ker d_{|\Omega^{2}(U(2))_{\mathbb{C}
}}\rightarrow0\tag{2}
\end{equation*}
in both sequences the map to the kernel is given by $d$. Also $\mu$ is the
image of $\det^{\ast}\frac{dz}{z}$ in the quotient space. These are all
morphisms of smooth Fr\'{e}chet representations of $G$ of moderate growth.

We also note that the center of $G$ consists of the multiples of the identity
and so acts trivially on $U(2)$. Now $K=U(2)\times U(2)$ and the multiples of
the identity correspond under this identification with the diagonal elements
$C=\{(zI,zI)|\left\vert z\right\vert =1\}$ and $M$ is the diagonal $U(2)$ in
$K$. The actual groups acting on $U(2)$ are $K/C$ and $M/C $. We define
$K_{1}=SU(2)\times U(2)$ and $M_{1}=\{(u,u)|u\in SU(2)\}$. Then under the
natural map $K/M=K_{1}/M_{1}$ we still have a redundancy of $\mu_{2}%
=\{\pm(I,I)\}.$ We will use the notation $(\tau_{p,q,r},F^{(p,q,r)})$ for the
representation of $K_{1}$ on $S^{p}(\mathbb{C}
^{2})\otimes S^{q}(\mathbb{C}
^{2})$ ($S^{p}(
\mathbb{C}
^{2})$ the $p^{th}$ symmetric power) given by $\tau_{p,q,r}(u,vz)=z^{r}%
S^{p}(u)\otimes S^{q}(u)$ where $p,q\in
\mathbb{Z}_{\geq0}$, $r\in \mathbb{Z}$ and
 $r\equiv q\operatorname{mod}2$. If $V$ is a closed $K$-invariant
subspace of $\Omega^{k}(U(2))_{\mathbb{C}}$
then we denote by $V_{p,q,r}$ its $\tau_{p,q,r}$ isotypic component.

\begin{lemma}
As a representation of $K_{1}$, the space of $K_{1}$-finite vectors of\linebreak $\ker
d_{|\Omega^{2}(U(2))_{
\mathbb{C}}}$ splits into a direct sum%
\[
{\displaystyle\bigoplus\limits_{%
\begin{array}
[c]{c}%
k\geq0\\
r\equiv k\operatorname{mod}2
\end{array}
}}
(F^{k+2,k,r}\oplus F^{k,k+2,r}\oplus F^{k+1,k+1,r}).
\]
Furthermore, if $p-q\neq0$ then
\[
d:\Omega^{1}(U(2))_{p,q,r}\rightarrow(\ker d_{|\Omega^{2}(U(2))_{\mathbb{C}}})_{p,q,r}
\]
is a bijective $K$-intertwining operator.
\end{lemma}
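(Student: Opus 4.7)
The plan is to compute the $K_{1}$-isotypic components of $\Omega^{k}(U(2))_{\mathbb{C}}$ for $k=0,1,2$ via Frobenius reciprocity, using the induced-representation description $\Omega^{k}(U(2))_{\mathbb{C}}=\operatorname{Ind}_{M_{1}}^{K_{1}}(\wedge^{k}V^{*})^{\infty}$ established in the previous section, and then to extract the isotypic decomposition of $\ker d_{|\Omega^{2}}$ from the two short exact sequences (1) and (2) above.

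First I identify $V=T_{I}(U(2))_{\mathbb{C}}=\mathfrak{gl}_{2}(\mathbb{C})$ as an $M_{1}\cong SU(2)$-representation: the isotropy action is conjugation, so the splitting $\mathfrak{gl}_{2}=\mathfrak{sl}_{2}\oplus\mathbb{C}\cdot I$ gives $V\cong V^{*}\cong S^{2}(\mathbb{C}^{2})\oplus\mathbb{C}$. Consequently $\wedge^{1}V^{*}\cong S^{2}\oplus\mathbb{C}$ and $\wedge^{2}V^{*}\cong\wedge^{2}S^{2}\oplus(S^{2}\otimes\mathbb{C})\cong S^{2}\oplus S^{2}$, using that $\wedge^{2}$ of the three-dimensional adjoint of $SU(2)$ is the adjoint itself. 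For each irreducible $F^{p,q,r}$ of $K_{1}$, Frobenius reciprocity then yields $[\Omega^{k}(U(2))_{\mathbb{C}}:F^{p,q,r}]=\dim\operatorname{Hom}_{M_{1}}(F^{p,q,r}|_{M_{1}},\wedge^{k}V^{*})$, where $F^{p,q,r}|_{M_{1}}\cong S^{p}\otimes S^{q}$ since $M_{1}$ is the diagonal $SU(2)$. Via Clebsch--Gordan one finds that only types with $|p-q|\in\{0,2\}$ contribute, and reads off: $[\Omega^{0}:F^{p,q,r}]=[p=q]$; $[\Omega^{1}:F^{p,q,r}]$ equals $2$ for $p=q\ge 1$, $1$ for $p=q=0$ or $|p-q|=2$, and $0$ otherwise; $[\Omega^{2}:F^{p,q,r}]$ equals $2$ for $p=q\ge 1$ or $|p-q|=2$ and $0$ otherwise.

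Feeding these counts through the two exact sequences finishes the job. Both $\mathbb{C}\cdot 1$ and the class $\mu$, represented by the bi-invariant form $\det^{*}(dz/z)=d(\det)/\det$, are trivial $K_{1}$-types, so sequence (1) gives the virtual-character identity $[\ker d_{|\Omega^{1}}]=[\Omega^{0}]-[\mathbb{C}]+[\mathbb{C}\mu]=[\Omega^{0}]$, and sequence (2) then gives $[\ker d_{|\Omega^{2}}]=[\Omega^{1}]-[\Omega^{0}]$; subtracting isotype by isotype leaves multiplicity $1$ on each of $F^{k+2,k,r}$, $F^{k,k+2,r}$, $F^{k+1,k+1,r}$ and $0$ on all other types, which is the claimed decomposition. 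For the second assertion, whenever $p\ne q$ the identity $[\ker d_{|\Omega^{1}}]=[\Omega^{0}]$ forces $(\ker d_{|\Omega^{1}})_{p,q,r}=0$, so $d$ is injective on $\Omega^{1}(U(2))_{p,q,r}$, and surjectivity onto $(\ker d_{|\Omega^{2}})_{p,q,r}$ follows by taking $F^{p,q,r}$-isotypic components in the surjection $d\colon\Omega^{1}\to\ker d_{|\Omega^{2}}$ of (2), an operation that is exact on admissible smooth $K_{1}$-modules. The main bookkeeping hurdle is the decomposition of $\wedge^{2}V^{*}$ together with the tracking of the parity conditions on $r$ coming from the $\mu_{2}$ kernel of $K_{1}\to K/C$; once this is in place the conclusion is a direct character subtraction.
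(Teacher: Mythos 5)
Your proposal is correct and is essentially the paper's own argument: both decompose the (co)tangent fiber at the identity as the adjoint plus a trivial $SU(2)$-summand, compute the multiplicities of $\Omega^{0}$ and $\Omega^{1}$ by Frobenius reciprocity (the paper uses Peter--Weyl for $\Omega^{0}$, which amounts to the same count) together with Clebsch--Gordan, and then read off $\ker d_{|\Omega^{2}}$ by subtracting through the exact sequences (1) and (2). Your extra computation of $\wedge^{2}V^{*}$ and the explicit injectivity/surjectivity argument on isotypic components are harmless refinements of the same approach, so there is nothing further to flag.
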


\begin{proof}
The Peter--Weyl theorem implies that $L^{2}(U(2))$ is a Hilbert space direct
sum
\[
\bigoplus_{\tau\in\widehat{U(2)}}V^{\tau}\otimes(V^{\tau})^{\ast},
\]
where $\widehat{U(2)}$ is the set of equivalence classes of irreducible finite-dimensional 
representations of $U(2)$ and $V^{\tau}$ is a choice of
representative of $\tau$. We have the exact sequence%
\[
1\rightarrow\{\pm(I,1)\}\rightarrow SU(2)\times S^{1}\rightarrow
U(2)\rightarrow1
\]
with the last map $u,z\longmapsto zu$. This implies (as above) that if we
define $V^{p,r}$ to be the representation, $\tau_{p,r}$ of $SU(2)\times S^{1}%
$on $S^{p}(
\mathbb{C}
^{2})$ with $\tau_{p,r}(u,z)v=z^{r}S^{p}(u)v$, then $\tau_{p,r}$ is the lift of
an irreducible representation of $U(2)$ if and only if $r\equiv
p\operatorname{mod}2$. These representations give a complete set of
representatives for $\widehat{U(2)}$. We note that the dual representation of
$\tau_{p,r}$ is equivalent with $\tau_{p,-r}$. We therefore see that the space
of $K$-finite vectors in $C^{\infty}(U(2))_{\mathbb{C}}$ is isomorphic with the direct sum
\[
\bigoplus_{%
\begin{array}
[c]{c}%
p\in
\mathbb{Z}_{\geq0}\\
r\equiv p\operatorname{mod}2
\end{array}
}F^{p,p,r}.
\]
We now apply Frobenius reciprocity to analyze the isotypic components of
$\Omega^{1}(U(2))_{\mathbb{C}}$. 
As we have noted as a representation of $K$ it is just the smooth induced
representation of $M$ to $K$ where
\[
M=\Delta(U(2))=\{(u,u)|u\in U(2)\}
\]
is acting on $\Lie(U(2))_{\mathbb{C}}$ 
under $\Ad(u)$. Thus in terms of the parameters above (identifying $M$ with
$U(2)$) we have
\[
	\Lie(U(2))_{\mathbb{C}}\cong F^{0,0}\oplus F^{2,0}\text{.}%
\]
Now Frobenius reciprocity implies that
\begin{equation}
%\[
\dim {\Hom}_{K}(F^{p,q,r},\Omega^{1}(U(2))_
{\mathbb{C}})=\dim {\Hom}_{M}(F^{p,q,r},F^{0,0})
\end{equation}
\begin{equation}
+\dim {\Hom}_{M}(F^{p,q,r},F^{2,0}).
\end{equation}
%\]
The argument above says that $\dim \Hom_{M}(F^{p,q,r},F^{0,0})=0$ unless $p=q$
and $r\equiv p\operatorname{mod}2$. Now the Clebsch--Gordan formula implies
that%
\[
F_{|M}^{p,q,r}\cong\bigoplus_{j=0}^{\min(p,q)}F^{p+q-2j,r}\text{.}%
\]
This implies that $\dim Hom_{M}(F^{p,q,r},F^{2,0})=0$ unless $p=q$ or
$|p-q|=2$ and in either of these cases it is $1.$Now the exact sequences (1)
and (2) above imply the theorem.
\end{proof}

\begin{remark}\emph{
In the physics literature if $\omega$ is a solution to Maxwell's equations (as
in the beginning of Section 3), then a one-form $\beta$ such that
$d\beta=\omega$ yields in the $\mathbf{E}$, $\mathbf{H}$ formulation a
potential $\mathbf{A}$. In our formulation if we pull back to Minkowski space
and we write}
\[
\beta=\sum_{i=1}^{4}a_{i}dx_{i},%
\]
\emph{then considering $t=x_{4}$ and writing $\phi=a_{4}$ and $\mathbf{A}%
=(a_{1},a_{2},a_{3})$ we then have}
\[
\mathbf{E=\nabla}\times\mathbf{A},\mathbf{H=-}\frac{\partial\mathbf{A}%
}{\partial t}+\nabla\phi.
\]
\emph{This is the dual of what one normally finds in the physics literature. It is
pointed out that this potential has the ambiguity of a gradient field. We will
see that the only isotypic components of Maxwell's equations are $\tau
_{p,q,r}$ with $|p-q|=2$ and $r=\pm(\max(p,q))$. Thus using only those
Peter--Weyl coefficients yields a unique potential.}
\end{remark}

We will use the above lemma and some direct calculations to describe the
$K$-isotypic components of Maxwell's equations in the next section.

\section{The $K$-isotypic components of the space\\ of solutions to Maxwell's
equations on \\compactified Minkowski space: step 2}

We retain the notation of the previous section. Let $x_{4}=iI$ and%
\[
x_{1}\,=\,\left[
\begin{array}
[c]{cc}%
i & 0\\
0 & -i
\end{array}
\right]  ,x_{2}=\left[
\begin{array}
[c]{cc}%
0 & 1\\
-1 & 0
\end{array}
\right]  ,x_{3}=\left[
\begin{array}
[c]{cc}%
0 & i\\
i & 0
\end{array}
\right]  .
\]

We will use the usual identification of $\Lie(U(2))$(left invariant vector
fields) with skew-Hermitian $2\times2$ matrices (which we denote, as is usual, by
$\mathfrak{u}(2)$). Thus if $x\in\mathfrak{u}(2)$ then $x_{u}$ is the tangent
vector at $0$ to the curve $t\longmapsto ue^{tx}$. We note that $((x_{j}%
)_{u},(x_{k})_{u})_{u}=\varepsilon_{j}\delta_{j,k}$ with $\varepsilon
_{j}=-(-1)^{\delta_{j4}}.$ Thus $x_{1},x_{2},x_{3},x_{4}$ define a
pseudo-orthonormal frame on $U(2)$. We use this frame to define $\gamma$.
Since there will be many uses of the star operator and pullbacks we will use
the notation $J\omega=\ast\omega$ for $\omega\in\wedge^{2}T^{\ast}(U(2))_{u}$
for all $u\in U(2)$. We define $\alpha_{j}$ to be the left invariant one-form
on $U(2)$ defined by $\alpha_{j}(x_{k})=\delta_{jk}$. We note that
\[
J\alpha_{1}\wedge\alpha_{2}=\alpha_{3}\wedge\alpha_{4},J\alpha_{1}\wedge
\alpha_{3}=-\alpha_{2}\wedge\alpha_{4},J\alpha_{2}\wedge\alpha_{3}=\alpha
_{1}\wedge\alpha_{4}%
\]
and%
\[
J\alpha_{1}\wedge\alpha_{4}=-\alpha_{2}\wedge\alpha_{3},J\alpha_{2}%
\wedge\alpha_{4}=\alpha_{1}\wedge\alpha_{3},J\alpha_{3}\wedge\alpha
_{4}=-\alpha_{1}\wedge\alpha_{2}.
\]
From this we note

\begin{lemma}
We have $J^{2}=-I$ on each space $\wedge^{2}T^{\ast}(U(2))_{u}$. Furthermore a
basis of the eigenspace for $i$ in
$
\wedge^{2}T^{\ast}(U(2))_{u}\otimes
\mathbb{C}
$
is 
\[
\mathcal{B}_{i}=\{\alpha_{1}\wedge\alpha_{4}+i\alpha_{2}\wedge\alpha
_{3},\alpha_{2}\wedge\alpha_{4}-i\alpha_{1}\wedge\alpha_{3},\alpha_{3}
\wedge\alpha_{4}+i\alpha_{1}\wedge\alpha_{2}\}
\]
 a basis for the eigenspace
$-i$ is 
\[
\mathcal{B}_{-i}=\{\alpha_{1}\wedge\alpha_{4}-i\alpha_{2}\wedge
\alpha_{3},\alpha_{2}\wedge\alpha_{4}+i\alpha_{1}\wedge\alpha_{3},\alpha
_{3}\wedge\alpha_{4}-i\alpha_{1}\wedge\alpha_{2}\}.
\]
If $\mu\in\mathcal{B}%
_{i}$ and $\nu\in\mathcal{B}_{-i}$,  then $\mu\wedge\nu=0$.
\end{lemma}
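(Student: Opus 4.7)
The plan is to reduce everything to the six formulas for $J$ on the basis $\{\alpha_{i}\wedge\alpha_{j}:i<j\}$ of $\wedge^{2}T^{\ast}(U(2))_{u}$ already displayed in the excerpt. Since the $\alpha_{i}$ are left-invariant, all three claims are equalities of left-invariant objects, so it suffices to check them at a single point.

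For the identity $J^{2}=-I$, I would group the six formulas into the three pairs $\{\alpha_{i}\wedge\alpha_{4},\ \alpha_{j}\wedge\alpha_{k}\}$ where $(i,j,k)$ runs over cyclic permutations of $(1,2,3)$; within each pair the two formulas interchange the basis vectors up to signs whose product is $-1$, so applying $J$ twice returns $-1$ times the original vector on each invariant plane. For the eigenbasis statement, a direct check from the same six formulas shows that $J$ acts on each proposed element of $\mathcal{B}_{i}$ as multiplication by $i$. Since $J$ is a real operator, complex conjugation carries the $(+i)$-eigenspace onto the $(-i)$-eigenspace, and conjugating $\mathcal{B}_{i}$ produces exactly $\mathcal{B}_{-i}$; linear independence of the three eigenvectors in each case is obvious.

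For the vanishing $\mu\wedge\nu=0$, the cleanest route uses the defining property of $J$ itself. Extend the fiber metric $g_{x}$ complex-bilinearly to the complexification of $\wedge^{2}T^{\ast}(U(2))_{u}$, so that it remains a symmetric bilinear form. For $\mu\in\mathcal{B}_{i}$ and $\nu\in\mathcal{B}_{-i}$ this gives
\[
g_{x}(\mu,\nu)\gamma_{x} \;=\; \mu\wedge J\nu \;=\; -i\,\mu\wedge\nu,
\]
while the symmetry of $g_{x}$ combined with the commutativity of any two $2$-forms under $\wedge$ in dimension four yields
\[
g_{x}(\mu,\nu)\gamma_{x} \;=\; \nu\wedge J\mu \;=\; i\,\nu\wedge\mu \;=\; i\,\mu\wedge\nu.
\]
Comparing the two expressions forces $\mu\wedge\nu=0$.

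There is no real obstacle: every step is a short, direct verification. The only conceptual point worth flagging is that $J^{2}=-I$ (rather than $+I$) reflects the Lorentzian signature $(-,-,-,+)$ in dimension four, and it is precisely this sign that forces the $\pm i$-eigenspace splitting to occur only after complexification — a feature the subsequent sections will exploit to produce the holomorphic and anti-holomorphic irreducible pieces of Maxw. If a reader prefers a hands-on check of $\mu\wedge\nu=0$, the nine products $\mu\wedge\nu$ can be written out: six vanish from a repeated $\alpha_{i}$-factor, and the three ``diagonal'' ones collapse to a cancellation of the form $-i\,\alpha_{1}\wedge\alpha_{2}\wedge\alpha_{3}\wedge\alpha_{4}+i\,\alpha_{1}\wedge\alpha_{2}\wedge\alpha_{3}\wedge\alpha_{4}=0$.
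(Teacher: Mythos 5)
Your verification is correct, and it is essentially the argument the paper intends: the lemma is stated immediately after the six displayed formulas for $J$ on the $\alpha_i\wedge\alpha_j$, and is meant to follow by exactly the direct checks you perform (with your observation $\mu\wedge J\nu=g(\mu,\nu)\gamma=\nu\wedge J\mu$ being a slightly slicker, but equivalent, way to see the vanishing $\mu\wedge\nu=0$ than the termwise cancellation, which you also supply). Nothing is missing.
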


We look upon $J$ as an operator on $\Omega^{2}(U(2))_{
\mathbb{C}}$. Since $J$ preserves the real vector space $\Omega^{2}(U(2))$ we get a
decomposition%
\[
\Omega^{2}(U(2))_{
\mathbb{C}}=\Omega^{2}(U(2))_{i}\oplus\Omega^{2}(U(2))_{-i}%
\]
with $J_{|\Omega^{2}(U(2))_{\pm i}}=\pm iI$. If $\omega\in\Omega^{2}(U(2))_{
\mathbb{C}}$ then we denote by $\overline{\omega}$ the complex conjugate of $\omega$
relative to the real space $\Omega^{2}(U(2))$. We note

\begin{lemma} With the notation above, we have
\[
\Omega^{2}(U(2))_{\pm i}=\bigoplus_{\mu\in\mathcal{B}_{\pm i}}C^{\infty}(U(2),
\mathbb{C}
)\mu.
\] 
\end{lemma}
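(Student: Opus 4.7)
The plan is to lift the pointwise eigenspace decomposition in the previous lemma to a decomposition of sections, using the fact that the coframe $\alpha_1, \alpha_2, \alpha_3, \alpha_4$ is globally defined (indeed left-invariant) on $U(2)$.

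Since the $\alpha_j$ form a left-invariant global coframe, the six two-forms $\{\alpha_j \wedge \alpha_k : 1 \le j < k \le 4\}$ trivialize the bundle $\wedge^{2}T^{\ast}(U(2))$ as a real vector bundle. The previous lemma shows that at each point $u \in U(2)$ the six elements $\mathcal{B}_{i} \cup \mathcal{B}_{-i}$ form a basis of $\wedge^{2}T^{\ast}(U(2))_{u} \otimes \mathbb{C}$. Because these six forms are themselves left-invariant, they give a global smooth frame for the complexified bundle $\wedge^{2}T^{\ast}(U(2)) \otimes \mathbb{C}$. Consequently every $\omega \in \Omega^{2}(U(2))_{\mathbb{C}}$ admits a unique expansion
$$
\omega \;=\; \sum_{\mu \in \mathcal{B}_{i}} f_{\mu}\, \mu \;+\; \sum_{\nu \in \mathcal{B}_{-i}} g_{\nu}\, \nu,
\qquad f_{\mu}, g_{\nu} \in C^{\infty}(U(2), \mathbb{C}).
$$

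The operator $J$ is defined fibrewise from the pseudo-Riemannian structure and the volume form $\gamma$, both of which are smooth, so $J$ is a smooth bundle endomorphism of $\wedge^{2}T^{\ast}(U(2)) \otimes \mathbb{C}$. By the previous lemma, $J$ acts on the frame vectors by $+i$ on $\mathcal{B}_{i}$ and by $-i$ on $\mathcal{B}_{-i}$, so
$$
J\omega \;=\; i \sum_{\mu \in \mathcal{B}_{i}} f_{\mu}\, \mu \;-\; i \sum_{\nu \in \mathcal{B}_{-i}} g_{\nu}\, \nu.
$$
The condition $J\omega = i\omega$ is therefore equivalent to $g_{\nu} \equiv 0$ for all $\nu \in \mathcal{B}_{-i}$, and $J\omega = -i\omega$ is equivalent to $f_{\mu} \equiv 0$ for all $\mu \in \mathcal{B}_{i}$. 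This yields the two claimed direct sum decompositions.

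This is essentially a bookkeeping argument once one recognizes that the left-invariant frame trivializes the two-form bundle globally; I do not expect a real obstacle. The only point requiring a moment's attention is the observation that $J$ acts fibrewise and that the $\mathcal{B}_{\pm i}$-frames are smooth sections, so the eigenspace decomposition of each fibre automatically translates into a $C^{\infty}(U(2), \mathbb{C})$-module decomposition of smooth sections.
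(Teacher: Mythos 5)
Your argument is correct and is precisely the routine observation the paper has in mind: the paper states this lemma without proof, treating it as immediate because the elements of $\mathcal{B}_{\pm i}$ are globally defined left-invariant forms giving a smooth frame of $\wedge^{2}T^{\ast}(U(2))\otimes\mathbb{C}$ on which the fibrewise operator $J$ acts by $\pm i$. Nothing is missing.
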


We now calculate the exterior derivatives of the $\alpha_{j}$. We observe that
if $\alpha$ is a left invariant element of $\Omega^{1}(U(2))$, then%
\[
d\alpha(x,y)=-\alpha([x,y])
\]
for $x,y\in \Lie(U(2))$. This implies that%

\begin{equation*}
d\alpha_{1}=-2\alpha_{2}\wedge\alpha_{3},d\alpha_{2}=2\alpha_{1}\wedge
\alpha_{3},d\alpha_{3}=-2\alpha_{1}\wedge\alpha_{2}. \tag{1}%
\end{equation*}
We also note that $d\alpha_{4}=0$.

We denote by $\chi_{k}$ the character of $SU(2)\times S^{1}$ given by
$\chi_{k}(u,z)=z^{k}$. We denote by $\pi$ the covering map $\pi:SU(2)\times
S^{1}\rightarrow U(2)$ given by $\pi(u,z)=uz$. Then we have

\begin{lemma} We have
\[
(\ker d_{|\Omega^{2}(U(2))_{\mathbb{C}}
})_{k,k,r} = \chi_{r}dC^{\infty}(U(2),
\mathbb{C}
)_{k,k,0}\wedge\alpha_{4}
\]
which is defined on $U(2)$ if $r\equiv
k\operatorname{mod}2$.
\end{lemma}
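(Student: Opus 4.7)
The plan is to exhibit an injective $K_1$-equivariant map
\[
\Phi_r : C^{\infty}(U(2),\mathbb{C})_{k,k,0}\longrightarrow (\ker d_{|\Omega^{2}(U(2))_{\mathbb{C}}})_{k,k,r},\qquad \Phi_r(f)=\chi_r\, df\wedge\alpha_4,
\]
and then identify it, via a dimension count based on the preceding lemma, with an isomorphism onto the full isotypic component. When $k$ is odd the space $C^{\infty}(U(2),\mathbb{C})_{k,k,0}$ must be understood on the double cover $SU(2)\times S^{1}$; the factor $\chi_r$ together with the parity condition $r\equiv k\pmod 2$ is precisely what restores descent of $\omega_f:=\Phi_r(f)$ to $U(2)$.

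First I would check that $\omega_f$ is closed. Since $x_4=iI$ is central in $\mathfrak{u}(2)$, the Maurer--Cartan computation $d\alpha(X,Y)=-\alpha([X,Y])$ for left-invariant $1$-forms yields $d\alpha_4=0$, so $d\omega_f=d\chi_r\wedge df\wedge\alpha_4$. On $SU(2)\times S^{1}$ one has $d\chi_r=rz^{r-1}\,dz$, which is a scalar multiple of the pullback of $\alpha_4$; hence $d\chi_r\wedge\alpha_4=0$ and $\omega_f$ is closed.

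Next I would analyze the $K_1$-type and injectivity. Because $d$ is $K_1$-equivariant and $\alpha_4$ is bi-invariant, $df\wedge\alpha_4$ lies in the $F^{k,k,0}$-isotypic; multiplication by $\chi_r$ shifts the right $S^{1}$-character, so $\omega_f$ has type $F^{k,k,r}$, which descends to $U(2)$ exactly when $r\equiv k\pmod 2$. Injectivity of $\Phi_r$ (for $k\geq 1$) is immediate from the left-invariant expansion $df=\sum_{j=1}^{4}(x_j f)\alpha_j$: vanishing of $\omega_f=\chi_r\sum_{j=1}^{3}(x_j f)\,\alpha_j\wedge\alpha_4$ forces $x_1 f=x_2 f=x_3 f=0$, making $f$ right-$SU(2)$-invariant, which contradicts the fact that the $F^{k,k,0}$-isotypic transforms on the right as the nontrivial $SU(2)$-representation $S^{k}$.

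Finally, the previous lemma shows that $(\ker d_{|\Omega^{2}(U(2))_{\mathbb{C}}})_{k,k,r}$ is a single irreducible copy of $F^{k,k,r}$ (arising from the $F^{k'+1,k'+1,r}$ summand with $k'=k-1$). Since $\Phi_r$ is an injective $K_1$-equivariant map into this one-copy isotypic, it must be surjective, giving the stated equality. The principal obstacle is the parity bookkeeping sketched above; once the role of the $\chi_r$ twist in reconciling the $r=0$ convention for $C^{\infty}(U(2),\mathbb{C})_{k,k,0}$ with the parity of $F^{k,k,r}$ is in place, the remaining steps are routine.
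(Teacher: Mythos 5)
Your proposal is correct and takes essentially the same route as the paper: the paper likewise notes that $\chi_{r}\,dC^{\infty}(U(2),\mathbb{C})_{k,k,0}\wedge\alpha_{4}\subset\ker d$, uses the expansion $df=(x_{4}f)\alpha_{4}+\sum_{j}(x_{j}f)\alpha_{j}$ to see that $df\wedge\alpha_{4}=0$ forces $f$ to depend only on the central circle (impossible in the $(k,k,0)$-isotypic unless $k=0$), and then concludes by irreducibility/multiplicity one of the $(k,k,r)$ component supplied by the preceding lemma. Your closedness check and parity bookkeeping are the same in substance, so no further comment is needed.
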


\begin{proof}
We note that if $f\in C^{\infty}(U(2),\mathbb{C})$, 
then $df=(x_{4}f)\alpha_{4}+\nu$ with $\nu=\sum_{j}(x_{j}f)\alpha_{j}. $
Thus if $df\wedge\alpha_{4}=0$ then $\nu=0$. If $\nu=0$ then $\pi^{\ast
}f(u,z)=\pi^{\ast}f(I,z)$. A function in $C^{\infty}(U(2),\mathbb{C})_{k,k,0}$
with this property exists if and only if $k=0$. It is also clear that
$\chi_{r}dC^{\infty}(U(2),
\mathbb{C}
)_{k,k,0}\wedge\alpha_{4}$ is contained in $\ker d$. Thus, since each of the
isotypic components of $\ker d_{|\Omega^{2}(U(2))_{\mathbb{C}
}}$ is irreducible and we have accounted for all of them by Lemma 4, the result follows.
\end{proof}

\bigskip

We denote by Maxw the space of complex solutions to the Maxwell equations (as
described in the previous section). Then Maxw is a closed subspace of
$\Omega^{2}(U(2))_{\mathbb{C}}$ 
yielding a smooth Fr\'{e}chet representation of $G$ of moderate growth
under the action $\mu(g)\omega=(g^{-1})^{\ast}\omega$. We also note that $J$
preserves Maxw and commutes with the action of $G$. This implies that Maxw $=$
Maxw$_{i}$ $\oplus$ Maxw$_{-i}$ (corresponding to the $i$ and $-i$ eigenspaces
of $J$ on Maxw). We also note

\begin{lemma}
\rm{Maxw}$_{\pm i}=\{\omega\in\Omega^{2}(U(2))_{
\mathbb{C}
}|J\omega=\pm i\omega$ and $d\omega=0\}$.
\end{lemma}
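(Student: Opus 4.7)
The plan is to observe that the extra condition $d\ast\omega=0$ in the definition of Maxw becomes automatic once we impose $J\omega=\pm i\omega$, so the right-hand side is really the same set as the left-hand side. Concretely, recall that by definition $J\omega = \ast\omega$ for every $\omega \in \Omega^{2}(U(2))_{\mathbb{C}}$.

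The forward inclusion is immediate from the definition: Maxw$_{\pm i}$ was introduced as the $\pm i$-eigenspace of $J$ acting on Maxw, so any element $\omega \in \text{Maxw}_{\pm i}$ satisfies $d\omega = 0$ (because $\omega \in $ Maxw) and $J\omega = \pm i\omega$ (because $\omega$ lies in the appropriate eigenspace).

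For the reverse inclusion, suppose $\omega \in \Omega^{2}(U(2))_{\mathbb{C}}$ with $J\omega = \pm i\omega$ and $d\omega = 0$. Then $\ast\omega = J\omega = \pm i\omega$, so
\[
d\ast\omega = d(\pm i\,\omega) = \pm i\, d\omega = 0.
\]
Thus both Maxwell equations hold, giving $\omega \in $ Maxw, and then $\omega$ lies in the $\pm i$-eigenspace of $J$ acting on Maxw, i.e., $\omega \in $ Maxw$_{\pm i}$.

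There is essentially no obstacle here; the only point worth being careful about is that the eigenspace decomposition Maxw $=$ Maxw$_{i} \oplus$ Maxw$_{-i}$ noted just above the lemma makes sense because $J$ preserves Maxw (which in turn uses $J^{2} = -I$ from Lemma 6), and that $J = \ast$ on $\Omega^{2}$ by definition, so the implication $d\omega = 0 \Rightarrow d\ast\omega = 0$ on a $J$-eigenvector is just scalar multiplication through the exterior derivative.
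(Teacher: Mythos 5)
Your proof is correct and is exactly the argument the paper has in mind (the paper states this lemma without proof as an immediate observation): since $J=\ast$ on $2$-forms, on a $J$-eigenvector the condition $d\ast\omega=0$ reduces to $\pm i\,d\omega=0$, so the self-duality plus closedness already gives both Maxwell equations. Nothing is missing.
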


We can now eliminate some isotypic components of Maxw.

\begin{lemma}
{\rm{Maxw}}$_{k,k,r}=0$ for all $k\in\mathbb{Z}
_{\geq0}$ and all $r\in\mathbb{Z}$.
\end{lemma}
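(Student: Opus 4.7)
The plan is to combine the explicit description of $(\ker d_{|\Omega^{2}(U(2))_{\mathbb{C}}})_{k,k,r}$ from Lemma 7 with the pointwise structure of the $\pm i$-eigenspaces of $J$ from Lemma 6. Since Maxw $=$ Maxw$_{i}\oplus$ Maxw$_{-i}$, and each Maxw$_{\pm i}$ is contained in $\ker d_{|\Omega^{2}(U(2))_{\mathbb{C}}}$, it suffices to show that any $\omega$ in the $(k,k,r)$-isotypic component of $\ker d_{|\Omega^{2}(U(2))_{\mathbb{C}}}$ which also satisfies $J\omega = \pm i\omega$ must vanish. For parities $r\not\equiv k\bmod 2$ the isotypic is already zero by Lemma~7, so we may assume $r\equiv k\bmod 2$.

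First I would take an arbitrary $\omega\in(\ker d_{|\Omega^{2}(U(2))_{\mathbb{C}}})_{k,k,r}$ and, by Lemma 7, write it in the form $\omega = \chi_{r}\,df_{0}\wedge\alpha_{4}$ with $f_{0}\in C^{\infty}(U(2),\mathbb{C})_{k,k,0}$. Expanding $df_{0}=\sum_{j=1}^{4}(x_{j}f_{0})\alpha_{j}$ and using $\alpha_{4}\wedge\alpha_{4}=0$ gives
\[
\omega \;=\; \chi_{r}\sum_{j=1}^{3}(x_{j}f_{0})\,\alpha_{j}\wedge\alpha_{4},
\]
so at every point $u\in U(2)$, $\omega_{u}$ lies in the three-dimensional subspace
$W_{u}\subset\wedge^{2}T^{\ast}(U(2))_{u}\otimes\mathbb{C}$ spanned by $\alpha_{1}\wedge\alpha_{4}$, $\alpha_{2}\wedge\alpha_{4}$, $\alpha_{3}\wedge\alpha_{4}$.

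The key observation is now immediate from Lemma 6: inspecting the bases $\mathcal{B}_{\pm i}$ shows that in each basis vector of $\mathcal{B}_{\pm i}$ the coefficient of an $\alpha_{\ell}\wedge\alpha_{4}$ term uniquely determines a matching nonzero coefficient of an $\alpha_{j}\wedge\alpha_{k}$ term with $j,k\le 3$. Consequently the projection of the $\pm i$-eigenspace of $J$ onto $W_{u}$ along the complementary subspace $\mathrm{span}\{\alpha_{j}\wedge\alpha_{k}:1\le j<k\le 3\}$ is injective, and only the zero vector of the eigenspace lies entirely inside $W_{u}$. Thus if $\omega$ sits in $W_{u}$ at every point \emph{and} is a $\pm i$-eigenvector of $J$, then $\omega_{u}=0$ for all $u$, so $\omega = 0$.

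Applied to any $\omega\in$ Maxw$_{\pm i, k, k, r}$, this forces $\omega = 0$, and since Maxw $=$ Maxw$_{i}\oplus$ Maxw$_{-i}$ it follows that Maxw$_{k,k,r}=0$ for every $k\ge 0$ and $r\in\mathbb{Z}$. This is a short calculational argument with no real obstacle: the substance is entirely packaged in Lemma 7 (which eliminates the $\alpha_{j}\wedge\alpha_{k}$ components) and Lemma 6 (which forbids eigenvectors of $J$ from being concentrated on $\alpha_{j}\wedge\alpha_{4}$ terms).
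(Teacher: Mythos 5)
Your proof is correct and is essentially the paper's own argument: Lemma 7 forces any element of the $(k,k,r)$-isotypic component of $\ker d$ to have the form $\beta\wedge\alpha_{4}$, i.e.\ to lie pointwise in the span of the $\alpha_{j}\wedge\alpha_{4}$, while Lemma 6 shows no nonzero $\pm i$-eigenvector of $J$ can lie in that span, so the component of Maxw vanishes. The only cosmetic difference is that the paper reduces to an eigenvector of $J$ by invoking irreducibility of the isotypic component, whereas you split $\omega$ into its $J$-eigencomponents directly; the substance is identical.
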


\begin{proof}
If $\omega\in$ {\rm{Maxw}}$_{k,k,r}$, then since 
$$
{\rm{Maxw}}_{k,k,r}= {\rm{Maxw}}_{k,k,r}\cap
{\rm{Maxw}}_{i}\oplus{\rm{ Maxw}}_{k,k,r}\cap{\rm{ Maxw}}_{-i}
$$
 with each of the summands
$K$-invariant and since each isotypic component is irreducible we see that
$J\omega=i\omega$ of $J\omega=-i\omega$. In either case Lemma 6 implies that
$\omega$ is not an element of $\Omega^{1}(U(2))_{
\mathbb{C}}\wedge\alpha_{4}$. But Lemma 7 implies that it must be of that form. This
implies that $\omega=0$.
\end{proof}

\bigskip

We are finally ready to give the isotypic components of Maxw.

\begin{theorem}
{\rm{Maxw}}$_{p,q,r}$ is nonzero if and only if $(p,q,r)$ is in one the following forms
\[
(k+2,k,k+2),(k+2,k,-(k+2)),(k,k+2,k+2),(k,k+2,-(k+2)).
\]
If it is nonzero it is irreducible. Moreover, {\rm{ Maxw}}$_{k+2,k,k+2}$ and {\rm{Maxw}}$_{k,k+2,-k-2}$ are contained in
{\rm{Maxw}}$_{i}$ and {\rm{Maxw}}$_{k+2,k,-k-2}$ and {\rm{Maxw}}$_{k,k+2,k+2}$ are contained in
{\rm{Maxw}}$_{-i}$.
\end{theorem}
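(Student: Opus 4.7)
My plan is to combine Lemmas 4 and 8 with an explicit computation on a highest-weight potential, then transport to the remaining case by a symmetry.

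By Lemma 4, the $K_{1}$-types appearing in $\ker d_{|\Omega^{2}(U(2))_{\mathbb{C}}}$ are exactly $F^{k+2,k,r}$, $F^{k,k+2,r}$ and $F^{k+1,k+1,r}$ (with $r\equiv k\bmod 2$), each of multiplicity one, and Lemma 8 eliminates the last family from Maxw. So any nonzero $K$-isotypic component of Maxw has $|p-q|=2$, and being a $K$-submodule of the multiplicity-one isotypic of $\ker d$ it either vanishes or equals that full irreducible isotypic. Since $J$ commutes with the $K$-action and preserves Maxw (the Hodge star exchanges $\ker d$ and $\ker d\ast$, hence preserves their intersection) while $J^{2}=-I$, Schur's lemma forces $J$ to act on each nonzero $\mathrm{Maxw}_{p,q,r}$ as a scalar $\pm i$. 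Therefore $\mathrm{Maxw}_{p,q,r}\neq 0$ if and only if the entire $K$-submodule $(\ker d)_{p,q,r}$ lies inside $\Omega^{2}(U(2))_{+i}$ or inside $\Omega^{2}(U(2))_{-i}$; in the first case it sits in $\mathrm{Maxw}_{+i}$, in the second in $\mathrm{Maxw}_{-i}$.

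To decide which, I would use the isomorphism $d\colon \Omega^{1}(U(2))_{p,q,r}\to(\ker d)_{p,q,r}$ of Lemma 4 and evaluate it on a highest-weight generator. The Frobenius argument of Section 4 shows that for $p\neq q$, $\Omega^{1}_{p,q,r}$ comes from the $F^{2,0}$-summand of the fiber $\mathfrak{u}(2)^{\ast}_{\mathbb{C}}=F^{0,0}\oplus F^{2,0}$, so its highest-weight vector has the form $\beta=f\cdot\eta$, where $\eta=\alpha_{2}-i\alpha_{3}$ is a highest weight of $\mathrm{span}\{\alpha_{1},\alpha_{2},\alpha_{3}\}$ under the second-factor adjoint action and $f$ is a Peter--Weyl highest-weight matrix coefficient on $U(2)$. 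For $(p,q)=(k,k+2)$ one takes $f(u)=u_{21}^{k}(\det u)^{(r-k)/2}$; a direct calculation gives $x_{1}f=ikf$, $x_{4}f=irf$, and $(x_{3}+ix_{2})f=0$. Using the relations (1) of Section 5 one finds $d(\alpha_{2}-i\alpha_{3})=2i\alpha_{1}\wedge(\alpha_{2}-i\alpha_{3})$, hence
\[
d\beta=(df+2if\alpha_{1})\wedge(\alpha_{2}-i\alpha_{3}),
\]
and expanding in the basis $\mathcal{B}_{\pm i}$ of Lemma 5 yields
\[
d\beta=\tfrac{((k+2)-r)f}{2}\,(\mu_{3}^{+}+i\mu_{2}^{+})-\tfrac{((k+2)+r)f}{2}\,(\mu_{3}^{-}+i\mu_{2}^{-}),
\]
where $\mu_{j}^{\pm}$ are the elements of $\mathcal{B}_{\pm i}$. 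Thus $d\beta\in\Omega^{2}_{+i}$ iff $r=-(k+2)$, $d\beta\in\Omega^{2}_{-i}$ iff $r=k+2$, and for any other $r\equiv k\bmod 2$ both eigenspace projections are nonzero, so $\mathrm{Maxw}_{k,k+2,r}=0$.

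For the remaining family $(p,q)=(k+2,k)$ I would appeal to the inversion $\phi(u)=u^{-1}$. Using bi-invariance of the metric and $d\phi_{I}=-I$ one checks that $\phi$ is an orientation-preserving isometry, so $\phi^{\ast}$ commutes with $J$ and preserves each $\Omega^{2}(U(2))_{\pm i}$; moreover $\phi\circ(g_{1},g_{2})=(g_{2},g_{1})\circ\phi$, and the factor-swap carries the $K$-irrep $F^{p,q,r}\cong F^{p,-r}\boxtimes F^{q,r}$ to $F^{q,p,-r}$. Applying $\phi^{\ast}$ to the conclusions of the previous paragraph gives $\mathrm{Maxw}_{k+2,k,k+2}\subset\mathrm{Maxw}_{+i}$ and $\mathrm{Maxw}_{k+2,k,-(k+2)}\subset\mathrm{Maxw}_{-i}$, completing the list. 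Irreducibility of each nonzero $\mathrm{Maxw}_{p,q,r}$ is immediate from the multiplicity-one statement of Lemma 4. The main obstacle is the explicit $d\beta$ expansion: correctly identifying the highest-weight vector $\eta$ (it is $\alpha_{2}-i\alpha_{3}$, not $\alpha_{2}+i\alpha_{3}$, which depends on the convention for positive roots) and tracking signs through the change of basis from $\{\alpha_{i}\wedge\alpha_{j}\}$ to $\mathcal{B}_{\pm i}$ is the only genuine bookkeeping step; once done, the factored form displayed above makes the sign condition $r=\pm(k+2)$ transparent.
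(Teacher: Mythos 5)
Your proposal is correct and follows essentially the same route as the paper's proof: the multiplicity-one/Schur reduction to the condition that $d$ of the highest-weight potential be a $J$-eigenvector is the paper's claim (1), your computation of $d\beta$ for $\beta=\psi_{k,r}\alpha_f^{L}$ (written there as $il\psi_{k,l}\alpha_4\wedge\alpha_f^{L}+(k+2)\psi_{k,l}\alpha_h^{L}\wedge\alpha_f^{L}$) matches the paper's claim (2), and the transport of the $(k+2,k,l)$ family by the inversion $u\mapsto u^{-1}$ is exactly the paper's argument with the map $\eta$, including the swap $F^{p,q,r}\mapsto F^{q,p,-r}$ and commutation with $J$. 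The only differences are cosmetic (your factored expansion in the $J$-eigenbasis, and citation slips: the eigenbasis lemma is Lemma 6, and $\mathrm{Maxw}_{k,k,r}=0$ is Lemma 10, not Lemma 8).
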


The proof will occupy the rest of the section. If $x\in\mathfrak{s}l(2,\mathbb{C})$ 
with $x=v+iw,v,w\in\mathfrak{s}u(2)$, then we define the left invariant
vector field $x^{L}f(u)=\frac{d}{dt}(f(ue^{tv})+if(ue^{tw}))_{|t=0}$ and
$x^{R}f(u)=\frac{d}{dt}(f(e^{-tv}u)+if(e^{-tw}v))_{|t=0}$. We will think of
these vector fields as being on $SU(2)$ or $SU(2)/\{\pm I\}=U(2)/S^{1}%
I=PSU(2)$. We set $e=\frac{1}{2}(x_{2}-ix_{3})$, $f=-\frac{1}{2}(x_{2}%
+ix_{3})$ and $h=-ix_{1}.$ Then $e,f,h$ form the standard basis of
$\mathfrak{s}l(2,\mathbb{C})$.

Let $\xi_{e}^{L},\xi_{f}^{L},\xi_{h}^{L}$ (respectively, $\xi_{e}^{R},\xi
_{f}^{R},\xi_{h}^{R}$) be the left (resp. right) invariant one-forms  on
$PSU(2)$ that form a dual basis to $e^{L},f^{L},h^{L}$ (respec\-tively, $e^{R},f^{R},h^{R}$). Let $p:U(2)\rightarrow PSU(2)$ be the obvious quotient
homomorphism and let $\alpha_{b}^{a}=p^{\ast}\xi_{b}^{a}$ for $a=L$ or $R$ and
$b=e,f$ or $h$. Now $\,((u,v)^{-1})^{\ast}\alpha_{b}^{L}(x^{L})=\alpha_{b}%
^{L}(\Ad(v)^{-1}x^{L})$ and $((u,v)^{-1})^{\ast}\alpha_{b}^{R}(x^{R}%
)=\alpha_{b}^{R}(\Ad(u)^{-1}x^{R})$. Thus, if
\[
g=\left(  \left[
\begin{array}
[c]{cc}%
z & 0\\
0 & z^{-1}%
\end{array}
\right]  ,\left[
\begin{array}
[c]{cc}%
w & 0\\
0 & w^{-1}%
\end{array}
\right]  \right)
\]
with $z,w\in S^{1}$, then
\[
(g^{-1})^{\ast}\alpha_{e}^{L}=w^{-2}\alpha_{e}^{L},(g^{-1})^{\ast}\alpha
_{f}^{L}=w^{2}\alpha_{f}^{L},(g^{-1})^{\ast}\alpha_{h}^{L}=\alpha_{h}^{L}%
\]
and%
\[
(g^{-1})^{\ast}\alpha_{e}^{R}=u^{-2}\alpha_{e}^{R},(g^{-1})^{\ast}\alpha
_{f}^{R}=u^{2}\alpha_{f}^{R},(g^{-1})^{\ast}\alpha_{h}^{R}=\alpha_{h}^{R}.
\]
This implies that%
$$
{\rm{Span}}_{\mathbb{C}
}(\alpha_{e}^{L},\alpha_{f}^{L},\alpha_{h}^{L})=\Omega^{1}(U(2))_{0,2,0}%
$$
and
$$
{\rm Span}_{\mathbb{C}
}(\alpha_{e}^{R},\alpha_{f}^{R},\alpha_{h}^{R})=\Omega^{1}(U(2))_{2,0,0}.
$$
Also relative to the positive root system $g\rightarrow u^{2},g\rightarrow
w^{2}$ the highest weight space of $\Omega^{1}(U(2))_{0,2,0}$ in 
$\mathbb{C}\alpha_{f}^{L}$ and that of $\Omega^{1}(U(2))_{2,0,0}$ is 
$\mathbb{C}\alpha_{f}^{R}$.

We can now describe highest weight vectors for the isotypic components
$\Omega^{1}(U(2))_{k,k+2,l}$ and $\Omega^{1}(U(2))_{k+2,k,l}$. Let $e_{1}$ and
$e_{2}$ be the standard basis of 
$\mathbb{C}^{2}$. Fix a $U(2)$ invariant inner product 
$\ \left\langle\ldots,\ldots\right\rangle $
 on each space $S^{k}(\mathbb{C}^{2})$. Define%
\[
\phi_{k}(u)=\left\langle S^{k}(u)e_{1}^{k},e_{2}^{k}\right\rangle .
\]
Then $\phi_{k}$ is a highest weight vector for $C^{\infty}(U(2))_{k,k,k}%
=\Omega^{0}(U(2))_{k,k,k}$. Also we define $\chi_{l}(u,z)=z^{l}$ for $u\in
SU(2)$ and $z\in S^{1}$. We note that if $l\equiv k\operatorname{mod}2$, then
\[
\psi_{k,l}(uz)=z^{l-k}\phi_{k}(uz)=\chi_{l-k}(u,z)\phi_{k}(uz)
\]
is defined and is a highest weight vector for $\Omega^{0}(U(2))_{k,k,l}$. This
implies that $\psi_{k,l}\alpha_{f}^{L}$ is a highest weight vector for
$\Omega^{1}(U(2))_{k,k+2,l}$ and $\psi_{k,l}\alpha_{f}^{R}$ is a highest
weight vector for $\Omega^{1}(U(2))_{k+2,k,l}$. We have shown that Maxw is
multiplicity free and Maxw $=$ Maxw$_{i}$ $\oplus$ Maxw$_{-i}$. We have also
proved that Maxw$_{k,k+2,l}=d\Omega^{1}(U(2))_{k,k+2,l}$ and Maxw$_{k+2,k,l}%
=d\Omega^{1}(U(2))_{k+2,k,l}$. We have proved that

\medskip

\noindent (1) Maxw$_{k,k+2,l}\neq0$ if and only if $Jd\psi_{k,l}\alpha_{f}%
^{L}=\lambda J\psi_{k,l}\alpha_{f}^{L}$ with $\lambda\in\{\pm i\}$ and
Maxw$_{k+2,k,l}\neq0$ if and only if $Jd\psi_{k,l}\alpha_{f}^{R}=\lambda
J\psi_{k,l}\alpha_{f}^{R}$ with $\lambda\in\{\pm i\}$.

\medskip

We are now left with a computation. One checks as above (using $d\alpha
_{f}^{L}(X^{L},Y^{L})=-\alpha_{f}^{L}([X,Y]^{L})$)
\[
d\alpha_{f}^{L}=2\alpha_{h}^{L}\wedge\alpha_{f}^{L};%
\]
we therefore have%
\[
d(\psi_{k,l}\alpha_{f}^{L})=il\psi_{k,l}\alpha_{4}\wedge\alpha_{f}^{L}%
+(k+2)\psi_{k,l}\alpha_{h}^{L}\wedge\alpha_{f}^{L}.
\]

We consider $il\alpha_{4}\wedge\alpha_{f}^{L}+(k+2)\alpha_{h}^{L}\wedge
\alpha_{f}^{L}$. We observe that $\alpha_{h}^{L}=i\alpha_{1}$ and $\alpha
_{f}^{L}=-(\alpha_{2}-i\alpha_{3})$. Thus (using the calculations leading to
Lemma 6) the right-hand side of the equation above is equal to

\[
li\alpha_{4}\wedge\alpha_{2}-l\alpha_{4}\wedge\alpha_{3}+i(k+2)\alpha
_{1}\wedge(\alpha_{2}-i\alpha_{3})
\]
which equals
\[
(-li\alpha_{2}\wedge\alpha_{4}+(k+2)J\alpha_{2}\wedge\alpha_{4})+(-l\alpha
_{1}\wedge\alpha_{4}-i(k+2)J\alpha_{1}\wedge\alpha_{4}.
\]
We therefore see that if $l>0$ then this expression is an element of $\Omega
^{2}(U(2))_{-i}+(l-k-2)(-i\alpha_{2}\wedge\alpha_{4}-(l-k-2)\alpha_{1}%
\wedge\alpha_{4})$. If $l\leq0$ then it is an element of $\Omega^{2}(U(2))_{i}%
+(l+k+2)(-i\alpha_{2}\wedge\alpha_{4}-(l+k+2)\alpha_{1}\wedge\alpha_{4})$. Thus we have

\medskip

\noindent (2) Maxw$_{k,k+2,l}\neq0$ only if $l=k+2$ or $l=-(k+2)$. Furthermore,
Maxw$_{k,k+2,k+2}\subset$ Maxw$_{-i}$ and Maxw$_{k,k+2,-(k+2)}\subset$
Maxw$_{i}$.

\medskip

We note that everything that we have done could have been done with right-invariant vector fields to complete the proof of the theorem$.$ However we
will proceed in a different way. Let $\eta:U(2)\rightarrow U(2)$ be defined by
$\eta(u)=u^{-1}$. Then for $x\in\mathfrak{u}(2)$, we have $d\eta_{u}(x_{u}%
^{L})=x_{u^{-1}}^{R}$. This implies that
\[
\left\langle d\eta_{u}(x_{u}^{L}),d\eta_{u}(x_{u}^{L})\right\rangle _{u^{-1}%
}=\left\langle x_{u^{-1}}^{R},x_{u^{-1}}^{R}\right\rangle _{u^{-1}}=-\det x
\]
since%
\[
\left\langle x_{u^{-1}}^{R},x_{u^{-1}}^{R}\right\rangle _{u^{-1}}=\left\langle
dR(u^{-1})_{I}(x_{I}^{R}),dR(u^{-1})_{I}(x_{I}^{R})\right\rangle _{u^{-1}%
}=\left\langle x_{I}^{R},x_{I}^{R}\right\rangle _{I}.
\]
This proves that $\eta$ is an isometry. It also implies that
\[
(\eta^{\ast}\alpha_{f}^{R})_{u}(x_{u}^{L})=(\alpha_{f}^{R})_{u^{-1}}(d\eta
_{u}(x_{u}^{L}))=(\alpha_{f}^{R})_{u^{-1}}(x_{u^{-1}}^{R}).
\]
Hence $\eta^{\ast}\alpha_{f}^{R}=\alpha_{f}^{L}$. 
Now $\eta^{\ast}d(\psi_{k,l}\alpha_{f}^{R})\in\Omega^{2}(U(2))_{k,k+2,-l}$ since $\eta^{\ast
}\phi_{k}$ is a highest weight vector for $C^{\infty}(U(2))_{k,k,-k} $ . We
also note that $\eta^{\ast}\gamma=\gamma$. Thus $\eta^{\ast}$Maxw $=$ Maxw.
Hence, if $d(\psi_{k,l}\alpha_{f}^{R})\in$ Maxw$_{k+2,k,l}$,  then $\eta^{\ast
}d(\psi_{k,l}\alpha_{f}^{R})\in$ Maxw$_{k,k+2,-l}$. So, if $l>0$ then, we must
have $l=k+2$ and if $l\leq0$, then $l=-k-2$. Since $\eta^{\ast}$ commutes with
$J$ the last assertion also follows.

\begin{remark}\emph{We have }
\begin{equation} 
\eta^{\ast}{\rm{Maxw}}_{k+2,k,k+2}={\rm{Maxw}}_{k,k+2,-k-2},
\end{equation}
\begin{equation}
\eta^{\ast}{\rm{Maxw}}_{k+2,k,-k-2}= {\rm{Maxw}}_{k,k+2,k+2}
 {\emph{ and }} (\eta^{\ast})^{2}=I.
\end{equation}
\end{remark}

\section{The Hermitian form}

We retain the notation of the previous sections.

We note that%
\[
H_{3}(U(2),\mathbb{R})=\mathbb{R}.
\]
The form $\nu=\alpha_{1}\wedge\alpha_{2}\wedge\alpha_{3}$ restricted to
$SU(2)$ satisfies%
\[
\int_{SU(2)}\nu=2\pi^{2}
\]
and $d\nu=0$. This implies that the class of $SU(2)$ in the third homology
over $\mathbb{R}$ is a basis. (In fact it is well known that this is true over 
$\mathbb{Z}$). We note that this implies, in particular, that if $\omega\in\Omega
^{3}(U(3))_{\mathbb{C}}$ satisfies $d\omega=0$ then if $M$ is a compact submanifold such that there
exists a smooth family of diffeomorphisms of $U(2)$, $\Phi_{t}$ such that
$\Phi_{0}=I$ and $\Phi_{1}(SU(2))=M$, then%
\[
\int_{SU(2)}\omega=\int_{M}\omega.
\]
In particular we have (since U(2,2) is connected),

\begin{lemma}
If $g\in U(2,2)$ and if $\omega\in\Omega^{3}(U(3))_{\mathbb{C}}$ 
satisfies $d\omega=0$, then $\int_{SU(2)}\omega=\int_{gSU(2)}\omega
=\int_{SU(2)}g^{\ast}\omega.$
\end{lemma}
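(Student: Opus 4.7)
The plan is to reduce the claim to the observation already made in the paragraph preceding the statement: if $\omega \in \Omega^3(U(2))_\mathbb{C}$ is closed and $\Phi_t$ is a smooth family of diffeomorphisms of $U(2)$ with $\Phi_0 = I$, then $\int_{SU(2)}\omega = \int_{\Phi_1(SU(2))}\omega$. The two equalities in the lemma then follow by (i) producing such a $\Phi_t$ from connectedness of $G = U(2,2)$, and (ii) applying change of variables to the diffeomorphism $g$ itself.

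First I would use the fact that $U(2,2)$ is connected (as a matrix Lie group preserving an indefinite Hermitian form of signature $(2,2)$, its identity component is all of it) to choose a smooth path $t \mapsto g_t$ in $U(2,2)$ with $g_0 = e$ and $g_1 = g$. By Section 2 each $g_t$ acts smoothly on $U(2)$ by linear fractional transformations, and the inverses $g_t^{-1}$ yield smooth inverse maps, so $\Phi_t(x) := g_t \cdot x$ defines a smooth family of diffeomorphisms of $U(2)$ with $\Phi_0 = I$ and $\Phi_1(SU(2)) = g \cdot SU(2)$. The preceding discussion (using that $[SU(2)]$ generates $H_3(U(2),\mathbb{R})$ together with Stokes' theorem applied to the 4-chain swept out by the homotopy) then gives
\[
\int_{SU(2)} \omega = \int_{gSU(2)} \omega,
\]
which is the first equality.

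For the second equality, note that $\Phi_1 = g$ is a diffeomorphism of $U(2)$ that sends $SU(2)$ onto $gSU(2)$. The standard change of variables formula for integrals of top-degree forms over oriented submanifolds yields
\[
\int_{gSU(2)} \omega = \int_{\Phi_1(SU(2))} \omega = \int_{SU(2)} \Phi_1^*\omega = \int_{SU(2)} g^*\omega,
\]
completing the proof.

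There is no real obstacle here; the only point requiring any care is verifying that the homotopy $\Phi_t$ restricts to an orientation-preserving parametrization in the change of variables step (which follows since $\Phi_0 = I$ is orientation preserving and $\Phi_t$ varies continuously in a connected family of diffeomorphisms). The rest is a direct invocation of the already-established remark together with functoriality of integration under diffeomorphisms.
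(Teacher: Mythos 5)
Your proposal is correct and follows essentially the same route as the paper: the lemma is stated there as an immediate consequence of the preceding homological observation together with the connectedness of $U(2,2)$, exactly as you argue (path $g_t$ from the identity giving the smooth family $\Phi_t$, then change of variables for the second equality). No further comment is needed.
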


We will apply this observation to define $U(2,2)$ invariant sesquilinear forms
on the spaces {\rm{Maxw}}$_{\pm i}$.

\begin{lemma}
Let $\alpha\in\Omega^{1}(U(2))_{\mathbb{C}}$ 
and $\omega\in\Omega^{2}(U(2))_{\mathbb{C}}$ 
be such that $\omega,d\alpha\in \rm{Maxw}_{i}$ (resp.~{\rm{ Maxw}}$_{-i}$). Then
$d(\alpha\wedge\overline{\omega})=0$.
\end{lemma}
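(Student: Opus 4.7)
The plan is to reduce the identity to the orthogonality statement in Lemma 5 via the Leibniz rule, using that complex conjugation swaps the two eigenspaces of $J$.

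First I would apply the graded Leibniz rule. Since $\alpha$ is a $1$-form,
\[
d(\alpha\wedge\overline{\omega}) = d\alpha\wedge\overline{\omega} - \alpha\wedge d\overline{\omega}.
\]
Because $\omega\in\mathrm{Maxw}_{\pm i}\subset\mathrm{Maxw}$, Maxwell's first equation gives $d\omega=0$, and since $d$ is defined over $\mathbb{R}$ we also get $d\overline{\omega}=\overline{d\omega}=0$. Thus the second term drops out, and it suffices to show $d\alpha\wedge\overline{\omega}=0$.

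Second, I would track how complex conjugation interacts with $J$. The operator $J$ is real, i.e.\ it preserves $\Omega^{2}(U(2))$, so $J\omega = \pm i\,\omega$ implies $J\overline{\omega} = \mp i\,\overline{\omega}$; in other words, complex conjugation interchanges $\Omega^{2}(U(2))_{i}$ and $\Omega^{2}(U(2))_{-i}$. Combined with the hypothesis, this gives $d\alpha\in\Omega^{2}(U(2))_{\pm i}$ and $\overline{\omega}\in\Omega^{2}(U(2))_{\mp i}$.

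Third, I would invoke the pointwise structure from Lemmas 5 and 6. Lemma 6 expresses every element of $\Omega^{2}(U(2))_{\pm i}$ as a $C^{\infty}(U(2),\mathbb{C})$-linear combination of the basis $\mathcal{B}_{\pm i}$, so at each $u\in U(2)$ both $(d\alpha)_u$ and $\overline{\omega}_u$ are expressible as linear combinations of elements of $\mathcal{B}_{\pm i}$ and $\mathcal{B}_{\mp i}$ respectively. Lemma 5 then tells us that any wedge product $\mu\wedge\nu$ with $\mu\in\mathcal{B}_i$, $\nu\in\mathcal{B}_{-i}$ vanishes. Distributing, we conclude $d\alpha\wedge\overline{\omega}=0$ pointwise, hence as a global $4$-form, and therefore $d(\alpha\wedge\overline{\omega})=0$.

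No step here looks like a genuine obstacle: the argument is essentially a bookkeeping exercise combining Maxwell's equations with the eigenspace decomposition of $J$. The only thing to be careful about is the sign in Leibniz and the fact that conjugation flips the $J$-eigenvalue, both of which are immediate from the setup.
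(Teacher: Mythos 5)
Your argument is correct and is essentially the paper's own proof: Leibniz plus $d\overline{\omega}=\overline{d\omega}=0$ reduces the claim to $d\alpha\wedge\overline{\omega}=0$, which follows because conjugation (since $J$ is real) flips the $J$-eigenvalue and the wedge of the $+i$ and $-i$ eigenspaces vanishes pointwise by the basis lemma. Only a bookkeeping note: the pointwise wedge-vanishing and the $C^{\infty}$-module decomposition are Lemmas 6 and 7 in the paper's numbering, not 5 and 6.
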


\begin{proof}
If $d\alpha\in$ Maxw$_{i}$ then $\overline{\omega}\in {\rm{Maxw}}_{-i}$. We note
that Maxwell's equations imply that $d\overline{\omega}=0$. Hence
$d(\alpha\wedge\overline{\omega})=d\alpha\wedge\overline{\omega}$ and Lemma 6
implies that \ Maxw$_{i}\wedge$ Maxw$_{-i}=0$. Obviously the same argument
works for $-i$.
\end{proof}

\begin{proposition}
If $\mu,\omega\in {\rm{Maxw}}_{i}$ (or  \rm{Maxw}$_{-i}$), there exists $\alpha
\in\Omega^{1}(U(2))_{\mathbb{C}}$ 
such that $d\alpha=\omega$. The expression%
\[
\int_{SU(2)}\alpha\wedge\overline{\mu}%
\]
depends only on $\omega$ and $\mu$ (and not on the choice of $\alpha$).
Furthermore, the integral defines a Hermitian form $\left\langle \omega
,\mu\right\rangle $ on  \rm{Maxw}$_{i}$ (or \rm{Maxw}$_{-i}$) that satisfies
$\left\langle g^{\ast}\omega,g^{\ast}\mu\right\rangle =\left\langle \omega
,\mu\right\rangle $ for all $g\in U(2,2)$.
\end{proposition}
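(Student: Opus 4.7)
The plan is to prove the four assertions in order: existence of a primitive $\alpha$, independence of the integral on the choice of $\alpha$, the Hermitian property, and $U(2,2)$-invariance.

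First I would note that $\omega$ is closed and $H^{2}(U(2),\mathbb{C})=0$, so the exact sequence (2) in Section 4 forces $\omega = d\alpha$ for some $\alpha \in \Omega^{1}(U(2))_{\mathbb{C}}$. This gives existence.

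Next, to show the integral depends only on $\omega$ and $\mu$, I would suppose $d\alpha = d\alpha' = \omega$ and apply exact sequence (1) of Section 4: the kernel of $d$ on $\Omega^{1}(U(2))_{\mathbb{C}}$ modulo $d C^{\infty}(U(2),\mathbb{C})$ is one-dimensional and spanned by the class of $\beta := \det^{\ast}\frac{dz}{z}$. Hence $\alpha-\alpha' = df + c\beta$ for some $f \in C^{\infty}(U(2),\mathbb{C})$ and $c\in\mathbb{C}$. Since $d\bar{\mu}=0$ (Maxwell's equations applied to $\bar{\mu}$), we have $df\wedge\bar{\mu} = d(f\bar{\mu})$, which integrates to $0$ over the closed manifold $SU(2)$ by Stokes' theorem. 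For the remaining contribution, $\det$ restricted to $SU(2)$ is identically $1$, so $\beta|_{SU(2)} = 0$ and hence $\int_{SU(2)}\beta\wedge\bar{\mu}=0$ as well. This proves independence of the choice.

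For the Hermitian property, I would choose $\beta'\in\Omega^{1}(U(2))_{\mathbb{C}}$ with $d\beta' = \mu$ and compute
\[
d(\bar{\alpha}\wedge\beta') \;=\; d\bar{\alpha}\wedge\beta' - \bar{\alpha}\wedge d\beta' \;=\; \bar{\omega}\wedge\beta' - \bar{\alpha}\wedge\mu.
\]
Applying Stokes on $SU(2)$ and using that $\bar{\omega}\wedge\beta' = \beta'\wedge\bar{\omega}$ (degrees $2$ and $1$), I obtain $\int_{SU(2)}\bar{\alpha}\wedge\mu = \int_{SU(2)}\beta'\wedge\bar{\omega}$, which is exactly $\overline{\langle\omega,\mu\rangle} = \langle\mu,\omega\rangle$.

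For $U(2,2)$-invariance, given $g\in G$, $g^{\ast}\alpha$ is a primitive of $g^{\ast}\omega$, so $\langle g^{\ast}\omega,g^{\ast}\mu\rangle = \int_{SU(2)} g^{\ast}(\alpha\wedge\bar{\mu})$. By Lemma 13 the $3$-form $\alpha\wedge\bar{\mu}$ is closed, so Lemma 12 applies and yields $\int_{SU(2)} g^{\ast}(\alpha\wedge\bar{\mu}) = \int_{SU(2)}\alpha\wedge\bar{\mu} = \langle\omega,\mu\rangle$.

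The main obstacle I expect is the well-definedness step, specifically handling the nontrivial $H^{1}$ class; everything else reduces to careful application of Stokes' theorem together with $d\bar{\mu}=0$ and Lemmas 12 and 13. The trick is recognizing that the $H^{1}$-generator $\det^{\ast}\frac{dz}{z}$ vanishes identically upon restriction to the fiber cycle $SU(2)$, which neutralizes the one-dimensional ambiguity in the choice of $\alpha$.
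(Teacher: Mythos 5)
Your proposal is correct and follows essentially the same route as the paper: existence from sequence (2) and $H^{2}(U(2),\mathbb{C})=0$, well-definedness by splitting the ambiguity into an exact form plus a multiple of the $H^{1}$-generator that restricts to zero on $SU(2)$, the Hermitian property by Stokes, and invariance via Lemmas 12 and 13. The only cosmetic difference is that you use $\det^{\ast}\frac{dz}{z}$ as the generator (vanishing on $SU(2)$ since $\det\equiv 1$ there), while the paper uses the left-invariant form dual to the central direction and its period over $S^{1}I$ — the same mechanism.
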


\begin{proof}
Suppose that $\beta\in\Omega^{1}(U(2))_{\mathbb{C}}$ 
is such that $d\beta=0$ and $\int_{S^{1}I}\beta=0$. Then since $H_{1}(U(2),\mathbb{C})$ 
is spanned by the class of $S^{1}I$, de Rham's theorem implies that there
exists $f\in C^{\infty}(U(2),\mathbb{C})$ 
such that $df=\beta$. Set $\iota(\beta)=\int_{S^{1}I}\beta$. We note that%
\[
\iota(\alpha_{1})=2\pi.
\]
We also note that if $\nu\in\Omega^{2}(U(2))$ then $(\alpha_{1}\wedge
\nu)_{|SU(2)}=0.$

We  observed that if $\omega\in$ Maxw$_{\pm i}$, then there exists
$\alpha\in\Omega^{1}(U(2))_{\mathbb{C}}$ 
such that $d\alpha=\omega$. If $d\beta=\omega$ then $d(\beta-\alpha)=0$ and
$\int_{S^{1}I}(\beta-\alpha-\frac{\iota(\beta-\alpha)}{2\pi}\alpha_{1})=0$ so
$\beta-\alpha-\frac{\iota(\beta-\alpha)}{2\pi}\alpha_{1}=-df$ with $f\in
C^{\infty}(U(2),\mathbb{C})$. 
This implies that
\[
\int_{SU(2)}\alpha\wedge\overline{\mu}-\int_{SU(2)}\beta\wedge\overline{\mu
}=\int_{SU(2)}(df-\frac{\iota(\beta-\alpha)}{2\pi}\alpha_{1})\wedge
\overline{\mu}
\]%
\[=
\int_{SU(2)}d(f\overline{\mu})-\frac{\iota(\beta-\alpha)}{2\pi}\int%
_{SU(2)}\alpha_{1}\wedge\overline{\mu}=0.
\]
Both of the integrals are $0$. We next observe that $\mu=d\xi$. We have%
\[
\overline{\left\langle \mu,\omega\right\rangle }-\left\langle \omega
,\mu\right\rangle =\int_{SU(2)}\left(  \overline{\xi}\wedge d\alpha
-\alpha\wedge d\overline{\xi}\right)\hskip72pt 
\]%
\[=
\int_{SU(2)}\left(  \overline{\xi}\wedge d\alpha-d\overline{\xi}\wedge
\alpha\right)  =-\int_{SU(2)}d\left(  \overline{\xi}\wedge\alpha\right)  =0.
\]
We are left with the proof of $U(2,2)$ invariance. We will concentrate on
 \rm{Maxw}$_{i}$. We note that
\[
\int_{S^{1}I}\alpha_{1}=2\pi\text{.}%
\]
Let $g\in U(2,2)$. If $\omega,\mu\in {\rm{Maxw}}_{i}$ and $\alpha\in\Omega
^{1}(U(2))_{\mathbb{C}}$ 
satisfies $d\alpha=\omega$, then $dg^{\ast}\alpha=g^{\ast}\omega.$ Thus
\[
\left\langle g^{\ast}\omega,g^{\ast}\mu\right\rangle =\int_{SU(2)}g^{\ast
}\alpha\wedge g^{\ast}\overline{\mu}=\int_{SU(2)}\alpha\wedge\overline{\mu
}=\left\langle \omega,\mu\right\rangle
\]
by Lemma 13.
\end{proof}

\bigskip
We will now calculate $\left\langle \ldots,\ldots\right\rangle $ on each of the
isotypic components of Maxw. We set $\alpha_{k,k+2,l}=\psi_{k,l}\alpha_{f}
^{L}$ as in the previous section. If $l=k+2$ or $-(k+2)$, then $\omega
_{k,k+2,l}=d\alpha_{k,k+2,l}$ is a highest weight vector for \rm{Maxw}$_{k,k+2,l} $.

\begin{lemma}
If $l=\pm(k+2)$ then $\left\langle \omega_{k,k+2,l},\omega_{k,k+2,l}%
\right\rangle =-\frac{4k+8}{k+1}\pi^{2}$.
\end{lemma}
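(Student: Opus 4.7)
The plan is to use the explicit potential $\alpha = \alpha_{k,k+2,l} = \psi_{k,l}\alpha_f^L$ already constructed, which by the displayed formula in Section 5 satisfies
\[
\omega_{k,k+2,l} = d\alpha_{k,k+2,l} = il\,\psi_{k,l}\,\alpha_4\wedge\alpha_f^L + (k+2)\,\psi_{k,l}\,\alpha_h^L\wedge\alpha_f^L.
\]
By Proposition~14, $\langle\omega_{k,k+2,l},\omega_{k,k+2,l}\rangle = \int_{SU(2)} \alpha_{k,k+2,l}\wedge\overline{\omega_{k,k+2,l}}$, so the task reduces to a direct computation of the right-hand side.

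First I would take complex conjugates of the forms appearing in $\omega_{k,k+2,l}$. Since $\alpha_h^L = i\alpha_1$ and $\alpha_f^L = -\alpha_2+i\alpha_3$, their conjugates are $\overline{\alpha_h^L} = -\alpha_h^L$ and $\overline{\alpha_f^L} = -\alpha_e^L$ (where $\alpha_e^L = \alpha_2+i\alpha_3$ satisfies the dual relations with $e^L$). Substituting,
\[
\overline{\omega_{k,k+2,l}} = il\,\overline{\psi_{k,l}}\,\alpha_4\wedge\alpha_e^L + (k+2)\,\overline{\psi_{k,l}}\,\alpha_h^L\wedge\alpha_e^L.
\]
Then I wedge with $\alpha_{k,k+2,l} = \psi_{k,l}\alpha_f^L$ and restrict to $SU(2)$. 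Because $x_4 = iI$ is tangent to the central $S^1$ direction while $SU(2)$ is transverse to it, the form $\alpha_4$ pulls back to zero on $SU(2)$, killing the $\alpha_4$-term. What survives is
\[
(k+2)|\psi_{k,l}|^2\,\alpha_f^L\wedge\alpha_h^L\wedge\alpha_e^L\big|_{SU(2)}.
\]
A short direct expansion in the $\alpha_1,\alpha_2,\alpha_3$ basis (using $\alpha_h^L\wedge\alpha_f^L\wedge\alpha_e^L = 2\,\alpha_1\wedge\alpha_2\wedge\alpha_3 = 2\nu$ and one sign flip for the transposition) gives $\alpha_f^L\wedge\alpha_h^L\wedge\alpha_e^L|_{SU(2)} = -2\nu$.

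At this point the integral becomes $-2(k+2)\int_{SU(2)}|\psi_{k,l}|^2\,\nu$. On $SU(2)\subset U(2)$ the parameter $z=1$, hence $\psi_{k,l}|_{SU(2)} = \phi_k|_{SU(2)}$, independently of $l$—this is what guarantees the answer is the same for $l=k+2$ and $l=-(k+2)$. Finally, Schur orthogonality on $SU(2)$, combined with $\int_{SU(2)}\nu = 2\pi^2$ and the unit-length choice of $e_1^k, e_2^k$ implicit in fixing the invariant inner product on $S^k(\mathbb{C}^2)$, yields $\int_{SU(2)}|\phi_k|^2\,\nu = \tfrac{2\pi^2}{k+1}$. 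Putting everything together produces $-\tfrac{4(k+2)}{k+1}\pi^2$, as claimed.

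The main obstacle I expect is bookkeeping rather than ideas: correctly identifying conjugates of the complex left-invariant forms, tracking the sign that comes from reordering $\alpha_f^L,\alpha_h^L,\alpha_e^L$, and normalizing the invariant inner product on $S^k(\mathbb{C}^2)$ so that the Schur-orthogonality constant comes out to $1/(k+1)$. Once those conventions are pinned down, the rest is mechanical; the two cases $l=\pm(k+2)$ are handled simultaneously because the computation sees only $|\psi_{k,l}|^2$, which equals $|\phi_k|^2$ on $SU(2)$.
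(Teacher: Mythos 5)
Your proposal is correct and follows essentially the same route as the paper: express $\omega_{k,k+2,l}=d(\psi_{k,l}\alpha_f^L)$, note $\alpha_4|_{SU(2)}=0$ so only the $(k+2)\,|\psi_{k,l}|^2\,\alpha_f^L\wedge\alpha_h^L\wedge\alpha_e^L=-2(k+2)|\psi_{k,l}|^2\,\alpha_1\wedge\alpha_2\wedge\alpha_3$ term survives, then apply Schur orthogonality with $\int_{SU(2)}\alpha_1\wedge\alpha_2\wedge\alpha_3=2\pi^2$ to get $-\tfrac{4(k+2)}{k+1}\pi^2$. The conjugation bookkeeping ($\overline{\alpha_h^L}=-\alpha_h^L$, $\overline{\alpha_f^L}=-\alpha_e^L$) and the sign of the triple wedge match the paper's computation.
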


\begin{proof}
We have seen in formula (2) in the previous section that (in the notation
therein)%
\[
\omega_{k,k+2,l}=il\psi_{k,l}\alpha_{4}\wedge\alpha_{f}^{L}+(k+2)\psi
_{k,l}\alpha_{h}^{L}\wedge\alpha_{f}^{L}.
\]
Thus since $\alpha_{4|SU(2)}=0$ we have%
\[
\alpha_{k,k+2,l}\wedge\overline{\omega_{k,k+2,l}}_{|SU(2)}=(k+2)|\psi
_{k,l}|^{2}\alpha_{f}^{L}\wedge\overline{\alpha_{h}^{L}\wedge\alpha_{f}^{L}}.
\]
Using the expressions for $\alpha_{h}^{L}$ and $\alpha_{f}^{L}$ one sees that%
\[
\alpha_{f}^{L}\wedge\overline{\alpha_{h}^{L}\wedge\alpha_{f}^{L}}%
_{|SU(2)}=-2\alpha_{1}\wedge\alpha_{2}\wedge\alpha_{3|SU(2)}.
\]
Normalized invariant measure on SU(2) is $\mu=\frac{1}{2\pi^{2}}\alpha
_{1}\wedge\alpha_{2}\wedge\alpha_{3|SU(2)}$. On $S^{k}(\mathbb{C}^{2})$ 
we put the tensor--product  inner product. Thus $e_{1}^{k}$ and
$e_{2}^{k}$ are unit vectors. Also%
\[
|\psi_{k,l}(u)|^{2}=\left\langle S^{k}(u)e_{1}^{k},e_{2}^{k}\right\rangle
\overline{\left\langle S^{k}(u)e_{1}^{k},e_{2}^{k}\right\rangle}.%
\]
The Schur orthogonality relations imply that%
\[
\int_{SU(2)}|\psi_{k,l}(u)|^{2}\mu=\frac{1}{k+1}. 
\]
\vbox{\hskip338pt}
 \end{proof}

\section{Four unitary ladder representations\\ of $SU(2,2)$}

We will consider $U(2,2)$ in the usual form that is if $I_{2,2}=\left[
\begin{array}
[c]{cc}%
I & 0\\
0 & -I
\end{array}
\right]  $, then (as in Section 2)
\[
G_{1}=\left\{  g\in SL(4,\mathbb{C})|gI_{2,2}g^{\ast}=I_{2,2}\right\}  .
\]
In this form $K$ is the subgroup of block diagonal matrices. $\mathfrak{g}%
=\Lie(U(2,2))_{\mathbb{C}}=M_{4}(\mathbb{C})$. 
We set%
\[
\mathfrak{p}^{+}=\left\{  \left[
\begin{array}
[c]{cc}%
0 & X\\
0 & 0
\end{array}
\right]  |X\in M_{2}(\mathbb{C})\right\}
\]
and%
\[
\mathfrak{p}^{-}=\left\{  \left[
\begin{array}
[c]{cc}%
0 & 0\\
Y & 0
\end{array}
\right]  |Y\in M_{2}(\mathbb{C})\right\}  .
\]
Then $\mathfrak{g}=\Lie(K)_{\mathbb{C}
}\oplus\mathfrak{p}$ (here $\mathfrak{p}$ is the orthogonal complement of
$\Lie(K)_{\mathbb{C}}$ 
relative to the trace form) \ and as a $K$-module $\mathfrak{p=p}^{+}%
\oplus\mathfrak{p}^{-}$. We leave it to the reader to check

\begin{lemma}
As a representation of $K$ (under the adjoint action) $\mathfrak{p}^{+}\cong
F^{1,1,-1}$ and $\mathfrak{p}^{-}\cong F^{1,1,1}$ and%
\[
\wedge^{2}(\mathfrak{p}^{+})^{\ast}\cong F^{2,0,2}\oplus F^{0,2,2}%
\]
and%
\[
\wedge^{2}(\mathfrak{p}^{-})^{\ast}\cong F^{2,0,-2}\oplus F^{0,2,-2}.
\]

\end{lemma}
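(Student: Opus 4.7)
The plan is to compute the $K$-action on $\mathfrak{p}^{\pm}$ directly from the matrix formulas of Section 6 and then apply a standard wedge-tensor identity together with self-duality of $SU(2)$-representations.

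First, I would unwind the $K$-action. In the $G_1$ realization, $K$ is the subgroup of block-diagonal matrices $\mathrm{diag}(a,b)$ with $a, b \in U(2)$ (this is visible from the formula for $\sigma$ in Section 2). A direct matrix computation gives
\[
\Ad\bigl(\mathrm{diag}(a,b)\bigr)\begin{pmatrix}0 & X \\ 0 & 0\end{pmatrix}=\begin{pmatrix}0 & aXb^{-1} \\ 0 & 0\end{pmatrix}, \quad \Ad\bigl(\mathrm{diag}(a,b)\bigr)\begin{pmatrix}0 & 0 \\ Y & 0\end{pmatrix}=\begin{pmatrix}0 & 0 \\ bYa^{-1} & 0\end{pmatrix}.
\]
Hence both $\mathfrak{p}^{\pm}$ are isomorphic to $M_2(\mathbb{C}) \cong \mathbb{C}^2 \otimes (\mathbb{C}^2)^{*}$ as $K$-modules, with the two $U(2)$-factors of $K$ playing opposite roles for $\mathfrak{p}^{+}$ versus $\mathfrak{p}^{-}$, and the center $C = \{(zI,zI)\}$ acting trivially, as expected.

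Next, I would translate this into the $F^{p,q,r}$-notation for $K_1 = SU(2) \times U(2)$. Since the standard $SU(2)$-representation is self-dual, $\mathbb{C}^2 \otimes (\mathbb{C}^2)^{*}$ is of $(SU(2) \times SU(2))$-type $S^1 \otimes S^1$, forcing $(p,q) = (1,1)$. The central character $r$ is then pinned down by testing on $b = zI$: the formulas above give a scalar factor $z^{-1}$ on $\mathfrak{p}^{+}$ and $z$ on $\mathfrak{p}^{-}$, so $r = -1$ and $r = +1$ respectively, proving the first pair of isomorphisms.

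For the wedge squares I would apply the standard identity
\[
\wedge^2(V \otimes W) \cong (\wedge^2 V \otimes S^2 W) \oplus (S^2 V \otimes \wedge^2 W)
\]
with $V = W = \mathbb{C}^2$. Since $\wedge^2 \mathbb{C}^2$ is the trivial $SU(2)$-representation and $S^2 \mathbb{C}^2$ is the adjoint (type $F^{2}$), the two summands acquire $(SU(2) \times SU(2))$-types $S^0 \otimes S^2$ and $S^2 \otimes S^0$, accounting for the indices $(0,2)$ and $(2,0)$. Dualizing negates the central character (so $(\mathfrak{p}^{+})^{*} \cong F^{1,1,1}$ and $(\mathfrak{p}^{-})^{*} \cong F^{1,1,-1}$), and taking a wedge square doubles it, yielding central character $\pm 2$ as claimed. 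The main obstacle is really just bookkeeping of conventions — tracking which factor of $K$ is the $SU(2)$-index $p$ versus the $U(2)$-indices $(q, r)$ in the $K_1$-parametrization, and keeping careful watch on central-character signs under dualization and wedge powers — which is presumably why the authors relegate the verification to the reader.
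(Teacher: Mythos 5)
Your argument is correct, and it supplies exactly the verification the paper omits: Lemma 17 is stated with ``we leave it to the reader to check,'' and the intended check is precisely this direct computation of the adjoint action $X\mapsto aXb^{-1}$, $Y\mapsto bYa^{-1}$ of the block-diagonal $K$, followed by the identity $\wedge^{2}(V\otimes W)\cong(\wedge^{2}V\otimes S^{2}W)\oplus(S^{2}V\otimes\wedge^{2}W)$ and self-duality of $SU(2)$-modules. Your bookkeeping of the central character (testing on $b=zI$ in the lower-right block, which under $\sigma$ and $\Psi$ is the $U(2)$ factor carrying the circle $z$ in the $K_{1}$-parametrization) lands on the same signs as the statement, so nothing is missing.
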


Let $\pi(g)\omega=(g^{-1})^{\ast}\omega$ for $\omega\in\rm{Maxw}$. Then we already observed that with the $C^{\infty}$-topology,  $(\pi, \rm{Maxw})$ is an
admissible smooth Fr\'{e}chet representation of moderate growth. We set
Maxw$_{K}$ equal to the space of $K$-finite vectors of $\rm{Maxw}$   and we will use module
notation for the action of $\mathfrak{g}=\Lie(U(2,2))_
{\mathbb{C}}=M_{4}(\mathbb{C})$
 on  {\rm{Maxw}}$_{K}$, thereby having an admissible finitely generated
$(\mathfrak{g},K)$-module.

\begin{lemma}
$\mathfrak{p}^{+}$ annihilates {\rm{Maxw}}$_{2,0,2}$ and {\rm{Maxw}}$_{0,2,2}$,
$\mathfrak{p}^{-}$annihilates  $\rm{Maxw}_{2,0,-2}$ and  $\rm{Maxw}_{0,2,-2}.$
\end{lemma}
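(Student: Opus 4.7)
The plan is to combine the $K$-type description of $\mathfrak{p}^{\pm}$ in Lemma 16 with the complete list of nonzero isotypic components of Maxw given in Theorem 11. Since Maxw is $G$-stable, it is in particular $\mathfrak{g}$-stable, so the action map $\mathfrak{p}^{+}\otimes\mathrm{Maxw}_{p,q,r}\to\mathrm{Maxw}$ lands in $\mathrm{Maxw}$, and its image is a sum of those $K$-types appearing in the tensor product $F^{1,1,-1}\otimes F^{p,q,r}$ that also appear in Maxw.

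First I would work out the Clebsch--Gordan decomposition of $F^{1,1,-1}\otimes F^{p,q,r}$. Using the $SU(2)\times SU(2)$ decomposition $S^{p}\otimes S^{1}=S^{p+1}\oplus S^{p-1}$ (with the second summand present only when $p\geq 1$) on each factor and adding the $S^{1}$-characters, one obtains
\[
F^{1,1,-1}\otimes F^{p,q,r}\;\cong\;\bigoplus_{\varepsilon,\delta\in\{\pm 1\}} F^{p+\varepsilon,\,q+\delta,\,r-1},
\]
with the summand present only when $p+\varepsilon\geq 0$ and $q+\delta\geq 0$. The analogous formula for $\mathfrak{p}^{-}\cong F^{1,1,1}$ replaces $r-1$ by $r+1$.

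Second I would apply this to the four specific isotypic components. For $\mathrm{Maxw}_{2,0,2}$, the possible $K$-types in $\mathfrak{p}^{+}\cdot\mathrm{Maxw}_{2,0,2}$ are $F^{3,1,1}$ and $F^{1,1,1}$; for $\mathrm{Maxw}_{0,2,2}$ they are $F^{1,3,1}$ and $F^{1,1,1}$. By Theorem 11 a nonzero isotypic component of Maxw must satisfy $|p-q|=2$ \emph{and} $r=\pm\max(p,q)$, so $F^{3,1,1},F^{1,3,1}$ fail the $r$-condition and $F^{1,1,1}$ fails the $|p-q|=2$ condition. Hence none of these lie in Maxw, forcing $\mathfrak{p}^{+}\cdot\mathrm{Maxw}_{2,0,2}=\mathfrak{p}^{+}\cdot\mathrm{Maxw}_{0,2,2}=0$. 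The argument for $\mathfrak{p}^{-}$ acting on $\mathrm{Maxw}_{2,0,-2}$ and $\mathrm{Maxw}_{0,2,-2}$ is identical after replacing $r-1$ by $r+1$: the potential $K$-types $F^{3,1,-1}, F^{1,1,-1}, F^{1,3,-1}$ are all excluded by Theorem 11.

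There is no genuine obstacle here; the entire proof is bookkeeping against the isotypic classification already established. The only thing worth being careful about is the direction of the shift in the central character $r$, i.e.\ that $\mathfrak{p}^{+}$ lowers $r$ by $1$ while $\mathfrak{p}^{-}$ raises it by $1$, which follows from Lemma 16 and explains why it is precisely the ``maximal'' $r=+\max(p,q)$ components that are killed by $\mathfrak{p}^{+}$ and the minimal ones that are killed by $\mathfrak{p}^{-}$.
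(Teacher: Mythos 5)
Your proposal is correct and follows essentially the same route as the paper: decompose $\mathfrak{p}^{\pm}\otimes F^{p,q,r}$ by Clebsch--Gordan using $\mathfrak{p}^{+}\cong F^{1,1,-1}$, $\mathfrak{p}^{-}\cong F^{1,1,1}$, and observe that none of the resulting $K$-types $F^{3,1,\pm1},F^{1,3,\pm1},F^{1,1,\pm1}$ satisfies the classification of Theorem 11, so the image in Maxw must vanish. The only cosmetic difference is that you state a general tensor-product formula and then specialize, while the paper simply lists the four relevant decompositions.
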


\begin{proof}
Using the Clebsch--Gordon formula we have%
\[
F^{1,1,-1}\otimes F^{2,0,2}\cong F^{3,1,1}\oplus F^{1,1,1},
\]%
\[
F^{1,1,-1}\otimes F^{0,2,2}\cong F^{1,3,1}\oplus F^{1,1,1},
\]%
\[
F^{1,1,1}\otimes F^{2,0,-2}\cong F^{3,1,-1}\oplus F^{1,1,-1},
\]%
\[
F^{1,1,1}\otimes F^{0,2,-2}\cong F^{1,3,-1}\oplus F^{1,1,-1}.
\]
Theorem 11 implies that none of the $K$-types on the right of these equations
occurs in Maxw.
\end{proof}

\bigskip

We set
$$ ({\rm Maxw}_{2,0}^{+})_{K}=\sum_{k\geq0}{\rm Maxw}_{k+2,k,k+2},\,
 ({\rm Maxw}_{0,2}^{+})_{K}=\sum_{k\geq0}{\rm Maxw}_{k,k+2,k+2},$$
$$({\rm Maxw}_{2,0}^{-})_{K}=\sum_{k\geq 0}  {\rm Maxw}_{k+2,k,-k-2} \, {\rm and}
({\rm Maxw}_{0,2}^{-})_{K}
=\sum_{k\geq0}{\rm Maxw}_{k,k+2,-k-2}.$$
 We will drop the sub-$K$ for the
completions of these spaces. We note Theorem 11 implies that each of these
spaces is totally contained in Maxw$_{i}$ or Maxw$_{-i}$. This implies that
the Hermitian form, $\left\langle\ldots,\ldots\right\rangle $ from the previous
section is defined on each of these spaces.

We have
\begin{theorem}
Each of the spaces  $\rm{Maxw}_{2,0}^{+}$, $\rm{Maxw}_{2,0}^{-}$, $\rm{Maxw}_{0,2}^{+}$,
 $\rm{Maxw}_{0,2}^{-}$ is an invariant irreducible subspace for $\pi$. Furthermore,
the form $\left\langle \ldots,\ldots\right\rangle $ is positive definite on
 $\rm{Maxw}_{2,0}^{+}$,  $\rm{Maxw}_{2,0}^{-}$ and negative definite on  $\rm{Maxw}_{0.2}^{+}$,
 $\rm{Maxw}_{0,2}^{-}$.
\end{theorem}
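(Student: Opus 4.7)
The plan is to prove the theorem in three stages: first establish $(\mathfrak{g},K)$-invariance (and hence, by closure, $G$-invariance) of each of the four spaces, then prove irreducibility of each, and finally read off the sign of $\langle\ldots,\ldots\rangle$ on each. Throughout, the fact that the $K$-types of $\mathrm{Maxw}$ are multiplicity free (Theorem 11) reduces every claim to a statement about individual $K$-isotypic components.

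For the invariance stage, I will use Lemma 17 to identify $\mathfrak{p}^{\pm}\cong F^{1,1,\mp 1}$ and then compute $F^{1,1,\mp 1}\otimes F^{p,q,r}$ by Clebsch--Gordan for each $K$-type appearing in $\mathrm{Maxw}$. For example,
\[
F^{1,1,1}\otimes F^{k+2,k,k+2}\cong F^{k+3,k+1,k+3}\oplus F^{k+3,k-1,k+3}\oplus F^{k+1,k+1,k+3}\oplus F^{k+1,k-1,k+3},
\]
and of these only $F^{k+3,k+1,k+3}$ is admitted by the constraints in Theorem 11, so $\mathfrak{p}^{-}\cdot\mathrm{Maxw}_{k+2,k,k+2}\subset\mathrm{Maxw}_{k+3,k+1,k+3}$; likewise $\mathfrak{p}^{+}\cdot\mathrm{Maxw}_{k+2,k,k+2}\subset\mathrm{Maxw}_{k+1,k-1,k+1}$ for $k\ge 1$, and $\mathfrak{p}^{+}$ annihilates the $k=0$ piece consistently with Lemma 18. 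The three other spaces are handled symmetrically, so each of the four spaces is $(\mathfrak{g},K)$-invariant, and taking closures yields $G$-invariance.

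For irreducibility I focus on $\mathrm{Maxw}_{2,0}^{+}$, the other three being parallel. By Lemma 18 the lowest $K$-type $\mathrm{Maxw}_{2,0,2}$ is $\mathfrak{p}^{+}$-annihilated, so the cyclic $(\mathfrak{g},K)$-submodule it generates equals $U(\mathfrak{p}^{-})\cdot\mathrm{Maxw}_{2,0,2}$; conversely, any nonzero $(\mathfrak{g},K)$-submodule $V\subset\mathrm{Maxw}_{2,0}^{+}$ contains some $\mathrm{Maxw}_{k+2,k,k+2}$, and iterated application of $\mathfrak{p}^{+}$ walks down to $\mathrm{Maxw}_{2,0,2}$. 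Irreducibility therefore reduces to the nonvanishing of each of the $K$-intertwiners
\[
\mathfrak{p}^{-}\otimes\mathrm{Maxw}_{k+2,k,k+2}\to\mathrm{Maxw}_{k+3,k+1,k+3},\qquad \mathfrak{p}^{+}\otimes\mathrm{Maxw}_{k+2,k,k+2}\to\mathrm{Maxw}_{k+1,k-1,k+1},
\]
which are automatically surjective when nonzero since the target is an irreducible $K$-type. This is the main obstacle: it demands an explicit Lie-derivative computation on the highest weight vector $d(\psi_{k,k+2}\alpha_{f}^{R})$ of $\mathrm{Maxw}_{k+2,k,k+2}$, showing that some $X\in\mathfrak{p}^{\pm}$ acts nontrivially. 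The computation is concrete but needs care in translating the matrix description of $\mathfrak{p}^{\pm}$ through the Cayley transform to vector fields on $U(2)$. A shorter alternative is to identify the four spaces with the $A_{\mathfrak{q}}(\lambda)$ modules attached to the four theta-stable parabolics signaled in the abstract and invoke Vogan--Zuckerman irreducibility.

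For definiteness, Lemma 16 already gives $\langle\omega_{k,k+2,l},\omega_{k,k+2,l}\rangle=-\frac{4k+8}{k+1}\pi^{2}<0$ for $l=\pm(k+2)$; since $\langle\ldots,\ldots\rangle$ is $G$-invariant and the representations are now known to be irreducible, Schur forces the form to be negative definite on all of $\mathrm{Maxw}_{0,2}^{\pm}$. To transfer the sign to $\mathrm{Maxw}_{2,0}^{\pm}$ I will invoke Remark 12: $\eta^{\ast}$ is a linear isomorphism $\mathrm{Maxw}_{2,0}^{\pm}\cong\mathrm{Maxw}_{0,2}^{\mp}$ commuting with $d$, $J$, and complex conjugation. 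On $SU(2)\cong S^{3}$ the map $u\mapsto u^{-1}$ has $d\eta_{I}=-I$ on the three-dimensional $\mathfrak{su}(2)$, hence Jacobian $-1$, so $\int_{SU(2)}\eta^{\ast}\tau=-\int_{SU(2)}\tau$ for any top-degree $\tau$. Combined with the formula from Proposition 14, this yields $\langle\eta^{\ast}\omega,\eta^{\ast}\mu\rangle=-\langle\omega,\mu\rangle$, which forces positive definiteness on $\mathrm{Maxw}_{2,0}^{\pm}$ and completes the plan.
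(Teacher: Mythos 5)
Your invariance step and your sign computation track the paper closely: Clebsch--Gordan applied to $\mathfrak{p}^{\pm}\cong F^{1,1,\mp1}$ together with the list of $K$-types in Theorem 11 gives exactly the inclusions you state, and the paper also gets negative definiteness on $(\mathrm{Maxw}_{0,2}^{\pm})_K$ from Lemma 16 and transfers it to $(\mathrm{Maxw}_{2,0}^{\pm})_K$ by the orientation-reversing map $\eta$. One small improvement you should make: you do not need irreducibility (hence Schur) to get definiteness. Since the $K$-types occur with multiplicity one and the form is $K$-invariant, its restriction to each isotypic component $\mathrm{Maxw}_{k,l,m}$ is already $0$, positive definite, or negative definite, so the single value $\langle\omega_{k,k+2,l},\omega_{k,k+2,l}\rangle<0$ from Lemma 16 settles the sign on every component at once. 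This is how the paper argues, and it matters for the logical order, as explained next.

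The genuine gap is the irreducibility step. You correctly reduce it to the nonvanishing of the maps $\mathfrak{p}^{-}\colon\mathrm{Maxw}_{k+2,k,k+2}\to\mathrm{Maxw}_{k+3,k+1,k+3}$ and $\mathfrak{p}^{+}\colon\mathrm{Maxw}_{k+2,k,k+2}\to\mathrm{Maxw}_{k+1,k-1,k+1}$ (and their analogues), but you then defer precisely that point: the ``explicit Lie-derivative computation'' is never carried out, and your fallback --- identifying the modules with $A_{\mathfrak{q}}(\lambda)$ and citing Vogan--Zuckerman --- is not available here, since in the paper that identification (Theorems 21--22) is deduced \emph{after} and from Theorem 19, and in any case recognizing a module as an $A_{\mathfrak{q}}(\lambda)$ presupposes knowing it is irreducible (and unitary with the right cohomology). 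The paper sidesteps the nonvanishing computation entirely: because the $K$-types form a ladder indexed by $k\geq0$ and $\mathfrak{p}^{\pm}$ shift the index by $\mp1$ at most, a proper invariant subspace (or, using the now-established definiteness of the form, its orthocomplement) would have to be a \emph{finite} sum of $K$-isotypic components, i.e.\ a finite-dimensional invariant subspace. But each of the four spaces is unitarizable by the definiteness just proved, a finite-dimensional unitary representation of the noncompact simple group would be a sum of trivial one-dimensional subrepresentations, and the trivial $K$-type $F^{0,0,0}$ does not occur in $\mathrm{Maxw}$. This is why the paper establishes definiteness \emph{before} irreducibility; if you keep your order (irreducibility first, signs via Schur), you are obliged to actually do the highest-weight-vector computation you flagged as the main obstacle, and as written that step is missing.
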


\begin{proof}
We consider  $\rm{Maxw}_{2,0}^{+}$. We note that using the Clebsch--Gordan formula as
above we have%
\[
\mathfrak{p}^{-}\text{Maxw}_{k+2,k,k+2}\subset\text{Maxw}_{k+3,k+1,k+3}%
+ \,\text{Maxw}_{k+3,k-1,k+3}\]
\[+\text{Maxw}_{k+1,k+2,k+3}+\text{Maxw}_{k+1,k-1,k+3}%
\]
and%
\[
\mathfrak{p}^{+}\text{Maxw}_{k+2,k,k+2}\subset\text{Maxw}_{k+3,k+1,k+1}%
+ \, \text{Maxw}_{k+3,k-1,k+1}\]
\[+\text{Maxw}_{k+1,k+1,k+1}+\text{Maxw}%
_{k+1,k-1,k+1}.
\]
Using the form of the $K$-types of $\rm{Maxw}$  we see that%
\[
\mathfrak{p}^{-}\text{Maxw}_{k+2,k,k+2}\subset\text{Maxw}_{k+3,k+1,k+3}%
\]
and%
\[
\mathfrak{p}^{+}\text{Maxw}_{k+2,k,k+2}\subset\text{Maxw}_{k+1,k-1,k+1}.
\]
This implies that $( \rm{Maxw}_{2,0}^{+})_{K}$ is a $(\mathfrak{g,}K)$-submodule
of  $\rm{Maxw}$. Hence  $\rm{Maxw}_{2,0}^{+}$ is $G$-invariant. The same argument proves%
\[
\mathfrak{p}^{-}\text{Maxw}_{k,k+2,k+2}\subset\text{Maxw}_{k+1,k+3,k+3},
\]%
\[
\mathfrak{p}^{+}\text{Maxw}_{k,k+2,k+2}\subset\text{Maxw}_{k-1,k+1,k+1},
\]%
\[
\mathfrak{p}^{+}\text{Maxw}_{k+2,k,-(k+2)}\subset\text{Maxw}_{k+3,k+1,-(k+3)}%
\]%
\[
\mathfrak{p}^{-}\text{Maxw}_{k+2,k,-(k+2)}\subset\text{Maxw}_{k+1,k-1,-(k+1)}%
\]
and
\[
\mathfrak{p}^{+}\text{Maxw}_{k,k+2,-(k+2)}\subset\text{Maxw}_{k+1,k+2,-(k+3)}%
\]%
\[
\mathfrak{p}^{-}\text{Maxw}_{k,k+2,-(k+2)}\subset\text{Maxw}_{k-1,k+1,-(k+1)}%
.
\]
This proves the $G$-invariance of the indicated spaces. We next note that since
the multiplicities of the $K$-types of Maxw are all one, the Hermitian
form $\left\langle \ldots,\ldots\right\rangle $ restricted to each space
 {\rm Maxw}$_{k,l,m}$ is $0$, positive definite or negative definite. 
Thus Lemma 16 implies that the form is negative definite 
on ({\rm Maxw}$_{0,2}^{+})_{K}$ and ({\rm Maxw}$_{0,2}^{-})_{K}$.
 Recall that $\eta^{\ast}({\rm Maxw}_{0,2}^{+})_{K}=
({\rm Maxw}_{2,0}^{-})_{K}$ and $\eta^{\ast}
{\rm Maxw}_{0,2}^{-})_{K}=
({\rm Maxw}_{2,0}^{+})_{K}$. Furthermore $\eta$ is orientation reversing on $SU(2)$,
thus the form on $({\rm Maxw}_{2,0}^{+})_{K}$ and ({\rm Maxw}$_{2,0}^{-})_{K}$ is
positive definite. If one of the representations were not irreducible, then it
would have a finite-dimensional invariant subspace (by the above formulae).
Since these representations are all unitarizable and do not contain one-dimensional invariant subspaces the representations are all irreducible.
\end{proof}

\begin{remark}\emph{
The proof above implies that the last six inclusions above are all equalities.}
\end{remark}

We set $\pi_{i,j}^{\varepsilon}$ equal to the action of $G$ on Maxw$_{i,j}%
^{\varepsilon}$ for $\varepsilon=+,-$ and $i,j=0,2$ or $2,0$.

\begin{theorem}
The representations $\pi_{i,j}^{\varepsilon}$ for $\varepsilon=+,-$ and
$i,j=0,2$ or $2,0$ all have trivial infinitesimal character (that is equal to
the restriction to the center of $U(\mathfrak{g})$ of the augmentation
homomorphism to its center). Furthermore%
\[
H^{2}(\mathfrak{g,}K,{\rm Maxw}_{i,j}^{\varepsilon})=
\mathbb{C},\varepsilon=+,-\text{ and }i,j=0,2\text{ or }2,0.
\]

\end{theorem}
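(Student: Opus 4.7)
The plan is to verify the two assertions using the Vogan-Zuckerman theory of cohomological $(\mathfrak{g},K)$-modules $A_{\mathfrak{q}}(\lambda)$. For the infinitesimal character, the essential input is Lemma 16, which shows that the minimal $K$-type of each of the four irreducible pieces is annihilated by $\mathfrak{p}^{+}$ or $\mathfrak{p}^{-}$. This exhibits each ${\rm Maxw}_{i,j}^{\varepsilon}$ as an irreducible highest-weight $(\mathfrak{g},K)$-module for the positive system consisting of $\Delta_{c}^{+}$ together with the roots of $\mathfrak{p}^{+}$ or $\mathfrak{p}^{-}$ as appropriate. Harish-Chandra's theorem then identifies the infinitesimal character as the Weyl class of $\lambda+\rho$, where $\lambda$ is the highest weight of the minimal $K$-type. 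In standard coordinates on the diagonal Cartan of $\mathfrak{sl}(4,\mathbb{C})$, $F^{2,0,2}$ has highest weight $\lambda=(0,-2,1,1)$ and $\rho=(3/2,1/2,-1/2,-3/2)$, so $\lambda+\rho=(3/2,-3/2,1/2,-1/2)$ is a permutation of $\rho$. The analogous computation for the remaining minimal $K$-types $F^{2,0,-2}, F^{0,2,2}, F^{0,2,-2}$ (using the $\rho$ appropriate for each positive system) yields $\lambda+\rho$ equal to a permutation of $\rho$ in each case, establishing trivial infinitesimal character for all four representations.

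For the cohomology, I invoke the classification theorem of Vogan-Zuckerman: an irreducible unitary $(\mathfrak{g},K)$-module with trivial infinitesimal character and nonzero cohomology is isomorphic to $A_{\mathfrak{q}}(0)$ for some $\theta$-stable parabolic $\mathfrak{q}=\mathfrak{l}+\mathfrak{u}$, uniquely determined by its minimal $K$-type, whose highest weight is $2\rho(\mathfrak{u}\cap\mathfrak{p})$. Matching: each of the four highest weights decomposes uniquely as a sum of two non-compact roots sharing a common endpoint, for instance $(0,-2,1,1)=(e_{3}-e_{2})+(e_{4}-e_{2})$, so that $R_{\mathfrak{q}}:=\dim_{\mathbb{C}}(\mathfrak{u}\cap\mathfrak{p})=2$ in all four cases. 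The four parabolics are indexed by the choice of common endpoint $e_{i}$ for $i=1,2,3,4$, matching the twistor-theoretic picture mentioned in the abstract. The Vogan-Zuckerman cohomology formula
\begin{equation*}
H^{j}(\mathfrak{g},K;A_{\mathfrak{q}}(0))=\Hom_{L\cap K}(\wedge^{j-R_{\mathfrak{q}}}(\mathfrak{l}\cap\mathfrak{p}),\mathbb{C})
\end{equation*}
at $j=2=R_{\mathfrak{q}}$ collapses to $\Hom_{L\cap K}(\mathbb{C},\mathbb{C})=\mathbb{C}$, yielding the claim $H^{2}(\mathfrak{g},K;{\rm Maxw}_{i,j}^{\varepsilon})=\mathbb{C}$.

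The main obstacle is justifying the identification ${\rm Maxw}_{i,j}^{\varepsilon}\cong A_{\mathfrak{q}}(0)$ itself. It suffices to invoke Vogan's uniqueness theorem for irreducible unitary $(\mathfrak{g},K)$-modules with a specified minimal $K$-type in the good range, which applies automatically for $\lambda=0$. Alternatively one may compare the full $K$-spectrum of ${\rm Maxw}_{i,j}^{\varepsilon}$ given by Theorem 11 with Blattner's formula applied to the candidate $A_{\mathfrak{q}}(0)$; the agreement of the ladder structures (multiplicities all one, running from the minimal $K$-type upward under repeated application of the appropriate component of $\mathfrak{p}$) completes the identification.
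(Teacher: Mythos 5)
Your treatment of the infinitesimal character is essentially identical to the paper's: using the fact that $\mathfrak{p}^{+}$ (resp.\ $\mathfrak{p}^{-}$) annihilates the minimal $K$-type --- this is Lemma 18 in the paper's numbering, not Lemma 16 --- together with the module structure from the proof of Theorem 19, each $({\rm Maxw}_{i,j}^{\varepsilon})_{K}$ is a highest (or lowest) weight module, and your computation $\lambda=-2\varepsilon_{2}+\varepsilon_{3}+\varepsilon_{4}$, $\lambda+\rho=\frac{3}{2}\varepsilon_{1}-\frac{3}{2}\varepsilon_{2}+\frac{1}{2}\varepsilon_{3}-\frac{1}{2}\varepsilon_{4}$, a Weyl permutation of $\rho$, is exactly the one in the paper. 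That half is fine.

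The cohomology half has a genuine gap: you deduce $H^{2}=\mathbb{C}$ from the Vogan--Zuckerman formula applied to $A_{\mathfrak{q}}(0)$, but the identification ${\rm Maxw}_{i,j}^{\varepsilon}\cong A_{\mathfrak{q}}(0)$ is precisely what is not justified. There is no uniqueness theorem saying an irreducible unitary $(\mathfrak{g},K)$-module is determined by its minimal $K$-type (all spherical unitary representations share the trivial one); the Vogan--Zuckerman classification that would yield this isomorphism takes as its hypothesis the nonvanishing of $(\mathfrak{g},K)$-cohomology you are trying to prove, so invoking it here is circular, and the unconditional statement (unitary plus strongly regular infinitesimal character implies $A_{\mathfrak{q}}(\lambda)$) is Salamanca-Riba's theorem, far heavier than needed; likewise agreement of $K$-spectra via Blattner's formula does not by itself determine an irreducible module. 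The paper's own argument runs in the opposite logical order and is much lighter: since each ${\rm Maxw}_{i,j}^{\varepsilon}$ consists of the smooth vectors of an irreducible unitary representation on which the center of $\mathfrak{g}$ acts trivially and (by the infinitesimal character computation just done) the Casimir acts by $0$, the standard fact from [BW] gives $H^{j}(\mathfrak{g},K;M)\cong{\Hom}_{K}(\wedge^{j}\mathfrak{p},M)$, and $\dim{\Hom}_{K}(\wedge^{2}\mathfrak{p},{\rm Maxw}_{i,j}^{\varepsilon})=1$ is immediate from Lemma 17 together with the $K$-type list of Theorem 11 (the $K$-types $F^{2,0,\pm2}$, $F^{0,2,\pm2}$ each occur exactly once, while $\mathfrak{p}^{+}\otimes\mathfrak{p}^{-}$ contributes only $K$-types with $r=0$, which do not occur in Maxw). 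The identification with $A_{\mathfrak{q}}(0)$ is then a corollary of the nonzero $H^{2}$ via [VZ], as in the theorem following this one. If you insist on your order, you must prove the identification independently --- e.g.\ by showing both sides are the irreducible highest weight module with highest weight $-2\varepsilon_{2}+\varepsilon_{3}+\varepsilon_{4}$, which requires knowing these particular $A_{\mathfrak{q}}(0)$ are highest weight modules --- and that is more work than the paper's argument.
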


\begin{proof}
By the above, all four of the representations are the spaces of $C^{\infty}$
vectors of an irreducible unitary representation. Since the center of
$\mathfrak{g}$ acts trivially and
\[
\dim {\Hom}_{K}(\wedge^{2}\mathfrak{p,}M)=1
\]
for each $M$ as in the statement, it is enough to prove that the Casimir
operator corresponding to the trace form on $\mathfrak{g}$ acts by $0$ on each
of the representations (c.f.~[BW]). We will show that the center of the
enveloping algebra acts correctly. We look at the case Maxw$_{2,0}^{+}$ and
leave the other cases to the reader. By Lemma 18 $($Maxw$_{2,0}^{+})_{K}$ is a
highest weight module relative to the Weyl chamber (in $\varepsilon$ notation)
$\varepsilon_{1}>\varepsilon_{2}>\varepsilon_{3}>\varepsilon_{4}$. We
calculate the highest weight $\lambda=a\varepsilon_{1}+b\varepsilon
_{2}+c\varepsilon_{3}+d\varepsilon_{4}$. The representation factors through
the adjoint group so $a+b+c+d=0$. By definition of $F^{2,0,2}$
$a-b=2,c-d=0,c+d=2$. Solving the four equations yields $-2\varepsilon
_{2}+\varepsilon_{3}+\varepsilon_{4}$. $\rho$ for the chamber that we are studying
is $\frac{3}{2}\varepsilon_{1}+\frac{1}{2}\varepsilon_{2}-\frac{1}%
{2}\varepsilon_{3}-\frac{3}{2}\varepsilon_{4}$. Thus $\lambda+\rho=\frac{3}%
{2}\varepsilon_{1}-\frac{3}{2}\varepsilon_{2}+\frac{1}{2}\varepsilon_{3}%
-\frac{1}{2}\varepsilon_{4}=\sigma\rho$ for $\sigma$ the cyclic permutation
$(243)$.
\end{proof}

\medskip

Let $\theta$ be the Cartan involution of $G_{1}$ corresponding to $K$. In
light of the Vogan--Zuckerman theorem [VZ] this implies that there are four
$\theta$ stable parabolic subalgebras $\mathfrak{q}_{i,j}^{\varepsilon}$
$\varepsilon=+,-$ and $i,j=0,2$ or $2,0$, such that $({\rm Maxw}_{i,j}%
^{\varepsilon})_{K}$ is isomorphic with $A_{{\mathfrak q^\varepsilon_{i,j}}}(0)$ (c.f.~[BW]).

\begin{theorem}
$\mathfrak{q}_{2,0}^{+}$ is the parabolic subalgebra
\[
\{[x_{ij}]\in M_{4}(\mathbb{C})|x_{21}=x_{31}=x_{41}=0\},
\]%
\[
\mathfrak{q}_{0,2}^{+}=\{[x_{ij}]\in M_{4}(\mathbb{C})|x_{41}=x_{42}=x_{43}=0\},
\]
and $\mathfrak{q}_{0,2}^{-}=(\mathfrak{q}_{0,2}^{+})^{T},\mathfrak{q}_{2,0}%
^{-}=(\mathfrak{q}_{2,0}^{+})^{T}$.
\end{theorem}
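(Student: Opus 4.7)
The plan is to invoke the Vogan--Zuckerman classification [VZ] (cf.~[BW]): for a $\theta$-stable parabolic $\mathfrak{q}=\mathfrak{l}+\mathfrak{u}$, the module $A_\mathfrak{q}(0)$ is uniquely characterized among irreducible $(\mathfrak{g},K)$-modules by its trivial infinitesimal character together with its minimal $K$-type, the highest weight of which is $2\rho(\mathfrak{u}\cap\mathfrak{p})$ in an appropriate positive system. Theorem 19 provides the infinitesimal character for each $\mathrm{Maxw}_{i,j}^\varepsilon$, and Theorem 11 together with the explicit work of Section 5 identifies the minimal $K$-types as $F^{2,0,2}$, $F^{0,2,2}$, $F^{2,0,-2}$, $F^{0,2,-2}$ (whose $(\varepsilon_1,\varepsilon_2,\varepsilon_3,\varepsilon_4)$-highest weights are computed exactly as in the proof of Theorem 19). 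Hence the proof reduces to two checks: that each candidate $\mathfrak{q}_{i,j}^\varepsilon$ is a $\theta$-stable parabolic, and that its $2\rho(\mathfrak{u}\cap\mathfrak{p})$ agrees with the highest weight of the minimal $K$-type of the correspondingly labeled representation.

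For the $\theta$-stability check, the Cartan involution is $\theta(X)=I_{2,2}XI_{2,2}$, which acts on the matrix entry $x_{pq}$ by the scalar $\epsilon_p\epsilon_q$, with $\epsilon_k=+1$ for $k\le 2$ and $\epsilon_k=-1$ for $k\ge 3$. Any subalgebra defined by vanishing of a set of matrix entries is therefore automatically $\theta$-stable, so all four candidate subalgebras pass this test. Each is manifestly parabolic: $\mathfrak{q}_{2,0}^\pm$ are the two block-triangular parabolics of type $(1,3)$ and $\mathfrak{q}_{0,2}^\pm$ are the two of type $(3,1)$, with transposition producing the opposite parabolic within each pair. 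Alternatively, each $\mathfrak{q}_{i,j}^\varepsilon$ can be exhibited as $Z_\mathfrak{g}(H)$ plus the positive eigenspaces of $\operatorname{ad}(H)$ for a diagonal $H\in i\mathfrak{t}_0$ of the form $\operatorname{diag}(a,b,b,b)$ or $\operatorname{diag}(a,a,a,b)$, which confirms both the parabolicity and the $\theta$-stability at once.

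For the $K$-type check, I would compute $\mathfrak{u}\cap\mathfrak{p}$ for each parabolic. For $\mathfrak{q}_{2,0}^+=\{x_{21}=x_{31}=x_{41}=0\}$, the nilradical is supported in positions $(1,2),(1,3),(1,4)$; intersecting with $\mathfrak{p}$ leaves the two positions $(1,3),(1,4)$, giving $2\rho(\mathfrak{u}\cap\mathfrak{p})=2\varepsilon_1-\varepsilon_3-\varepsilon_4$. Analogously $\mathfrak{q}_{0,2}^+$ gives $\varepsilon_1+\varepsilon_2-2\varepsilon_4$, and the transposes $\mathfrak{q}_{i,j}^-$ give the negatives of these (equivalently $2\rho(\overline{\mathfrak{u}}\cap\mathfrak{p})$). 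After transporting these weights by an element of the Weyl group of $K$ into the appropriate $K$-dominant chamber, one matches them with the highest weights of $F^{2,0,\pm 2}$ and $F^{0,2,\pm 2}$ recorded in the proof of Theorem 19, and the claimed identifications follow. The main obstacle, and the only point that requires care, is the bookkeeping of sign conventions: the Vogan--Zuckerman formula yields $\pm 2\rho(\mathfrak{u}\cap\mathfrak{p})$ depending on whether one takes $\mathfrak{q}$ or its opposite $\bar{\mathfrak{q}}$ as the parabolic of cohomological induction, and the $F^{p,q,r}$ notation hides the explicit weight via the trace-zero normalization $a+b+c+d=0$. Once a consistent convention is fixed, the matching of the four parabolics to the four $\mathrm{Maxw}_{i,j}^\varepsilon$ is a direct and finite computation.
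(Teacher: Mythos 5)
Your argument is essentially the paper's own proof: both reduce the identification to computing $2\rho(\mathfrak{u}\cap\mathfrak{p})$ for the four explicitly given $\theta$-stable parabolics and matching the result with the extreme weights of the minimal $K$-types $F^{2,0,\pm 2}$, $F^{0,2,\pm 2}$ (already pinned down in Theorem 19's proof via the trace-zero normalization), then invoking the Vogan--Zuckerman/[W1] characterization of $A_{\mathfrak{q}}(0)$ by trivial infinitesimal character and lowest $K$-type. Your added verification of parabolicity and $\theta$-stability, and your caveat about the $\mathfrak{q}$ versus $\bar{\mathfrak{q}}$ sign convention in matching $\pm 2\rho(\mathfrak{u}\cap\mathfrak{p})$ to highest versus lowest weights, are precisely the ``one checks'' step the paper leaves to the reader.
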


\begin{proof}
If $\mathfrak{q}$ is a $\theta$-stable parabolic subalgebra with nilpotent
radical $\mathfrak{u}$ and if $\mathfrak{u}_{n}=\mathfrak{u}\cap\mathfrak{p} $,
we set $2\rho_{n}(\mathfrak{q})(h)= \rm{tr}\rm{ad}(h)_{|}\mathfrak{u}_{n}$ for
$h\in\mathfrak{h}$ the Cartan subalgebra of diagonal matrices. One checks that
if the parabolics are given as in the theorem, then $2\rho_{n}(\mathfrak{q}_{2,0}%
^{+})$ is the highest weight of $F^{2,0,2}$, $2\rho_{n}(\mathfrak{q}%
_{0,2}^{+})$ is that of $F^{0,2,2},2\rho_{n}(\mathfrak{q}_{2,0}^{-})$ is the
lowest weight of $F^{2,0,-2}$, and $2\rho_{n}(\mathfrak{q}_{0,2}^{-})$ is the
lowest weight of $F^{0,2,-2}$. Now the result follows from the main theorem
(c.f. [W1]).
\end{proof}

\begin{remark}\emph{
This result implies that if we consider the open orbit $X$ of $U(2,2)$ in
$\mathbb{P}^{3}(\mathbb{C})$ that has no non-constant holomorphic functions, then there are two
holomorphic line bundles and two anti-holomorphic line bundles such that their
degree $1$ smooth sheaf cohomology yields the four versions of solutions of
Maxwell's equations. The relation between the two realizations is the Penrose--Twistor transform.}
\end{remark}

Another realization of these representations is in [GW]. The observation here
is that $SU(2,2)$ is the quaternionic real form of 
$SL(4,\mathbb{C})$. Since $K\cap SU(2,2)=S(U(2)\times U(2))$ there are two invariant
quaternionic structures on $SU(2,2)/K\cap SU(2,2)$ and on each of two line
bundles on which the methods of [GW] apply. Each yields an element of the
\textquotedblleft analytic continuation\textquotedblright\ of the
corresponding quaternionic discrete series which is our third realization. We
will allow the interested reader to check these assertions.

Finally, these representations appear in Howe duality between $U(2,2)$ and the
$U(1)$ in its center. The details are worked out in [BW], VIII, 2.10--2.12 in
the notation therein, and  the pertinent representations are $V_{2}$ and $V_{-2}$
yielding \rm{Maxw}$_{2,0}^{+})_{K}$ and \rm{Maxw}$_{0,2}^{+})_{K}$. The other two
constituents are gotten by duality.

\section{A dual pair in $PSU(2,2)$}

We continue to consider $U(2,2)$ as the group $G_{1}$ in Section 2 and
$PSU(2,2)$ the quotient by the center. Let $S$ be the image of the subgroup of
all matrices of the form%
\[
\left[
\begin{array}
[c]{cc}%
aI & bI\\
\overline{b}I & \overline{a}I
\end{array}
\right]
\]
with $|a|^{2}-|b|^{2}=1$. Then $S$ is isomorphic with $PSU(1,1)$. The image of
the subgroup of $PSU(2,2)$ that centralizes every element of $S$ is the image
of the group of all block $2\times2$ diagonal elements of $G_{1}$, $C$ which
is isomorphic with $SO(3)$. These two groups form a reductive dual pair (the
commutant of $C$ is $S$). Before we analyze this pair, we will indicate its
relationship with time in Minkowski space. If we write coordinates in Section
2 as $x_{1},x_{2}.x_{3}$ and $x_{4}=t$, then we have according to our rules
\[
(0,0,0,t)\longmapsto\frac{(1+it)}{1-it}I.
\]
In terms of the linear fractional action of $U(2,2)$ on $U(2)$ this
corresponds to%
\[
\left[
\begin{array}
[c]{cc}%
\frac{1+it}{\sqrt{1+t^{2}}}I & 0\\
0 & \frac{1-it}{\sqrt{1+t^{2}}}I
\end{array}
\right]  .I.
\]
That is, the time axis is the orbit of $K\cap S$ with $-I$ deleted.

As we observed, the center of $U(2,2)$ acts trivially on Maxw and thus we can
consider the action of the group $C\times S$ through $CS$ on Maxw. Let
$\mathcal{H}^{k}$ denote the representation of $SO(3)$ on the spherical
harmonics of degree $k$. We denote by $D_{2k}$,$k\in
\mathbb{Z}
-\{0\},$ the discrete series of $S$ ($D_{2k}$ has $K$-types $2k+2$sgn$(k)
\mathbb{Z}_{\geq0}$).

\begin{theorem}
As a representation of $C\times S$, (\rm{Maxw}$_{2,0}^{+})_{K}$ and
(\rm{Maxw}$_{0,2}^{+})_{K}$ are equivalent with%
\[
\bigoplus_{k\geq1}\mathcal{H}^{k}\otimes D_{-(2k+2)}%
\]
and $(\rm{Maxw}_{2,0}^{-})_{K}$ and $(\rm{Maxw}_{0,2}^{-})_{K}$ are equivalent with%
\[
\bigoplus_{k\geq1}\mathcal{H}^{k}\otimes D_{2k+2}.
\]

\end{theorem}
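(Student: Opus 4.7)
The plan is to branch each of the four $(\mathfrak{g},K)$-modules under the maximal compact subgroup $C\times(K\cap S)$ of $C\times S$ using the $K$-type list of Theorem 11, and then identify each $C$-isotypic component as a discrete series of $S$ by locating a weight vector annihilated by $\mathfrak{s}^{+}$ (or $\mathfrak{s}^{-}$) and matching $K\cap S$-weights. I treat $(\text{Maxw}_{2,0}^{+})_{K}=\bigoplus_{k\geq 0}\text{Maxw}_{k+2,k,k+2}$ in detail; the other three cases are structurally identical.

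The centralizer $C$ is the image in $PSU(2,2)$ of $\{\text{diag}(A,A):A\in U(2)\}$, which maps onto the diagonal $SU(2)\subset SU(2)\times SU(2)$ in $K=S(U(2)\times U(2))$ and descends to $SO(3)$ after modding out the center. Clebsch--Gordan gives
\[
F^{k+2,k,r}\big|_{C}\cong\bigoplus_{j=0}^{k}S^{2k+2-2j}(\mathbb{C}^{2})\cong\mathcal{H}^{k+1}\oplus\mathcal{H}^{k}\oplus\cdots\oplus\mathcal{H}^{1},
\]
so each $\mathcal{H}^{n}$ with $n\geq 1$ occurs exactly once in $\text{Maxw}_{k+2,k,k+2}$ for every $k\geq n-1$ and not at all otherwise; the trivial $\mathcal{H}^{0}$ never occurs. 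An element $\text{diag}(aI,\bar aI)\in K\cap S$ factors in $K$ as $\text{diag}(aI,aI)\cdot\text{diag}(I,\bar a^{2}I)$; the first factor is central in $SU(2,2)$ and acts trivially on $\text{Maxw}$, while the second acts on $U(2)$ by left multiplication by $a^{2}$. Since $F^{p,q,r}$ has central character $z\mapsto z^{r}$ on its second $U(2)$-factor, a direct calculation on matrix coefficients shows $\text{diag}(aI,\bar aI)$ acts on $F^{p,q,r}$ by $a^{-2r}$. Hence the $\mathcal{H}^{n}$-isotypic component of $(\text{Maxw}_{2,0}^{+})_{K}$ carries exactly the $K\cap S$-weights $-(2n+2),-(2n+4),-(2n+6),\ldots$, each with multiplicity one.

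To pin down the $\mathfrak{s}$-structure, observe that $\mathfrak{s}^{\pm}\subset\mathfrak{p}^{\pm}$ are the scalar lines in the off-diagonal $2\times 2$ blocks; these are $C$-invariant (hence preserve $C$-isotypic components) and carry $K\cap S$-weights $\pm 2$. The inclusions
\[
\mathfrak{p}^{-}\text{Maxw}_{k+2,k,k+2}\subset\text{Maxw}_{k+3,k+1,k+3},\quad\mathfrak{p}^{+}\text{Maxw}_{k+2,k,k+2}\subset\text{Maxw}_{k+1,k-1,k+1}
\]
established in the proof of Theorem 19 show that $\mathfrak{s}^{\pm}$ shift the index $k$ by $\pm 1$ within each $C$-isotypic summand. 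At the minimum level $k=n-1$ the image of $\mathfrak{s}^{+}$ lies in $\text{Maxw}_{n,n-2,n}$, which contains no $\mathcal{H}^{n}$, so the (one-dimensional) weight space of extremal $K\cap S$-weight $-(2n+2)$ is annihilated by $\mathfrak{s}^{+}$. Consequently the $\mathcal{H}^{n}$-isotypic piece is an irreducible highest-weight $\mathfrak{s}$-module of extremal weight $-(2n+2)$, namely $D_{-(2n+2)}$, since its $K\cap S$-weight list exhausts that of $D_{-(2n+2)}$ with multiplicity one. The case $(\text{Maxw}_{0,2}^{+})_{K}$ has the same $r$-value and same $C$-branching, yielding again $D_{-(2n+2)}$; the cases $(\text{Maxw}_{2,0}^{-})_{K}$ and $(\text{Maxw}_{0,2}^{-})_{K}$ have $r=-(k+2)$, so the $K\cap S$-weights are positive $(2n+2),(2n+4),\ldots$ and the roles of $\mathfrak{s}^{+}$ and $\mathfrak{s}^{-}$ are interchanged, giving $D_{2n+2}$.

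The main obstacle is the bookkeeping of the second paragraph: tracking the factor of $2$ coming from $\text{diag}(aI,\bar aI)\cdot x=a^{2}x$ and reconciling it with the $\tau_{p,q,r}$ convention of Section 4 in order to pin down the $K\cap S$-character on each $F^{p,q,r}$. Once this character is correct, the rest is Clebsch--Gordan together with the standard fact that an $\mathfrak{s}$-module with one-dimensional weight spaces at $\{n,n-2,n-4,\ldots\}$ for $n<0$ and top weight annihilated by $\mathfrak{s}^{+}$ is irreducible and isomorphic to $D_{n}$.
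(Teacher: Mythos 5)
Your proposal is correct, and its bookkeeping agrees with the paper's: the central claim that $\mathrm{diag}(aI,\bar aI)\in S\cap K$ acts on $F^{p,q,r}$ by $a^{-2r}$ is exactly what underlies the paper's character formula, and your Clebsch--Gordan branching of $F^{k+2,k,r}$ under the diagonal $C\cong SO(3)$ is the same multiplicity count. Where you genuinely diverge is in how the $S$-module structure is pinned down. The paper argues globally: since the $S\cap K$-characters on $(\mathrm{Maxw}_{2,0}^{+})_{K}$ are $a^{-4},a^{-6},\dots$ with finite multiplicity, the restriction to $S$ is a direct sum of highest weight modules, and it then suffices to verify a closed-form $CS\cap K$-character identity, which it does by summing a generating function and re-expanding. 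You instead work $C$-type by $C$-type: each $\mathcal{H}^{n}$-multiplicity space has one-dimensional $S\cap K$-weight spaces at $-(2n+2),-(2n+4),\dots$, the extremal vector is $\mathfrak{s}^{+}$-annihilated (your appeal to the $\mathfrak{p}^{\pm}$-shift inclusions from Theorem 19 shows this, though it is automatic since the weight $-2n$ does not occur in that isotypic piece), and then irreducibility of the Verma module with negative highest weight forces the multiplicity space to be exactly $D_{-(2n+2)}$. Your route buys an explicit location of the extremal $K\cap S$-type of each $D_{\mp(2n+2)}$ inside a specific $K$-type and avoids both the rational-function manipulation and any appeal to unitarity; the paper's route is a more compact character identity treating all $C$-types at once. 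Two harmless slips: your phrase that $\mathfrak{s}^{\pm}$ ``shift $k$ by $\pm1$'' has the signs reversed relative to your own (correct) use of $\mathfrak{p}^{+}:k\mapsto k-1$, and $\mathrm{diag}(aI,aI)$ is central in $U(2,2)$ rather than an element of $SU(2,2)$ in general --- what matters, and what you use, is that it acts trivially because the action factors through $PSU(2,2)$.
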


\begin{proof}
By considering the $K$-types we see that each of the modules  has finite $K \cap CS$-multiplicities. We will give details for $M=($Maxw$_{2,0}^{+})_{K}%
$. Noting that $S\cap K$ is
\[
\left\{  \left[
\begin{array}
[c]{cc}%
aI & \\
& a^{-1}I
\end{array}
\right]  |\left\vert a\right\vert =1\right\}
\]
we see that the characters that appear for $S\cap K$ on $M$ are $a^{-4}%
,a^{-6},...$ each with finite multiplicity. Thus, as a representation of $S$,
$M$ is a direct sum of highest weight modules. It is therefore enough to check
that the character of $M$ as a $CS\cap K$ module is correct. We note that in
the formulation of the $K$-types, we can look upon $CS\cap K$ as the image of
the group of all matrices%
\[
\left[
\begin{array}
[c]{cc}%
u(y) & 0\\
0 & u(y)x^{-2}%
\end{array}
\right]
\]
with $u(y)=\left[
\begin{array}
[c]{cc}%
y & 0\\
0 & y^{-1}%
\end{array}
\right]  $ and $|x|$ and $|y|=1$. Set $t(x,y)$ equal to this element. Then if
$Ch(V)$ denotes the character of a $CS\cap K$ module and $\chi_{k}$ denotes
the character of $SU(2)$ on $S^{k}(\mathbb{C}^{2})$,  
we have as a formal sum
\[
Ch(M)=\sum_{k=0}^{\infty}\chi_{k+2}(u(y))\chi_{k}(u(y))x^{-2k-4}%
\]
replacing $x$ with $x^{-1}$ and plugging in $\chi_{k}(u(y))=y^{k}%
(1+y^{-2}+\ldots+y^{-2k})$ and summing we have%
\[
\frac{x^{4}(y^{4}-x^{2}y^{2}+y^{2}+1)}{(1-x^{2})(y^{2}-x^{2})(1-x^{2}y^{2})}.
\]
If we multiply by $1-x^{2}$ and expand in powers of $x$, we see that the series
is%
\[
x^{4}\sum_{k=0}^{\infty}\chi_{2k+2}(u(y))x^{2k}.
\]
This implies the result in this case since the character of $D_{-2k}$ is
\[
\frac{x^{-2k}}{1-x^{-2}}\left(  \text{resp. }\frac{x^{2k}}{1+x^{2}}\right)
\]
if $k>0$ (resp.$k<0$). The other highest weight case is exactly the same. The
cases (\rm{Maxw}$_{2,0}^{-})_{K}$ and (\rm{Maxw}$_{0,2}^{-})_{K}$ are handled in the
same way but with the powers of $x$ inverted.
\end{proof}

\section{Plane wave solutions and (degenerate) \\Whittaker vectors}

Recall that a plane wave solution to Maxwell's are those in the following form
\[
\mathbf{E}=e^{i(z,x)}\mathbf{E}_{o}%
\]
and%
\[
\mathbf{H}=e^{i(z,x)}\mathbf{H}_{o}%
\]
with $\mathbf{E}_{o}$ and $\mathbf{H}_{o}$ constant vectors, $z=(u_{1}%
,u_{2},u_{3},\omega)\in\mathbb{R}^{4}$,
 $\mathbf{u}=(u_{1},u_{2},u_{3})$, $x=(x_{1},x_{2},x_{3},t)\in\mathbb{R}^{4}$,
 and as before, $(z,x)=-\sum u_{i}x_{i}+\omega t$. To satisfy Maxwell's
equations we must have $\mathbf{u\cdot E}_{o}=0,\mathbf{u\cdot H}_{o}=0$ and
\[
\nabla\times\mathbf{E}=\frac{\partial}{\partial t}\mathbf{H,}\nabla
\times\mathbf{H}=-\frac{\partial}{\partial t}\mathbf{E}.
\]
The first implies that $\mathbf{u}\times\mathbf{E}_{o}=-\omega\mathbf{H}_{o}$
and the second $\mathbf{u}\times\mathbf{H}_{o}=\omega\mathbf{E}_{o}$.  Thus, if
the solution is non-constant and $\omega>0$ (resp. $\omega<0$) $\frac
{\mathbf{u}}{\left\Vert \mathbf{u}\right\Vert },\frac{\mathbf{H}_{o}%
}{\left\Vert \mathbf{H}_{o}\right\Vert },\frac{\mathbf{E}_{o}}{\left\Vert
E_{o}\right\Vert }$ (resp. $\frac{\mathbf{u}}{\left\Vert \mathbf{u}\right\Vert
},\frac{\mathbf{H}_{o}}{\left\Vert \mathbf{H}_{o}\right\Vert },\frac
{-\mathbf{E}_{o}}{\left\Vert E_{o}\right\Vert }$) is an orthonormal basis of 
$\mathbb{R}^{3}$ 
obeying the \textquotedblleft right-hand screw law\textquotedblright.
Putting these equations together yields $\omega^{2}=\left\Vert \mathbf{u}%
\right\Vert ^{2}$, that is, $z$ is isotropic in $\mathbb{R}^{1,3}$.
That is on the null light cone. This fits with the fact that if
$\mathbf{E},\mathbf{H}$ form a solution to Maxwell's equations, then their
individual components satisfy the wave equation.

The purpose of this section is to interpret the plane wave solutions in the
context of the four representations we have been studying. In the last few
sections we have been emphasizing the realization of $U(2,2)$ which we have
denoted by $G_{1}$. We now revert to the form $G$ since the embedding of
Minkowski space into $U(2)$ is clearer for that realization. Recall that the
embedding is implemented in two stages;  we map $(x_{1},x_{2},x_{3},x_{4})$ to
\[
\left[
\begin{array}
[c]{cc}%
I & 0\\
Y & I
\end{array}
\right]  ,Y=\left[
\begin{array}
[c]{cc}%
x_{4}+x_{3} & x_{1}+ix_{2}\\
x_{1}-ix_{2} & x_{4}-x_{3}%
\end{array}
\right]  .
\]
We denote the image group by $\overline{N}$ (note $Y^{\ast}=Y$). Then we
consider the image of $I$ under the  action of $G$ on $U(2)$ by linear
fractional transformations. In this context to simplify the form subgroup $S$ we must
apply a Cayley transform. If we set
\[L=\frac{1}{\sqrt{2}}\left[
\begin{array}
[c]{cc}%
I & iI\\
I & -iI
\end{array}
\right],
\]
then the transform is given by
\[
\sigma^{-1}(g)=L^{-1}gL.
\]
Thus, in this incarnation, we take $S$ to be the set of elements of $PSU(2,2)
$ as $G/\rm{Center(G)}$ in the form
\[
\left[
\begin{array}
[c]{cc}%
aI & bI\\
cI & dI
\end{array}
\right]
\]
with $\operatorname{Im}a\overline{b}=0,a\overline{b}-b\overline{c}%
=1,\operatorname{Im}c\overline{d}=0$. This implies that the image of
$S\cap\overline{N}$ is the set of matrices
\[
\left[
\begin{array}
[c]{cc}%
I & 0\\
cI & I
\end{array}
\right]
\]
with $c \in \mathbb{R}$. 
Let $z=(k_{1},k_{2},k_{3},\omega)$ be, as above, an element in the light
cone. Then considering $z$, as giving a linear functional on $\Lie(\overline
{N})$,  this functional restricted to$\left[
\begin{array}
[c]{cc}%
0 & 0\\
xI & 0
\end{array}
\right]  \in \Lie(S\cap\overline{N})$ is given by the coefficient of $q$ in the
quadratic polynomial%
\[
\frac{1}{2}\det\left[
\begin{array}
[c]{cc}%
\omega+k_{3}+qx & k_{1}+ik_{2}\\
k_{1}-ik_{2} & \omega-k_{3}+qx
\end{array}
\right]
\]
which is $\omega x$. We will record this as a lemma since we will need to
apply it later in this section.

\begin{lemma}
If $z=(k_{1},k_{2},k_{3},\omega)$ is in the null light cone and if we consider $z$
as a linear functional on $\Lie(\overline{N})$ (as above), then its value on%
\[
\left[
\begin{array}
[c]{cc}%
0 & 0\\
xI & 0
\end{array}
\right]
\]
is $\omega x$.
\end{lemma}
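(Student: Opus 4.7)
The plan is to unpack the identifications set up in Section~2 and then extract a single coefficient from a $2\times 2$ determinant; the lemma is essentially a record of the computation already indicated in the paragraph preceding its statement.

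First I would make precise how $z \in \mathbb{R}^{1,3}$ is viewed as a linear functional on $\Lie(\overline{N})$. By Section~2, $\mathbb{R}^{1,3}$ is identified with the space $V$ of $2\times 2$ Hermitian matrices via $(x_1,x_2,x_3,x_4)\leftrightarrow \begin{bmatrix} x_4+x_3 & x_1+ix_2 \\ x_1-ix_2 & x_4-x_3 \end{bmatrix}$, and the Lorentzian form on $\mathbb{R}^{1,3}$ corresponds to the determinant on $V$. Moreover $\Lie(\overline{N})$ consists of the strictly lower-triangular block matrices $\begin{bmatrix} 0 & 0 \\ Y & 0 \end{bmatrix}$ with $Y^\ast = Y$, and is therefore naturally identified with $V$ by $\begin{bmatrix} 0 & 0 \\ Y & 0 \end{bmatrix}\mapsto Y$; in particular the element appearing in the lemma corresponds to $xI \in V$.

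Next, under these identifications $z$ acts as the functional $Y\mapsto (Z,Y)$, where $Z$ is the Hermitian matrix corresponding to $z$ and $(\,\cdot\,,\,\cdot\,)$ is the polarization of $\det$ on $V$. Concretely, this polarization equals the coefficient of $q$ in $\tfrac12 \det(Z+qY)$, which is exactly the quadratic expression displayed in the paragraph preceding the lemma. Setting $Y=xI$, I would expand
\[
\det\!\begin{bmatrix} \omega+k_3+qx & k_1+ik_2 \\ k_1-ik_2 & \omega-k_3+qx \end{bmatrix} = \det Z + qx\bigl((\omega+k_3)+(\omega-k_3)\bigr) + q^2 x^2,
\]
so the $q$-coefficient is $2\omega x$, and dividing by $2$ yields $\omega x$, as claimed.

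The calculation is elementary; the only thing that requires care is the bookkeeping of normalizations, namely verifying that the chosen polarization of $\det$ correctly matches the prefactor $\tfrac12$ in the determinant formula so the answer is $\omega x$ and not $2\omega x$. It is worth noting that the isotropy hypothesis $\omega^2 = k_1^2+k_2^2+k_3^2$ plays no role in this particular computation; it enters only when relating this functional to the plane-wave solutions in the surrounding discussion.
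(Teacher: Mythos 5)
Your proposal is correct and follows essentially the same route as the paper, whose ``proof'' is precisely the computation in the paragraph preceding the lemma: the functional is the coefficient of $q$ in $\tfrac12\det(Z+qxI)$, which expands to $\omega x$. Your added check that the $\tfrac12$ normalization matches the polarization of $\det$ (and your remark that isotropy of $z$ is not used) is consistent with the paper's setup.
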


We now relate these plane wave solutions to the representations Maxw$_{\alpha
,b}^{\pm}$ with $(\alpha,\beta)\in\{(2,0),\{0,2)\}$. Let $H_{\alpha,\beta
}^{\pm}$ be the corresponding Hilbert space completions of these smooth
Fr\'{e}chet representations of moderate growth. Then the Maxw$_{\alpha,b}%
^{\pm}$ are the spaces of $C^{\infty}$ vectors. Fix one of these
representations and denote it by $(\pi,H)$. Let $H_{K}$ be the space of
$K$-finite vectors and let $\left\langle v,w\right\rangle _{k}=\left\langle
(I+C_{K})^{kd}v,w\right\rangle $ for $v,w\in H_{K}$, $\left\langle
\ldots,\ldots\right\rangle $ the unitary structure and $d=\dim K$. Let $H^{k}$ be
the Hilbert space completion of $H_{K}$ with respect to $\left\langle
\ldots,\ldots\right\rangle _{k}$. Then $H^{0}=H$ and the maps $T_{k,l}%
:H^{l}\rightarrow H^{k}$ for $k>l\geq0$ that are the identity on $H_{K}$ are
nuclear (see [GV] for the definition). Thus we have $\cap_{k\geq0}%
H^{k}=H^{\infty}$. We set $H^{-k}$ equal to the dual Hilbert space to $H^{k}$.
Thus $(H^{\infty})^{\prime}=\cup_{k\geq0}H^{-k}$. We thus have a rigged
Hilbert space in the sense of [GV], Chapter 1. We now consider $(\pi,H)$ as a
unitary representation of $\overline{N}$. We have (c.f.~[W1], Theorem 14.10.3)
a direct integral decomposition of $(\pi,H)$ as a representation of
$\overline{N}$ given as follows:%
\[
\int_{\text{supp}(\pi)\subset\widehat{\overline{N}}}
\mathbb{C}_{\chi}\otimes W_{\chi}d\tau(\chi)
\]
with supp$(\pi)$ defined as in [W1, Volume 2, p.~337], $\tau$ a Borel measure on
supp$(\pi)$,  and $\chi\rightarrow W_{\chi}$ a Borel measurable Hilbert vector
bundle over supp$(\pi)$. In this case supp$(p)$ is just the support of $\tau$
as a distribution. The theorem of Gelfand--Kostyuchenko (c.f.~[GV], p.~117
Theorem 1$^{\prime}$ and the remarks at the end of Subsection 1.4.4) implies 
that there exists, $k>0$ and a family of nuclear operators $T_{\chi}%
:H^{k}\rightarrow W_{\chi}$ such that if $u\in H^{k}$, then $u(\chi)=T_{\chi
}(u)$ for $\tau$-almost all $\chi$. This implies that for all $\chi$ and all
$u\in H^{k}$,
\[
T_{\chi}(\pi(\overline{n})u)=\chi(\overline{n})T_{\chi}(u).
\]
If $T_{\chi}$ is always $0$ then $H=0$. We see that there is a subset $A$ of
supp$(\pi)$ of full measure with $T_{\chi}\neq0$ for $\chi\in A$. By definition of the $H^{k}$
we see that if $\chi\in A$, then there exists $\lambda\in W_{\chi}^{\prime}$
such that $\lambda\circ T_{\chi}\neq0$. We have for this choice%
\[
\lambda\circ T_{\chi}(\pi(\overline{n})u)=\chi(\overline{n})\lambda\circ
T_{\chi}(u)
\]
for all $\overline{n}\in\overline{N}$ and all $u\in H^{\infty}$. If $\chi
\in\widehat{\overline{N}}$, then set $$Wh_{\chi}(\pi)=\{\mu\in(H^{\infty
})^{\prime}|\mu\circ\pi(\overline{n})=\chi(\overline{n})\mu\}.$$ We have with
this notation

\begin{theorem}
Let $H^{\infty}$ be one of $\rm{Maxw}_{2,0}^{\pm}$ or $\rm{Maxw}_{2,0}^{\pm}$. Then

1. $\dim Wh_{\chi}(\pi)\leq3.$

2. Writing
\[
\chi_{Z}\left(  \left[
\begin{array}
[c]{cc}%
I & 0\\
Y & I
\end{array}
\right]  \right)  =e^{i(Z,Y)}%
\]
with $Z^{\ast}=Z$ (this describes all possible $\chi$). Then $Wh_{\chi_{Z}%
}(\pi)=0$ if $\det(Z)\neq0$ (i.e., the support of $\pi$ is contained in the
null light cone). Furthermore, the support of $\pi$ is the closure of a single
orbit of the action of $GL(2,\mathbb{C})$ on 
$\{Z\in M_{2}(\mathbb{C})|Z^{\ast}=Z\}$ given by $g\cdot Z=gZg^{\ast}$.

3. Both $\pi_{2,0}^{+},\pi_{0,2}^{+}$ have support equal to the set of
all $\chi_{Z}$ with $\det Z=0$ and $\rm{tr}Z\leq0$ (negative light cone) and both
$\pi_{2,0}^{-},\pi_{2,0}^{-}$ have support equal to the set of all $\chi_{Z}$
with $\det Z=0$ and {\rm tr}$Z\geq0$ (positive light cone).
\end{theorem}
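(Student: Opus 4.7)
The strategy is to analyze $Wh_{\chi_{Z}}(\pi)$ via restriction of $\omega \in$ Maxw$^{\infty}$ to the open Minkowski chart $\overline{N}$ and Fourier analysis on $\overline{N}$, with Maxwell's equations providing the algebraic constraints. After pulling back by $F$ from Section 2, $\omega$ becomes a smooth $2$-form on $\mathbb{R}^{1,3}$ satisfying $d\omega=0$ and lying in the $\pm i$ eigenspace of the Minkowski Hodge star. A Whittaker distribution $\mu\in Wh_{\chi_{Z}}(\pi)$ transforms by $e^{i(Z,Y)}$ under $\overline{N}$-translation, so Fourier analysis (on the abelian $\overline{N}$) forces $\mu$ to factor through an evaluation at frequency $Z$ of finitely many scalar components.

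For (1), decompose the pullback as $F^{\ast}\omega=\sum_{\alpha\in\mathcal{B}_{\pm i}}f_{\alpha}\,\alpha$ in the basis of Lemma 5; the Whittaker condition forces $\mu(\omega)=\sum_{\alpha}c_{\alpha}\,\widehat{f_{\alpha}}(Z)$ for constants $c_{\alpha}\in\mathbb{C}$, giving $\dim Wh_{\chi_{Z}}(\pi)\leq \#\mathcal{B}_{\pm i}=3$. For the first half of (2), substitute a formal plane wave $e^{i(Z,x)}F$ with $F\in(\Lambda^{2})_{\pm i}$ into $d\omega=0$: it reduces to the algebraic equation $\tilde{Z}\wedge F=0$, where $\tilde{Z}$ is the $1$-form metrically dual to $Z$. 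A direct check in $\mathcal{B}_{\pm i}$ shows no nonzero $F$ solves this when $\det Z\neq0$, so $Wh_{\chi_{Z}}(\pi)=0$ off the null cone.

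For the orbit claim in (2), observe that the Levi $L=\{\mathrm{diag}(g,(g^{\ast})^{-1}):g\in GL(2,\mathbb{C})\}$ of $P$ normalizes $\overline{N}$ and acts on $\Lie(\overline{N})\cong H$ by $Y\mapsto (g^{\ast})^{-1}Yg^{-1}$; $G$-invariance of $\pi$ forces the support $\mathrm{supp}(\pi)\subset\widehat{\overline{N}}\cong H$ to be a union of $GL(2,\mathbb{C})$-orbits under $g\cdot Z=gZg^{\ast}$. In the null cone there are precisely three orbits: $\{0\}$, the positive semidefinite rank-one matrices, and the negative semidefinite rank-one matrices; irreducibility and infinite dimensionality rule out the possibility that the support is $\{0\}$ alone.

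The remaining task (3) is to single out which of the two nontrivial orbits is the support of each $\pi_{i,j}^{\varepsilon}$. By Theorem 22, $\pi^{+}$ is a sum of anti-holomorphic discrete series $D_{-(2k+2)}$ and $\pi^{-}$ is a sum of holomorphic discrete series $D_{2k+2}$. The standard spectral-support result for $PSU(1,1)$ discrete series states that $D_{n}$ is supported on the negative (resp.\ positive) real line of the unipotent coordinate of a Borel when $n<0$ (resp.\ $n>0$). By Lemma 22 this coordinate on $S\cap\overline{N}$ is $\omega x$ with $\omega=\tfrac{1}{2}\mathrm{tr}(Z)$, so $\pi^{+}$ forces $\mathrm{tr}(Z)\leq0$ and $\pi^{-}$ forces $\mathrm{tr}(Z)\geq0$, pinning down the correct cone. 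The main obstacle will be the sign bookkeeping in this last step --- threading the Cayley transform, the discrete series parameterization, and the time orientation consistently --- together with verifying that the support equals (not merely is contained in) the claimed orbit, which follows from the support being closed, $GL(2,\mathbb{C})$-invariant, and nonempty beyond $\{0\}$.
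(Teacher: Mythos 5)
The decisive gap is in your treatment of parts 1 and 2: you identify a Whittaker functional with ``evaluation at frequency $Z$ of the Fourier transforms of the scalar components'' of $F^{\ast}\omega$, but nothing in the setup licenses this. An element $\mu\in Wh_{\chi_Z}(\pi)$ is a continuous functional on the Fr\'echet space $H^{\infty}\subset\Omega^{2}(U(2))_{\mathbb C}$, i.e.\ continuous for the $C^{\infty}$ topology of the \emph{compactification}; after restriction to the chart $\overline N\cong\mathbb R^{1,3}$ the component functions $f_{\alpha}$ are smooth but not Schwartz, $\mu$ is defined only on the closed subspace of solutions, and a Hahn--Banach extension of $\mu$ to all $2$-forms need be neither $\overline N$-equivariant nor tempered in chart coordinates, so you cannot conclude that its Fourier transform is carried by $\delta_{Z}$ acting componentwise. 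Consequently neither the bound $\dim Wh_{\chi_Z}(\pi)\le 3$ nor the vanishing for $\det Z\neq 0$ (which you derive by substituting a plane wave $e^{i(Z,x)}F$, i.e.\ by already assuming $\mu$ is integration against a plane wave) is established. The paper proceeds algebraically instead: since $\mathfrak p^{+}$ annihilates the lowest $K$-type (Lemma 18), there is a surjection $N(F^{2,0,2})=U(\mathfrak g)\otimes_{U(\mathfrak q)}F^{2,0,2}\to(\mathrm{Maxw}_{2,0}^{+})_{K}$; because $\mathfrak q=\mathfrak k\oplus\mathfrak p^{+}$ and $\Lie(L\overline N)_{\mathbb C}$ are opposite parabolics, $N(F^{2,0,2})$ is a free $U(\Lie(\overline N)_{\mathbb C})$-module on $3$ generators, giving part 1; and part 2 follows from a Gelfand--Kirillov dimension count: the module is $\mathfrak p^{-}$-graded with $k$-th level $F^{k+2,k,k+2}$ of dimension $(k+1)(k+3)$, hence has GK-dimension $3$, whereas a nonzero Whittaker functional for $Z$ with $\det Z\neq 0$ (whose $L$-orbit is open, of dimension $4$) would force GK-dimension at least $4$. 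To salvage your Fourier route you would have to prove first that these functionals are tempered $2$-currents on the chart admitting equivariant realizations, which is exactly the work the paper's Verma-module and GK-dimension arguments replace.

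Your part 3 and the orbit bookkeeping do essentially coincide with the paper's ``direct'' argument: the support is a union of orbits for $g\cdot Z=gZg^{\ast}$ (the same conjugation computation with the Levi $L$ appears in the paper), there are exactly three orbits on the null cone, and the sign of $\mathrm{tr}\,Z$ is pinned down by restricting the character to $\Lie(S\cap\overline N)$, where it evaluates to $\omega x$ (this is Lemma 25, and the dual-pair decomposition you use is Theorem 24, not Theorem 22). But the ``standard spectral-support result for $PSU(1,1)$ discrete series'' that you invoke is precisely the nontrivial external input; the paper cites it as Theorem 2 (and Theorem 17) of [W2] together with [K], and also needs the routine but necessary step that a nonzero $\overline N$-equivariant functional on $H^{\infty}$ induces a nonzero $(S\cap\overline N)$-equivariant functional on the smooth vectors of some highest-weight $S$-summand. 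With that citation and the continuity bookkeeping supplied, your part 3 is sound; parts 1 and 2 as written are not.
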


\textbf{Note: }The proof of this result will use earlier work of the authors
[K],[W2],[W3].

\bigskip

\begin{proof}
We will do all the details for Maxw$_{2,0}^{+}$. The case Maxw$_{0,2}^{+}$ is
essentially the same. The cases Maxw$_{2,0}^{-}$ and Maxw$_{0,2}^{-}$ are done
with an interchange of $\mathfrak{p}^{\pm}$ with $\mathfrak{p}^{\mp}$. Set
$\mathfrak{q=k}\oplus\mathfrak{p}^{+}$ where $\mathfrak{k}=\Lie(K)_{\mathbb{C}}$. 
We consider $F^{2,0,2}$ to be the $\mathfrak{q}$-module with
$\mathfrak{k}$ acting through the differential of the $K$ action and
$\mathfrak{p}^{+}$ acting by $0$. Set
\[
N(F^{2,0,2})=U(\mathfrak{g})\otimes_{U(\mathfrak{q)}}F^{2,0,2}.
\]
Then Lemma 18 implies that we have a surjective $(\mathfrak{g},K)$-module
homomorphism
\[
p:N(F^{2,0,2})\rightarrow{\rm{Maxw}}_{2,0}^{+})_{K}.
\]
Thus $p^{\ast}Wh_{\chi}(\pi_{2,0}^{+})$ is contained in
\[
\mathcal{W}_{\chi}=\{\lambda\in {\Hom}_{\mathbb{C}}(N(F^{2,0,2}),
\mathbb{C})|\lambda(Yv)=d\chi(Y)\lambda(v),Y\in \Lie(\overline{N})\}.
\]
Since $p^{\ast}:Wh_{\chi}(\pi_{2,0}^{+})\rightarrow\mathcal{W}_{\chi}$ is
injective ($({\rm{Maxw}}_{2,0}^{+})_{K}$ is dense in ${\rm{Maxw}}_{2,0}^{+}$) the dimension
estimate will follow if we prove that $\dim\mathcal{W}_{\chi}=3$. For this we
use the observation in [W3] that if
\[
L=\left\{  \left[
\begin{array}
[c]{cc}%
g & 0\\
0 & (g^{-1})^{\ast}%
\end{array}
\right]  |g\in GL(2,\mathbb{C})\right\},
\]
then $\Lie(L\overline{N})_{
\mathbb{C}}$ 
and $\mathfrak{q}$ are opposite parabolic subalgebras (i.e., $\Lie(L\overline
{N})_{\mathbb{C}}\cap\mathfrak{q}$ is a Levi factor of both parabolic subalgebras). This
implies that $N(F^{2,0,2})$ is a free $\Lie(\overline{N})_{\mathbb{C}}$ 
module on $\dim F^{2,0,2}=3$ generators. This clearly implies that
$\dim\mathcal{W}_{\chi}=3$.

We now note  if $m=\left[
\begin{array}
[c]{cc}%
g & 0\\
0 & (g^{-1})^{\ast}%
\end{array}
\right]  $ then $m\left[
\begin{array}
[c]{cc}%
I & 0\\
Y & I
\end{array}
\right]  m^{-1}=\left[
\begin{array}
[c]{cc}%
I & 0\\
g\cdot Y & I
\end{array}
\right]  $. 
Set $\overline{n}=\overline{n}(Y)=\left[
\begin{array}
[c]{cc}%
I & 0\\
Y & I
\end{array}
\right]  $. If $\lambda\in Wh_{\chi_{Z}}(\pi_{2,0}^{+})$, then
\[
\lambda\circ\pi_{2,0}^{+}(m)(\pi_{2,0}^{+}(\overline{n})u)=\lambda(\pi
_{2,0}^{+}(m)\pi_{2,0}^{+}(\overline{n})u)
\]%
\[
=\lambda(\pi_{2,0}^{+}(m\overline{n}m^{-1})\pi_{2,0}^{+}(m)u)=\chi
_{Z}(m\overline{n}m^{-1})\lambda\circ\pi_{2,0}^{+}(m)(u)
\]%
\[
=\chi_{g^{-1}\cdot Z}(\overline{n})\lambda\circ\pi_{2,0}^{+}(m)(u).
\]
Thus $\lambda\circ\pi_{2,0}^{+}(m)\in Wh_{\chi_{g^{-1}Z}}(\pi_{2,0}^{+}) $.
This implies that supp($\pi_{2,0}^{+}$) is a union of orbits.

We now observe  using results in [W3] that we can show that if $Wh_{\chi_{Z}%
}(\pi_{2,0}^{+})\neq0$ with $\det Z\neq0$, then the $GK$-dimension of
$({\rm Maxw}_{2,0}^{+})_{K}$ is at least $\dim\overline{N}$=4. Indeed, this
condition implies that the dual module to ${{\rm{Maxw}}}_{2,0}^{+})_{K}$ contains an
irreducible submodule of $GK$-dimension 4. However as a $\mathfrak{p}^{-}%
$-module $({\rm{Maxw}}_{2,0}^{+})_{K}$ is graded with the $k$-th level of the grad
being isomorphic with the $K$-module $F^{k+2,k,k+2}$ whose dimension is
$(k+3)(k+1)$. Thus the $GK$-dimension of $({\rm{Maxw}}_{2,0}^{+})_{K}$ is $3.$ Thus we
must have $\det Z=0$. To complete the proof, it is enough to prove 3. Since
there are three orbits of $L$ in set of all $Z$ with $\det Z=0$:%
\[
\mathcal{O}^{+}=\{Z|\det Z=0,\text{tr}Z>0\},\mathcal{O}^{+}=\{Z|\det
Z=0,\text{tr}Z<0\},\{0\}.
\]
Thus we must prove that if $Wh_{\chi_{Z}}(\pi_{2,0}^{+})\neq0$ and $Z\neq0$
then tr$Z<0$. This follows from Theorem 17 (p.~306) in [W2] (in this reference
the roles of $\chi$ and $\chi^{-1}$ are reversed). We can also prove the
result directly using Lemma 25. Let $\lambda\neq0$ be an element of
$Wh_{\chi_{Z}}(\pi_{2,0}^{+})$ with $Z\neq0$. Then set $\eta=d\chi_{Z}$. If
\[
Z=\left[
\begin{array}
[c]{cc}%
\omega+k_{3} & k_{1}+ik_{2}\\
k_{1}-ik_{2} & \omega-k_{3}x
\end{array}
\right],
\]
then tr$Z=2\omega$. In Lemma 25 we saw that
\[
\eta\left[
\begin{array}
[c]{cc}%
0 & 0\\
xI & 0
\end{array}
\right]  =\omega x.
\]
This if $S$ is as in the previous section and $\nu=\chi_{|N\cap S}$ then in
the notation of [W2] Theorem 2, $r_{\nu}=\omega$. Thus that theorem (also see
[K]) implies that (taking into account the reversal of signs mentioned above)
$r_{\nu}<0$. This completes the proof.
\end{proof}

\begin{remark}\emph{
The above results imply that in the steady state solutions to Maxwell's
equations the plane wave solutions should be looked upon as 2-currents. That
is, using $\gamma$, they are distributions on $\rm{Maxw}$.}
\end{remark}

\section{Conclusion}

In this paper we have shown (Theorem 19) that the solutions to Maxwell's
equations that extend to the conformal compactification of Minkowski space
break up into four irreducible unitary representations: $2$ positive energy and
$2$ negative energy (in the sense of lowest weight or highest weight
respectively) as a representation of the conformal group of the wave equation.
The support of each of the representations is either the forward or backward 
null light cone. The ones with positive energy (Theorem 26) yield only positive
frequencies and thus according to Planck's law have positive energy in the
sense of field theory. We observe that solutions that extend to the total
compactification (i.e., steady state) can have their frequency spectrum
constrained to a narrow band yielding background radiation that indicates a
temperature of $2.7$ degrees Kelvin (or any other temperature for that
matter). This says that although the actual steady state models to the
universe that fit the astronomical observations are complicated this work
indicates that there can be a background radiation that fits the measurements
that is not the outgrowth of an initial very high temperature source. We would
also like to point out that the big bang models for the universe have had
difficulty fitting the observations also, leading to theories involving
inflation and the return of the cosmological constant. An interesting
alternative which is not unrelated to this paper can be found in [P] where the
conformal structure is emphasized and time does almost cycle. There is also
the chronometric theory of [S].

We have also shown that these representations fit in larger contexts. However,
although many of the results in this paper generalize to $SO(n,2)$, the
beautiful geometric structures that appear in the case $n=4$ do not.

\bigskip

\begin{center}
{\Large References}
\end{center}

\bigskip

\noindent[ BW] A. Borel and N. Wallach, \emph{Continuous cohomology, discrete
subgroups, and representations of reductive groups,} Second edition.
Mathematical Surveys and Monographs, 67. American Mathematical Society,
Providence, RI, 2000.

\smallskip

\noindent[ EW] T. Enright and N. Wallach, Embeddings of unitary highest
weight representations and generalized Dirac operators, \emph{Math. Ann.}~{\bf307} (1997), 627--646.

\smallskip

\noindent[ GV] I. M. Gelfand and N. Y.Vilenkin,\emph{ Generalized Functions Vol. 4: Applications of Harmonic Analysis}, Academic Press, 1964.

\smallskip

\noindent[HBN] F. Hoyle, G. Burbidge, J.V. Narlikar, A quasi-steady
state cosmological model with creation of matter, \emph{The Astrophysical Journal}
{\bf410}(1993), 437--457.

\smallskip

\noindent[ GW] B. Gross and N. Wallach, On quaternionic discrete series
representations, and their continuations,\emph{ J. Reine Angew. Math.}, {\bf 481}(1996), 73--123.

\smallskip

\noindent[ HSS] M. Hunziker, M. Sepanski and R. Stanke, Conformal
symmetries of the wave operator, to appear.

\smallskip

\noindent[ K] B. Kostant, On Laguerre polynomials, Bessel functions,
Hankel transform and a series in the unitary dual of the simply-connected
covering group of $SL(2,\mathbb{R})$,  \emph{Represent. Theory} {\bf 4} (2000), 181--224.

\smallskip

\noindent[ P] Roger Penrose, \emph{Cycles of Time}, Alfred Knopf, New York, 2011.

\smallskip

\noindent[ S] Irving Segal, Radiation in the Einstein universe and the
cosmic background, \emph{Physical Review D} {\bf 28}(1983), 2393--2402.

\smallskip

\noindent[ VZ] D. Vogan and G. Zuckerman, Unitary representations and
continuous cohomology, \emph{Comp/ Math.} {\bf53} (1984), 51--90.

\smallskip

\noindent[ W1] Nolan R. Wallach,\emph{ Real reductive groups I,II}, Academic
Press, New York, 1988,1992.

\smallskip

\noindent[ W2] N. Wallach, Generalized Whittaker vectors for holomorphic
and quaternionic representations, \emph{Comment. Math. Helv.} {\bf78} (2003), 266--307.

\smallskip

\noindent[ W3] N.R. Wallach, Lie algebra cohomology and holomorphic
continuation of generalized Jacquet integrals, \emph{Advance Studies in Pure Math} {\bf14} (1988),123--151.

\end{document}